\newcommand{\stitle}[1]{\vspace{0.5em}\noindent\textbf{#1}}
\newcommand{\squishlist}{
   \begin{list}{$\bullet$}
    { \setlength{\itemsep}{0pt}
      \setlength{\parsep}{2pt}
      \setlength{\topsep}{2pt}
      \setlength{\partopsep}{0pt}
    }
}
\newcommand{\squishend}{\end{list}}
\newcommand{\mA}{\mathcal{A}}
\newcommand{\mR}{\mathcal{R}}
\newcommand{\mC}{\mathcal{C}}
\newcommand{\mB}{\mathcal{B}}
\newcommand{\mO}{\mathcal{O}}
\newcommand{\mH}{\mathcal{H}}
\newcommand{\mV}{\mathcal{V}}
\newcommand{\mE}{\mathcal{E}}
\newcommand{\mT}{\mathcal{T}}
\newcommand{\mZ}{\mathcal{Z}}
\newcommand{\attr}{\mathsf{attr}}
\newcommand{\IN}{\mathrm{IN}}
\newcommand{\OUT}{\mathrm{OUT}}
\newcommand{\AGM}{\mathsf{AGM}}
\newcommand{\DBP}{\mathsf{DBP}}
\newcommand{\degree}{\mathsf{deg}}
\newcommand{\MW}{\mathsf{MW}}
\newcommand{\MO}{\mathsf{MO}}
\theoremstyle{definition}
\newtheorem{proposition}[theorem]{Proposition}
\newtheorem*{theorem*}{Theorem}
\newtheorem*{lemma*}{Lemma}
\newtheorem*{proposition*}{Proposition}
\newcounter{prob}
\newtheorem{problem}[prob]{Problem}
\newtheorem{lp}{Linear Program}
\title{It's all a matter of degree: Using degree information to optimize multiway joins}
\author[1]{Manas R. Joglekar}
\author[2]{Christopher M. R\'e}
\affil[1]{Department of Computer Science, Stanford University\\
  450 Serra Mall, Stanford, CA, USA\\
  \texttt{\{manasrj, chrismre\}@stanford.edu}}
\authorrunning{M.\,R. Joglekar and C.\,M. R\'e} %mandatory. First: Use abbreviated first/middle names. Second (only in severe cases): Use first author plus 'et. al.'
\subjclass{F.2 ANALYSIS OF ALGORITHMS AND PROBLEM COMPLEXITY}% mandatory: Please choose ACM 1998 classifications from http://www.acm.org/about/class/ccs98-html . E.g., cite as "F.1.1 Models of Computation". 
\keywords{Joins, Degree, MapReduce}% mandatory: Please provide 1-5 keywords
\begin{document}

\maketitle

\begin{abstract}
We optimize multiway equijoins on relational tables using degree
information. We give a new bound that uses degree information to more
tightly bound the maximum output size of a query. On real data, our
bound on the number of triangles in a social network can be up to $95$
times tighter than existing worst case bounds. We show that using only
a constant amount of degree information, we are able to obtain join
algorithms with a running time that has a smaller exponent than
existing algorithms--{\em for any database instance}. We also show
that this degree information can be obtained in nearly linear time,
which yields asymptotically faster algorithms in the serial setting
and lower communication algorithms in the MapReduce setting.

In the serial setting, the data complexity of join processing can be
expressed as a function $O(\IN^x + \OUT)$ in terms of
input size $\IN$ and output size $\OUT$ in which $x$
depends on the query. An upper bound for $x$ is given by fractional
hypertreewidth. We are interested in situations in which we can get
algorithms for which $x$ is strictly smaller than the fractional
hypertreewidth. We say that a join can be processed in subquadratic
time if $x < 2$. Building on the AYZ algorithm for processing cycle
joins in quadratic time, for a restricted class of joins which we call
$1$-series-parallel graphs, we obtain a complete decision procedure
for identifying subquadratic solvability (subject to the $3$-SUM
problem requiring quadratic time). Our $3$-SUM based quadratic lower
bound is tight, making it the only known tight bound for joins that
does not require any assumption about the matrix multiplication
exponent $\omega$. We also give a MapReduce algorithm that meets our
improved communication bound and handles essentially optimal
parallelism.
\end{abstract}

\section{Introduction}
We study query evaluation for natural join queries. Traditional
database systems process joins in a pairwise fashion (two tables
at a time), but recently a new breed of multiway join algorithms have
been developed that satisfy stronger runtime guarantees. In the
sequential setting, worst-case-optimal sequential algorithms such as
NPRR~\cite{Ngo:2012:WOJ:2213556.2213565,ngo:survey}
or LFTJ~\cite{DBLP:journals/corr/abs-1210-0481} process the join in runtime that is upper bounded
by the largest possible output size, a stronger guarantee than what
traditional optimizers provide. In MapReduce settings (described in Appendix~\ref{sec:mapreduce}), the Shares algorithm
~\cite{afrati:mrjoins,koutris} (described in Appendix~\ref{sec:shares}) processes multiway joins with optimal communication complexity on skew free data. However, traditional
database systems have developed sophisticated techniques to improve
query performance. One popular technique used by commercial database
systems is to collect ``statistics'': auxiliary information about
data, such as relation sizes, histograms, and counts of distinct different
attribute values. Using this information helps the system
better estimate the size of a join's output and the runtimes of different query plans, and make better choices
of plans. Motivated by the use of statistics in query processing, we
consider how statistics can improve the new breed of multiway join
algorithms in sequential and parallel settings. 

We consider the first natural choice for such statistics about the
data: the degree. The degree of a value in a table is the number of
rows in which that value occurs in that table. We describe a simple preprocessing technique to facilitate 
the use of degree information, and demonstrate its value through three applications: 
i) An improved output size bound ii) An improved sequential join algorithm iii) An improved MapReduce join algorithm. 
Each of these applications has an improved exponent relative to their corresponding state-of-the-art versions~\cite{atserias:agm,Ngo:2012:WOJ:2213556.2213565,DBLP:journals/corr/abs-1210-0481,beame:skew}.

Our key technique is what we call {\em degree-uniformization}. Assume
for the moment that we know the degree of each value in each relation, we
then partition each relation by degree of each of its attributes. In particular, we
assign each degree to a bucket using a parameter
$L$: we create one bucket for degrees in $[1, L)$, one for degrees in
$[L, L^2)$, and so on. We then place each tuple in every relation into a partition based on
the degree buckets for each of its attribute values. The join problem then naturally splits into 
smaller join problems; each smaller problem consisting of a join using one partition from each
relation. Let $\IN$ denote the input
size, if we set $L=\IN^{c}$ for some constant $c$, say $\frac{1}{4}$, the number of
smaller joins we process will be exponential in the number of relations--but
constant with respect to the data size $\IN$. Intuitively, the benefit of
joining partitions separately is that each partition will have more
information about the input and will have reduced skew. We show
that by setting $L$ appropriately this scheme allows us to get tighter
AGM-like bounds. 

Now we consider a concrete example. Suppose we have a $d$-regular graph with $N$ edges; the number of triangles in the graph is bounded by $\min(Nd, \frac{N^2}{d})$ by our degree-based bound and by $N^{3/2}$ by the AGM bound. In the worst case, $d = \sqrt{N}$ and our bound matches the AGM bound. But for other degrees, we do much better; 
better even than simply ``summing'' the AGM bounds over each combination of partitions. 
Table~\ref{table:dbp-vs-agm} compares our bound ($\MO$) with the AGM bound for the triangle
join on social networks from the SNAP datasets~\cite{snapnets}. `M' in the table stands for millions. The last column shows the ratio of the AGM bound to our bound; our bound is tighter by a factor of $11x$ to $95x$.
We could not compare the bounds on the Facebook network, but if the number of friends per user is $\leq 5000$, our bound is
at least $450x$ tighter than the AGM bound.

\begin{wraptable}{l}{83mm}
    \begin{tabular}{| l | c | c | c |}
      \hline
      {\bf Network} & {\bf $\MO$ Bound} & {\bf $\AGM$ Bound} & $\frac{\AGM}{\MO}$ \\ \hline
      Twitter & $225M$ & $3764M$ & $17$ \\ \hline
      Epinions & $33M$ & $362M$ & $11$ \\ \hline
      LiveJournal & $6128M$ & $573062M$ & $95$ \\ \hline
    \end{tabular}
\caption{Triangle bounds on various social networks} \label{table:dbp-vs-agm}
\vspace{-20pt}
\end{wraptable}

We further use degree uniformization as a tool to develop
algorithms that satisfy stronger runtime and communication guarantees. 
Degree uniformization allows us to get runtimes with a better exponent than existing algorithms,
while requiring only linear time preprocessing on the data. 
We demonstrate our idea in both the serial and parallel (MapReduce)
setting, and we now describe each in turn.

\stitle{Serial Join Algorithms:}
We use our degree-uniformization to derive new cases in which one can
obtain subquadratic algorithms for join processing. More precisely,
let $\IN$ denote the size of the input, and $\OUT$
denote the size of the output. Then the runtime of an algorithm on a query
$Q$ can be written as $O(\IN^{x} + \OUT)$ for some $x$. 
Note that $x \geq 1$ for all algorithms and queries in this model as we must read
the input to answer the query. If the query is $\alpha$-acyclic, Yannakakis'
algorithm~\cite{Yannakakis81} achieves $x=1$. If the query has 
fractional hypertree width (fhw), a recent generalization of tree
width~\cite{gottlob:ght}, equal to $2$, then we can achieve $x = 2$ using a
combination of algorithms like NPRR and LFTJ with Yannakakis'
algorithm. In this work, we focus on cases for
which $x < 2$, which we call {\it subquadratic
  algorithms}. Subquadratic algorithms are interesting creatures in
their own right, but they may provide tools to attack the common case
in join processing in which $\OUT$ is smaller than
$\IN$.

Our work builds on the classical AYZ
algorithm~\cite{Alon:1994:FCG:647904.739463}, which derives
subquadratic algorithms for cycles using degree information. This is a
better result than the one achieved by the fhw result since the fhw
value of length $\geq 4$ cycles is already $= 2$. This result is
specific to cycles, raising the question: {\it ``Which joins are
  solvable in subquadratic time?''} Technically, the AYZ algorithm
makes use of properties of cycles in their result and of
``heavy and light'' nodes (high degree and low degree,
respectively). We show that degree-uniformization is a generalization
of this method, and  that it allows us to derive subquadratic algorithms for a
larger family of joins. We devise a procedure to upper bound the
processing time of a join, and an algorithm to match this upper
bound. Our procedure improves the runtime exponent $x$ relative to existing work, for a large family of joins.
Moreover, for a class of graphs that we call
$1$-series-parallel graphs,\footnote{A $1$-series-parallel graph
  consists of a source vertex $s$, a target vertex $t$, and a set of
  paths of any length from $s$ to $t$, which do not share any nodes
  other than $s$ and $t$.} we completely resolve the subquadratic
question in the following sense: For each $1$-series-parallel graph,
we can either solve it in subquadratic time, or we show that it cannot
be solved subquadratically unless the $3$-SUM problem~\cite{BDP08} (see Appendix~\ref{sec:3-SUM}) can
be solved in subquadratic time. Note that $1$-series-parallel graphs have fhw equal to $2$. Hence, they can all be solved in
quadratic time using existing algorithms; making our $3$-SUM based lower bound tight. There is a known $3$-SUM based lower bound of $N^{\frac{4}{3}}$ on triangle join processing, which only has a matching upper bound under the assumption that
the matrix multiplication exponent $\omega = 2$. In contrast, our quadratic lower bound can be matched by existing algorithms 
without any assumptions on $\omega$. To our knowledge, this makes it
the only known tight bound on join processing time for small output sizes. 

We also recover our sequential join results within the well-known GHD framework~\cite{gottlob:ght}. We do this using a novel notion of width, which we call $m$-width, that is no larger than fhw, and sometimes smaller than submodular width~\cite{Marx:2013:THP:2555516.2535926} (see Appendix~\ref{sec:widths-comparison}). While we resolve the subquadratic problem on $1$-series-parallel graphs, the general subquadratic problem remains open. We show that known notions of widths, such as submodular width and $m$-width do not fully characterize subquadratically solvable joins (see Appendix~\ref{sec:submodular-width-lower-bound}).

\stitle{Joins on MapReduce:}
Degree information can also be used to improve the efficiency of joins
on MapReduce. Previous work by Beame et al.~\cite{beame:skew} uses
knowledge of heavy hitters (values with high degree) to improve parallel
join processing on skewed data. It allows a limited range of
parallelism (number of processors $p \leq \sqrt{\IN}$), but
subject to that achieves optimal communication for $1$-round MapReduce
algorithms. We use degree information to allow all levels of
parallelism ($p \geq 1$) while processing the join. We also obtain an improved
degree-based upper bound on output size that can be significantly
better than the AGM bound even on simple queries. Our improved
parallel algorithm takes three rounds of MapReduce, matches our
improved bound, and out-performs the optimal $1$-round algorithm
in several cases. As an example, our improved bound lets us
correctly upper bound the output of a sparse triangle join (where each
value has degree $O(1)$) by $\IN$ instead of
$\IN^{\frac{3}{2}}$ as suggested by the AGM bound. Moreover,
we can process the join at maximum levels of parallelism (with each
processor handling only $O(1)$ tuples) at a total communication cost
of $O(\IN)$; in contrast to previous work which requires $\theta(\IN^{\frac{3}{2}})$
communication. Furthermore, previous work~\cite{beame:skew} uses edge
packings to bound the communication cost of processing a join. Edge
packings have the paradoxical property that adding information on the
size of subrelations by adding the subrelations into the join can
make the communication cost larger. As an example suppose a join has a relation $R$, with an attribute $A$ in its schema. Adding $\pi_{A}(R)$ to the 
set of relations to be joined does not change the join output. However, adding a weight term for subrelation $\pi_{A}(R)$ in the edge packing linear program increases its communication cost bound. In contrast, if we add $\pi_{A}(R)$ into the join, our degree based bound does not increase, and will in fact decrease if $|\pi_{A}(R)|$ is small enough.

\stitle{Computing Degree Information:}
In some cases, degree information is not available beforehand or is
out of date. In such a case, we show a simple way to compute the degrees of all
values in time linear in the input size. Moreover, the degree
computation procedure can be fully parallelized in
MapReduce. 
Even after including the
complexity of computing degrees, our algorithms outperform state
of the art join algorithms.

Our paper is structured as follows:
\squishlist
\item In Section~\ref{sec:related-work}, we describe related work. 
\item In Section~\ref{sec:degree-uniformization-counting}, we 
  describe a process called {\em degree-uniformization},
  which mitigates skew. We show the $\MO$ bound on join output size
  that strengthens the exponent in the AGM bound, and describe a method to compute the degrees of all attributes in all relations. 
  
\item In Section~\ref{sec:sequential-join}, we present DARTS, our sequential algorithm that achieves tighter runtime exponents than state-of-the-art. We use DARTs to process several joins in subquadratic time. Then we establish a quadratic runtime lower bound
  for a certain class of queries modulo the $3$-SUM problem. Finally we recover the results of DARTS within the familiar GHD framework, using a novel notion of width ($m$-width) that is tighter than fhw.
\item In Section~\ref{sec:nprr-parallelization-improved}, we present another bound with a tighter exponent than AGM (the DBP bound), and a tunable parallel algorithm whose communication cost at maximum parallelism equals the input size plus the DBP bound. The algorithm's guarantees work on all inputs independent of skew.
\squishend

\section{Related Work}\label{sec:related-work}
We divide related work into four broad categories:

\stitle{New join algorithms and implementation:} The AGM bound~\cite{atserias:agm} is tight on the output size of a multiway join in terms of the query structure and sizes of relations in the query. Several existing join algorithms, such as NPRR~\cite{Ngo:2012:WOJ:2213556.2213565}, LFTJ~\cite{DBLP:journals/corr/abs-1210-0481}, and Generic Join~\cite{ngo:survey}, have worst case runtime equal to this bound. However, there exist instances of relations where the output size is significantly smaller than the worst-case output size (given by the AGM bound), and the above algorithms can have a higher cost than the output size. We demonstrate a bound on output size that has a tighter exponent than the AGM bound by taking into account information on degrees of values, and match it with a parallelizable algorithm.

On $\alpha$-acyclic queries, Yannakakis' algorithm~\cite{Yannakakis81} is instance optimal up to a constant multiplicative factor. That is, its cost is $O(\IN + \OUT)$ where $\IN$ is the input size. For cyclic queries, we can combine Yannakakis' algorithm with the worst-case optimal algorithms like NPRR to get a better performance than that of NPRR alone. This is done using Generalized Hypertree decompositions (GHDS)~\cite{gottlob:ght,chekuri:conjunctive} of the query to answer the query in time $O(\IN^{\textsf{fhw}} + \OUT)$ where fhw is a measure of cyclicity of the query. A query is $\alpha$-acyclic if and only if its fhw is one.
Our work allows us to obtain a tighter runtime exponent than fhw by dealing with values of different degrees separately.

\stitle{Parallel join algorithms:} The Shares~\cite{afrati:mrjoins} algorithm is the optimal one round algorithm for skew free databases, matching the lower bound of Beame et al.~\cite{beame:communication}. But its communication cost can be much worse than optimal when skew is present. Beame's work~\cite{beame:skew} deals with skew and is optimal among $1$-round algorithms when skew is present. The GYM~\cite{DBLP:journals/corr/AfratiJRSU14} algorithm shows that allowing $\log(n)$ rounds of MapReduce instead of just one round can significantly reduce cost. Allowing $n$ rounds can reduce it even further. 
Our work shows that merely going from one to three rounds can by itself significantly improve on existing $1$-round algorithms. Our parallel algorithm can be incorporated into Step $1$ of GYM as well, thereby reducing its communication cost.

\stitle{Using Database Statistics:} The cycle detection algorithm by Alon, Yuster and Zwick~\cite{Alon:1994:FCG:647904.739463} can improve on the fhw bound by using degree information in a sequential setting. Specifically, the fhw of a cycle is two but the AYZ algorithm~\cite{Alon:1994:FCG:647904.739463} can process a cycle join in time $O(\IN^{2-\epsilon} + \OUT)$ where $\epsilon > 0$ is a function of the cycle length. We generalize this, obtaining subquadratic runtime for a larger family of graphs, and develop a general procedure for upper bounding the cost of a join by dealing with different degree values separately.

Beame et al.'s work~\cite{beame:skew} also uses degree information for parallel join processing. Specifically, it assumes that all heavy hitters (values with high degree) and their degrees are known beforehand, and processes them separately to get optimal $1$-round results.
Their work uses edge packings to bound the cost of their algorithm. Edge packings have the counterintuitive property that adding more constraints, or more information on subrelation sizes, can worsen the edge packing cost. This suggests that edge packings alone do not provide the right framework for taking degree information into account. Our work remedies this, and the performance of our algorithm improves when more constraints are added. In addition, Beame et al.~\cite{beame:skew} assume that $M > p^2$ where $M$ is relation size and $p$ is the number of processors. Thus, their algorithm cannot be maximally parallelized. In contrast, our algorithm can work at all levels of parallelism, ranging from one in which each processor gets only $O(1)$ tuples to one in which a single processor does all the processing. 

\stitle{Degree Uniformization:} The partitioning technique of Alon et al.~\cite{Alon:2007:PMS:1238706.1238720} is similar to our {\em degree-uniformization} technique, but has stronger guarantees at a higher cost. It splits a relation into `parts' where the maximum degree of any attribute set $A$ in each part $P$ is within a constant factor of the average degree of $A$ in $P$. In contrast, degree-uniformization lets us upper bound the maximum degree of $A$ in $P$ in absolute terms, but not relative to the average degree of $A$ in $P$. 

Marx's work~\cite{Marx:2013:THP:2555516.2535926} uses a stronger partitioning technique to fully characterize the fixed-parameter tractability of joins in terms of the {\em submodular width} of their hypergraphs. Marx achieves degree-uniformity within all small projections of the output, while we only achieve uniform degrees within relations. Marx's preprocessing is expensive; the technique as written in Section 4 of his paper~\cite{Marx:2013:THP:2555516.2535926} takes time $\Omega(\IN^{2c})$ where $c$ is the submodular width of the join hypergraph. This preprocessing is potentially more expensive than the join processing itself. Our algorithms run in time $O(\IN^{\MW})$ with $\MW < c$ for several joins. Marx did not attempt to minimize this exponent, as his application was concerned with fixed parameter tractability. We were unable to find an easy way to achieve $O(\IN^{c})$ runtime for Marx's technique.

\section{Degree Uniformization}\label{sec:degree-uniformization-counting}
We describe our algorithms for degree-uniformization and counting, as well as
our improved output size bound. Section~\ref{sec:notation} introduces our notation. 
Section~\ref{sec:degree-uniformization}
gives a high-level overview of our join algorithms. Then, we describe
the degree-uniformization which is a key step in our algorithms. In
Section~\ref{sec:mo-bound}, we describe the $\MO$ bound, an
upper bound on join output size that has a tighter exponent than the AGM bound. We
provide realistic examples in which the $\MO$ bound is much
tighter than the AGM bound. Finally, in
Section~\ref{sec:degree-computation} we describe a linear time algorithm for computing degrees.

\subsection{Preliminaries and Notation}\label{sec:notation} 
Throughout the paper we consider a multiway join. Let $\mR$ be the set of relations in the join and $\mA$ be the set of all attributes in those relations' schemas. For any relation $R$, we let $\attr(R)$ denote the set of attributes in the schema of $R$. We wish to process the join $\Join_{R \in \mR} R$, defined as the set of tuples $t$ such that $\forall R \in \mR : \pi_{\attr(R)}(t) \in R$. $|R|$ denotes the number of tuples in relation $R$. For any set of attributes $A \subseteq \mA$, a {\em value} in attribute set $A$ is defined as a tuple from $\bigcup_{R \in \mR : A \subseteq \attr(R)} \pi_{A}(R)$. For any $A \subseteq \attr(R)$, the {\em degree} of a value $v$ in $A$ in relation $R$ is given by the number of times $v$ occurs in $R$ i.e. $\degree(v, R, A) = |\left\lbrace t \in R \mid \pi_{A}(t) = v\right\rbrace|$. For all values $v$ of $A$ in $R$, we must have $\degree(v, R, A) \geq 1$.

In Section~\ref{sec:sequential-join}, we denote a join query with a hypergraph $G$; the vertices in the graph correspond to attributes and the hyperedges to relations. We use $R(X_1, X_2,\ldots, X_k)$ to denote a relation $R$ having schema $(X_1, X_2,\ldots, X_k)$. $\IN$ denotes the input size i.e. sum of sizes of input relations, while $\OUT$ denotes the output size. Our output size bounds, computation costs, and communication costs will be expressed using $O$ notation which hides polylogarithmic factors i.e. $\log^{c}(\IN)$, for some $c$ not dependent on number of tuples $\IN$ (but possibly dependent on the number of relations/attributes). All ensuing logarithms in the paper, unless otherwise specified, will be to the base $\IN$.

\stitle{AGM Bound:}
Consider the following linear program:
\begin{lp} 
$$\textrm{Minimize} \sum_{R \in \mR} w_R\log(|R|) \textrm{ such that } \forall a \in \mA : \sum_{R \in \mR : a \in \attr(R)} w_R \geq 1$$
\end{lp}
A valid assignment of weights $w_R$ to relation $R$ in the linear program is called a {\em fractional cover}. If $\rho*$ is the minimum value of the objective function, then the AGM bound on the join output size is given by $\IN^{\rho*}$. In general, for any set of relations $\mR$, we use $\AGM(\mR)$ to denote the AGM bound on $\Join_{R \in \mR} R$.

\subsection{Degree Uniformization}\label{sec:degree-uniformization}
\begin{algorithm}
\KwIn{Set of relations $\mR$, Bucket range parameter $L$}
\KwOut{$\Join_{R \in \mR} R$}
$1$. Compute $\degree(v, R, A)$ for each $R \in \mR, A \subseteq \attr(R), v \in \pi_{A}(R)$ 

$2$. Compute the set of all $L$-degree configurations $\mC_L$

\ForEach{$c \in \mC_L$}{
$3.1$. Compute partition $R(c)$ of each relation $R$

$3.2$. Compute $\mR(c) = \left\lbrace R(c) \mid R \in \mR \right\rbrace$

$4$. Compute join $J_c = \Join_{R \in \mR(c)} R$
}

$5$. \Return $\bigcup_{c \in \mC_L} J_c$ 
\caption{High level join algorithm} \label{algo:degree-uniformization}
\end{algorithm}

We describe our high level join procedure in Algorithm~\ref{algo:degree-uniformization}. In Step $1$, we compute the degree of each value in each attribute set $A$, in each relation $R$. If the degrees are available beforehand, due to being maintained by the database, then we can skip this step. We further describe this step in Section~\ref{sec:degree-computation}. 

Steps $2, 3$ together constitute {\em degree-uniformization}. In these steps, we partition each relation $R$ by degree. In particular, we assign each value in a relation to a bucket based on its degree: with one bucket for degrees in $[1, L)$, one for degrees in $[L, L^2)$, and so on. Then we process the join using one partition from each relation, for all possible combinations of partitions. Each such combination is referred to as a {\em degree configuration}. We use $c$ to denote any individual degree configuration, $\mC_L$ to denote the set of all degree configurations, $R(c)$ to denote the part of relation $R$ being joined in configuration $c$, and $\mR(c)$ to denote $\left\lbrace R(c) \mid R \in \mR\right\rbrace$. Step $2$ consists of enumerating all degree configurations, and Step $3$ consists of finding the partition of each relation corresponding to each degree configuration.

In Step $4$, we compute $J_c = \Join_{R \in \mR(c)} R$ for each degree configuration $c$. Section~\ref{sec:sequential-join} describes how to perform Step $4$ in a sequential setting, while Section~\ref{sec:nprr-parallelization-improved} describes it for a MapReduce setting. Step $5$ combines the join outputs for each $c$ to get the final output. 

Steps $1$, $2$, $3$ and $5$ can be performed efficiently in MapReduce as well as sequential settings; thus the cost of Algorithm~\ref{algo:degree-uniformization} is determined by Step $4$. Step $4$ is carried out differently in sequential and MapReduce settings. Its cost in the sequential setting is lower than the cost in a MapReduce setting. Steps $1$, $2$, and $3$ have a cost of $O(\IN)$, while Step $5$ has cost $O(\OUT)$. Since reading the input and output always has a cost of $O(\IN + \OUT)$, the only extra costs we incur are in Step $4$ when we actually process the join. Costs for Step $4$ will be described in Sections~\ref{sec:sequential-join} and \ref{sec:nprr-parallelization-improved}.

\stitle{Degree-uniformization:} Now we describe degree-uniformization in detail. We pick a value for a parameter $L$ which we call `bucket range', and define buckets $B_l = [L^l, L^{l+1})$ for all $l \in \mathbb{N}$. Let $\mB = \left\lbrace B_0, B_1,\ldots, \right\rbrace$. For any two buckets $B_i, B_j \in \mB$, we say $B_i \leq B_j$ iff $i \leq j$. A degree configuration specifies a unique bucket for each relation and set of attributes in that relation. Formally:
\begin{definition}
Given a parameter $L$, we define a degree configuration $c$ to be a function that maps each pair $(R,A)$ with $R \in \mR, A \subseteq \attr(R)$ to a unique bucket in $\mB$ denoted $c(R,A)$, such that
$$\forall R, A, A' : A' \subseteq A \subseteq \attr(R) \Rightarrow c(R,A) \leq c(R,A')$$
$$\forall R : c(R,\attr(R)) = B_0 \text{ and } c(R, \emptyset) = B_{\lfloor \log_L(|R|) \rfloor}$$
\end{definition}
\begin{example}
If a join has relations $R_1(X,Y), R_2(Y)$, then a possible configuration is $(R_1, \emptyset) \mapsto B_3$, $(R_1, \left\lbrace X \right\rbrace) \mapsto B_1$, $(R_1, \left\lbrace Y \right\rbrace) \mapsto B_2$, $(R_1, \left\lbrace X, Y \right\rbrace) \mapsto B_0$, $(R_2, \emptyset) \mapsto B_1$, $(R_2, \left\lbrace Y \right\rbrace) \mapsto B_0$.
\end{example}

\begin{definition}\label{def:degree-configuration-partition}
Given a degree configuration $c$ for a given $L$, and a relation $R \in \mR$, we define $R(c)$ to be the set of tuples in $R$ that have degrees consistent with $c$. Specifically:
$$R(c) = \left\lbrace t \in R \mid \forall A \subseteq \attr(R) :  \degree(\pi_{A}(t),R,A) \in c(R,A) \right\rbrace$$
We define $\mC_L$ to be the set of all degree configurations with parameter $L$.
\end{definition}

\begin{example}
For a tuple $(a,b) \in R$, where $L^2 \leq |R| < L^3$, with the degree of $a$ in $B_1$, and that of $b$ in $B_2$, the tuple would be in $R(c)$ if $c(R, \emptyset) = B_2, c(R, \left\lbrace A \right \rbrace) = B_1, c(R, \left\lbrace B \right \rbrace) = B_2, c(R, \left\lbrace A, B \right \rbrace) = B_0$. On the other hand, it would not be in $R(c)$ if
$c(R, \left\lbrace A \right \rbrace) = B_0$, even if we had $c(R,\left\lbrace A, B \right\rbrace) = B_0$, $c(R,\left\lbrace B\right\rbrace) = B_2$.
\end{example}

\noindent
A degree configuration also bounds degrees of values in sub-relations, as stated below:
\begin{lemma}\label{lemma:implicit-degrees}
For all $R \in \mR, A' \subseteq A \subset \attr(R), L > 1, c \in \mC_L, v \in \pi_{A'}(R), j \geq i \geq 0$:
$$c(R, A) = B_i \land c(R, A') = B_j \Rightarrow \degree(v, \pi_{A}(R(c)) ,A') \leq L^{j+1-i}$$
\end{lemma}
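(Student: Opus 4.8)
The plan is to fix a single value $v \in \pi_{A'}(R)$ and directly count the quantity in question, namely the number of distinct $A$-values in $\pi_A(R(c))$ whose projection onto $A'$ equals $v$; by definition this count is exactly $\degree(v, \pi_A(R(c)), A')$. The whole difficulty is that the two hypotheses $c(R,A) = B_i$ and $c(R,A') = B_j$ constrain degrees computed in the \emph{full} relation $R$, whereas the lemma asks about a degree in the \emph{projected, surviving} relation $\pi_A(R(c))$. So the strategy is a disjointness / double-counting argument that converts the two bucket constraints, both phrased in terms of $R$, into a bound on the number of surviving $A$-extensions of $v$.

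First I would set $U = \left\lbrace u \in \pi_A(R(c)) \mid \pi_{A'}(u) = v \right\rbrace$, so that the target quantity is $|U|$. If $U = \emptyset$ the bound is immediate, so assume $U$ is nonempty and, for each $u \in U$, choose a witnessing tuple $t_u \in R(c)$ with $\pi_A(t_u) = u$. Next I would unpack the definition of $R(c)$ on these witnesses: since $t_u \in R(c)$ and $c(R,A) = B_i = [L^i, L^{i+1})$, its $A$-value obeys $\degree(u, R, A) \geq L^i$; and since $c(R,A') = B_j = [L^j, L^{j+1})$ while $\pi_{A'}(t_u) = \pi_{A'}(u) = v$, the $A'$-value obeys $\degree(v, R, A') < L^{j+1}$. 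Note both of these degrees are taken in $R$, which is the point.

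The crucial combinatorial step is then the following. Because $A' \subseteq A$, every tuple of $R$ with $\pi_A = u$ automatically satisfies $\pi_{A'} = v$, and the tuple-sets $\left\lbrace t \in R \mid \pi_A(t) = u \right\rbrace$ are pairwise disjoint across distinct $u \in U$ (distinct $A$-values cannot share a tuple). Summing their sizes and using that all such tuples lie under $v$ gives $\sum_{u \in U} \degree(u, R, A) \leq \degree(v, R, A')$. Combining this with the per-$u$ lower bound $\degree(u, R, A) \geq L^i$ yields $|U| \cdot L^i \leq \degree(v, R, A') < L^{j+1}$, hence $|U| < L^{j+1-i}$, which is even stronger than the claimed $\leq L^{j+1-i}$.

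I expect the main obstacle to be purely conceptual rather than computational: resisting the temptation to work with degrees inside $\pi_A(R(c))$ and instead charging each surviving distinct $A$-extension of $v$ back to at least $L^i$ genuine tuples of $R$, all of which sit beneath $v$ on $A'$. The disjointness of these charges is exactly what bridges the gap between ``degree in the projected surviving relation'' and ``degree in $R$'', and once it is in place the arithmetic is a one-line division. A minor point to state cleanly is the $U = \emptyset$ case, which also covers the situation where $\degree(v, R, A')$ itself fails to lie in bucket $B_j$, since then no tuple projecting to $v$ can survive into $R(c)$.
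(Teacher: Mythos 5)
Your proof is correct: the paper states Lemma~\ref{lemma:implicit-degrees} without providing a proof, and your disjoint-charging argument (each surviving $A$-extension $u$ of $v$ accounts for at least $L^i$ tuples of $R$, all of which project to $v$ on $A'$ and hence are among the fewer than $L^{j+1}$ tuples counted by $\degree(v,R,A')$, the key point being that the bucket constraints in the definition of $R(c)$ are degrees taken in the full relation $R$) is exactly the intended one. Your handling of the $U=\emptyset$ case and the observation that the bound is in fact strict are both fine.
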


\stitle{Choosing $L$: }
The optimal value of parameter $L$ depends on our application. $L$ has three effects : (i) For the DBP/$\MO$ bounds (Sections~\ref{sec:mo-bound},~\ref{sec:nprr-parallelization-improved}) and sequential algorithm (Section~\ref{sec:sequential-join}), the error in output size estimates is exponential in $L$ (with the exponent depending only on the number of attributes) (ii) The load per processor for the parallel algorithm (Section~\ref{sec:nprr-parallelization-improved}) is $O(L)$ (iii) the number of rounds for the parallel algorithm is $\log_L(\IN)$. As a result, we choose a small $L( = 2)$ for the sequential algorithm and DBP/$\MO$ bounds, and a larger $L$ $( = \text{load capacity} = \IN^\gamma$ for some $\gamma  < 1$) for the parallel algorithm.

\subsection{Beyond AGM : The $\MO$ Bound}\label{sec:mo-bound}
We now use degree-uniformization to tighten our upper bound on join output size. 

\begin{definition}
Let $\mR$ be a set of relations, with attributes in $\mA$. For each $R \in \mR, A \subseteq \attr(R)$, let $d_{R,A} = \mathrm{max}_{v \in \pi_{A}(R)} \degree(v,R,A)$. If $A = \emptyset$ then $d_{R,\emptyset} = |R|$. And for any $A \subseteq B \subseteq \attr(R)$, let $d(A,B,R)$ denote $\log(d_{\pi_{B}(R),A})$. Then consider the following linear program for $L$.
\begin{lp}\label{lp:mo-bound}
\begin{align*}
& \textrm{Maximize } s_{\mA} \textrm{ s. t. } \text{ (i) } s_{\emptyset} = 0 \text{ (ii) } \forall A, B \text{ s.t. } A \subseteq B : s_A \leq s_B \\
& \text{ (iii) } \forall A, B, E, R \text{ s.t. } R \in \mR, E \subseteq \mA, A \subseteq B \subseteq \attr(R) : s_{B \cup E} \leq s_{A \cup E} + d(A,B,R)
\end{align*}
\end{lp}
We define $m_{\mA}$ to be the maximum objective value of the above program. 
\end{definition}

\begin{proposition}\label{prop:mo-bound}
The output size $\Join_{R \in \mR} R$ is in $O(\IN^{m_{\mA}})$.
\end{proposition}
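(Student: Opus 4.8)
The plan is to prove the bound \emph{directly}, by exhibiting the ``profile'' of the actual output as a feasible point of Linear Program~\ref{lp:mo-bound}. Since $m_{\mA}$ is the \emph{maximum} of that program, every feasible $s$ satisfies $s_{\mA} \le m_{\mA}$; so it suffices to construct one feasible solution $s^{*}$ whose objective $s^{*}_{\mA}$ equals $\log(\OUT)$. This is the right way to read an upper bound off a maximization LP, and it sidesteps degree-uniformization entirely: the bound will be about the true join.

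Concretely, write $J = \Join_{R \in \mR} R$ and define, for every $A \subseteq \mA$,
$$ s^{*}_{A} = \log\bigl(|\pi_{A}(J)|\bigr), $$
the (base-$\IN$) logarithm of the number of distinct $A$-values appearing in the output. If $J = \emptyset$ the claim is trivial, so assume $J \neq \emptyset$. Then $\pi_{\emptyset}(J)$ is a single empty tuple, giving $s^{*}_{\emptyset} = 0$ (constraint (i)); and $A \subseteq B$ forces $|\pi_{A}(J)| \le |\pi_{B}(J)|$, giving $s^{*}_{A} \le s^{*}_{B}$ (constraint (ii)). Moreover $\pi_{\mA}(J) = J$, so $s^{*}_{\mA} = \log(\OUT)$, which is exactly the objective I want to bound.

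The crux is constraint (iii): for $R \in \mR$, $A \subseteq B \subseteq \attr(R)$, and $E \subseteq \mA$, I must show $s^{*}_{B \cup E} \le s^{*}_{A \cup E} + d(A,B,R)$, i.e.
$$ |\pi_{B \cup E}(J)| \;\le\; |\pi_{A \cup E}(J)| \cdot d_{\pi_{B}(R),A}. $$
I will prove this by a fiber-counting argument on the projection $\pi_{B \cup E}(J) \to \pi_{A \cup E}(J)$ that forgets the attributes $B \setminus (A \cup E)$. Fix a value $u \in \pi_{A \cup E}(J)$ and bound the number of $w \in \pi_{B \cup E}(J)$ with $\pi_{A \cup E}(w) = u$. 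Every such $w$ extends from a full output tuple $t \in J$, and by definition of the join $\pi_{\attr(R)}(t) \in R$, hence $\pi_{B}(w) = \pi_{B}(t) \in \pi_{B}(R)$ since $B \subseteq \attr(R)$. The $A$-part of $w$ is pinned to $\pi_{A}(u)$ (as $A \subseteq A \cup E$), so $\pi_{B}(w)$ ranges only over $B$-values of $\pi_{B}(R)$ extending the fixed value $\pi_{A}(u)$; there are at most $\degree(\pi_{A}(u), \pi_{B}(R), A) \le d_{\pi_{B}(R),A}$ of these. Finally $w$ is determined by $\pi_{B}(w)$ together with its $E$-part $\pi_{E}(w) = \pi_{E}(u)$, which is fixed, since $B \cup E$ is the union of $B$ and $E$. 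Thus each fiber has size at most $d_{\pi_{B}(R),A}$, establishing the displayed inequality; the case $A = \emptyset$ reduces to $|\pi_{B \cup E}(J)| \le |\pi_{E}(J)| \cdot |\pi_{B}(R)|$, consistent with the convention $d_{\pi_{B}(R),\emptyset} = |\pi_{B}(R)|$.

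With all three constraints verified, $s^{*}$ is feasible, so $\log(\OUT) = s^{*}_{\mA} \le m_{\mA}$, i.e. $\OUT \le \IN^{m_{\mA}}$, which lies within the claimed $O(\IN^{m_{\mA}})$. I expect the only real obstacle to be the bookkeeping in constraint (iii) --- keeping straight how $E$ overlaps $A$ and $B$, and arguing that fixing the $A \cup E$ coordinates leaves the $B$-coordinates constrained by $R$ to at most $d_{\pi_{B}(R),A}$ choices. Once the fiber is seen to be controlled purely by $R$'s degree structure on its $B$-values, the rest is immediate, and since the argument touches only the true relations it yields the bound on the exact output with no hidden loss beyond the logarithmic rounding absorbed by the $O(\cdot)$.
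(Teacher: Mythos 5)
Your proof is correct, but it takes a genuinely different route from the paper's. You exhibit the profile of the true output $J=\Join_{R\in\mR}R$, namely $s^{*}_{A}=\log|\pi_{A}(J)|$, as a feasible point of Linear Program~\ref{lp:mo-bound}, and conclude $\log(\OUT)=s^{*}_{\mA}\le m_{\mA}$ because the program is a maximization; the fiber-counting step for constraint (iii) is sound, since fixing the $A\cup E$-coordinates of a tuple of $\pi_{B\cup E}(J)$ pins its $E$-part and leaves at most $d_{\pi_{B}(R),A}$ admissible $B$-parts, each forced to lie in $\pi_{B}(R)$ because every output tuple satisfies $\pi_{\attr(R)}(t)\in R$. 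The paper instead proves the stronger Proposition~\ref{prop:mo-bound-general}: from an optimal LP solution it extracts a chain of tight constraints $\emptyset=A_{0},A_{1},\ldots,A_{k}=\mA$ and along that chain \emph{constructs} a relation $R_{\mA}\supseteq J$ with $|R_{\mA}|\le\IN^{m_{\mA}}$, computable in time $O(\IN^{m_{\mA}})$. For the size bound alone your primal-feasibility argument is cleaner and more elementary, and it avoids the somewhat delicate claim that tight constraints at the optimum chain all the way back to $s_{\emptyset}=0$; what it does not deliver is the algorithmic content, which the paper relies on elsewhere (the bag relations $R_{\chi(t)}$ in the proof of Theorem~\ref{thm:mw-join} are exactly the relations built by that chain). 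So your argument fully establishes Proposition~\ref{prop:mo-bound}, but it would not substitute for the appendix proposition from which the paper derives it.
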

This is proved in Appendix~\ref{sec:mo-mw-proofs}. Intuitively, for any $A \subseteq \mA$, $s_A$ stands for possible values of $\log(|\pi_{A}(\Join_{R \in \mR}R)|)$. This explains the first two constraints (projecting onto the empty set gives size $1$, and the projection size over $A$ is monotone in $A$). For the third constraint, we use the fact that each value in $A$ has at most $\IN^{d(A,B,R)}$ values in $B$, thus each tuple in $\pi_{A \cup E}(\Join_{R \in \mR}R)$ can give us at most $\IN^{d(A,B,R)}$ tuples in $\pi_{B \cup E}(\Join_{R \in \mR}R)$. The linear program attempts to maximize the total output size $(\IN^{s_{\mA}})$ while still satisfying the constraints.

We now define the $\MO$ bound.

\begin{definition}\label{def:mo-bound}
Let $\MO(\mR)$ denote the value $m_{\mA}$ for any join query consisting of relations $\mR$. Then the $\MO$ bound is given by $\sum_{c \in \mC_2} \IN^{\MO(\mR(c))}$. 
\end{definition}

\begin{theorem}\label{thm:mo-agm}
The $\MO$ bound is in $O(\textsf{AGM}(\mR))$.  
\end{theorem}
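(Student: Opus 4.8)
The plan is to reduce the claim to a single per-configuration inequality comparing the $\MO$ exponent with the AGM exponent, and then to prove that inequality for degree-uniform relations. Write $\rho^*(\mR')$ for the optimal value of the fractional-cover (AGM) linear program on a set of relations $\mR'$, so that $\AGM(\mR') = \IN^{\rho^*(\mR')}$. First I would note that the number of degree configurations is polylogarithmic: a configuration assigns to each of the constantly many pairs $(R,A)$ with $R \in \mR,\ A \subseteq \attr(R)$ a bucket $B_l$ with $l \le \log_2 \IN$, so $|\mC_2| \le (\log_2 \IN + 1)^{K}$ for a constant $K$ depending only on the query. Since our $O$-notation hides polylogarithmic factors, it suffices to prove $\IN^{\MO(\mR(c))} = O(\AGM(\mR))$ for each fixed $c$. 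Because $R(c) \subseteq R$ for every $R$, the optimal cover $w^*$ of $\mR$ is feasible for $\mR(c)$ and $\sum_R w^*_R \log|R(c)| \le \sum_R w^*_R \log|R| = \rho^*(\mR)$, so $\rho^*(\mR(c)) \le \rho^*(\mR)$, i.e. $\AGM(\mR(c)) \le \AGM(\mR)$. The whole theorem therefore reduces to the single inequality
\[
\MO(\mR(c)) \;\le\; \rho^*(\mR(c)) \qquad \text{for every } c \in \mC_2 ,
\]
stating that for a degree-uniform relation set the $\MO$ exponent never exceeds the AGM exponent.

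It is worth stressing that degree-uniformity is essential and is where all the work lives: for an arbitrary, un-partitioned relation set the analogous inequality is false. For the triangle join in which each relation is a ``double star'' (one value of each attribute carrying almost all tuples), Linear Program~\ref{lp:mo-bound} admits a feasible point with $s_{\mA}$ arbitrarily close to $2$, whereas $\rho^* = \frac{3}{2}$; the reason is that a non-uniform relation can simultaneously exhibit large conditional degrees on mutually incompatible attribute sets, leaving the conditioning constraints~(iii) slack. Passing to a configuration $c$ forces every value of a given attribute set into a single degree bucket, which ties the projection sizes to the degrees (by Lemma~\ref{lemma:implicit-degrees}, $d(A,B,R(c))$ is governed by the bucket gap) and makes exactly the right conditioning constraints tight.

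To prove $\MO(\mR(c)) \le \rho^*(\mR(c))$ I would combine tightness of the $\MO$ bound on uniform instances with the universality of AGM. Let $s^*$ be an optimal solution of Linear Program~\ref{lp:mo-bound} for $\mR(c)$, so $\MO(\mR(c)) = s^*_{\mA}$. The key lemma to establish is that there is a database instance $\mR^*$, with the same relation sizes and the same degree parameters $d_{R,A}$ as $\mR(c)$, whose join realizes the bound: $\bigl|\Join_{R \in \mR^*} R\bigr| \ge \IN^{\,s^*_{\mA} - o(1)}$. Granting this, $\mR^*$ has the same relation sizes as $\mR(c)$ and hence the same AGM bound, while the AGM theorem upper-bounds \emph{every} instance of those sizes, so
\[
\IN^{\MO(\mR(c))} = \IN^{\,s^*_{\mA}} \le \IN^{o(1)}\,\bigl|\Join_{R \in \mR^*} R\bigr| \le \IN^{o(1)}\,\AGM(\mR^*) = \IN^{o(1)}\,\AGM(\mR(c)) .
\]
Combined with $\AGM(\mR(c)) \le \AGM(\mR)$ and the polylogarithmic bound on $|\mC_2|$, summing over $c$ yields $\sum_{c} \IN^{\MO(\mR(c))} = O(\AGM(\mR))$, which is the theorem.

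The main obstacle is the tightness lemma: constructing a uniform instance that both respects the prescribed degree buckets and attains the linear-program optimum. I would build $\mR^*$ from $s^*$ in the style of the worst-case constructions for information-theoretic join bounds (as in the AGM tightness proof): give each attribute $a$ a domain of size about $\IN^{s^*_{\{a\}}}$ and populate each relation with a structured, product- or clique-like set of tuples whose attribute-wise degrees fall in the buckets prescribed by $c$ and whose projections have the sizes dictated by $s^*$, so that the surviving joins number about $\IN^{\,s^*_{\mA}}$; the model case is the regular triangle, where the bound $\min(Nd,\,N^2/d)$ is attained by a disjoint union of $d$-cliques. An alternative, purely LP-theoretic route would bypass the construction by exhibiting a feasible dual solution of Linear Program~\ref{lp:mo-bound} of value $\rho^*(\mR(c))$, assembled from the optimal fractional cover by telescoping the conditioning constraints~(iii) along a chain of attribute sets; checking that uniformity makes this telescoping close up to $\sum_R w^*_R \log|R(c)|$ is the crux in that formulation.
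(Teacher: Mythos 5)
Your outer reduction is fine and matches what the paper does implicitly: there are only polylogarithmically many degree configurations, $\AGM(\mR(c)) \le \AGM(\mR)$ because $R(c) \subseteq R$, and the observation that the skewed triangle defeats the per-instance inequality correctly identifies why degree-uniformity is essential. The gap is entirely in your ``tightness lemma,'' which is the crux of your argument and is neither proved nor, I believe, true in the generality you need. Linear Program~\ref{lp:mo-bound} contains only monotonicity and chain-type conditioning constraints; it has no submodularity constraints. Consequently its feasible region strictly contains the set of polymatroids satisfying the degree constraints, so $m_{\mA}$ upper-bounds (and can strictly exceed) the polymatroid bound, which in turn upper-bounds the entropic bound, which upper-bounds the true worst-case output size over instances with the prescribed sizes and degrees. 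Degree-constrained polymatroid bounds are already known not to be attained by actual database instances in general, and the $\MO$ program is a further relaxation. So asserting that some instance $\mR^*$ with the same sizes and degree parameters realizes $\IN^{s^*_{\mA}-o(1)}$ is an unjustified and probably false step; the regular-triangle case where $\min(Nd, N^2/d)$ is attained does not generalize. Nothing in the paper claims or uses tightness of the $\MO$ bound.

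The paper avoids this entirely by comparing the linear programs directly rather than routing through a witness instance. Theorem~\ref{thm:mo-bound} takes the optimal cover $C$ and primal solution $(v_a)$ of the DBP program (Linear Program~\ref{lp:degree-packing-primal}), orders the cover as $B_j = \bigcup_{i\le j} A_i$, and telescopes the conditioning constraints of Linear Program~\ref{lp:mo-bound} along this chain to prove $s_{B_j} \le j\log 2 + \sum_{a\in B_j} v_a$ by induction, giving $\MO(\mR(c)) \le \DBP(\mR(c),2) + |C|\log 2$; Theorem~\ref{thm:gep-agm} then shows $\IN^{\DBP(\mR(c),L)} \le \AGM(\mR(c))$ via a sequence of LP transformations starting from the dual of the fractional cover program. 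Your final one-sentence ``alternative, purely LP-theoretic route'' (a feasible dual of Linear Program~\ref{lp:mo-bound} obtained by telescoping constraints (iii) along a chain assembled from a cover) is essentially this argument and is the route you should take, but as written it is only a remark: you would still need to carry out the telescoping and to bridge from the fractional cover weights to per-attribute values, which is exactly the work done by the DBP intermediate.
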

The constant in the $O()$ notation depends on the number of attributes in the query, but not on the number of tuples. This result is proved in two steps. Theorem~\ref{thm:gep-agm} states that the DBP bound (introduced in Section~\ref{sec:nprr-parallelization-improved}) is smaller than the AGM bound, while Theorem~\ref{thm:mo-bound} implies that the $\MO$ bound is smaller than the DBP bound times a constant. 

\begin{example} 
Let $L = 2$ for this example. Consider a triangle join $R(X,Y) \Join S(Y,Z) \Join T(Z,X)$. Let $|R| = |S| = |T| = N$. The AGM bound on this is $N^{3/2}$. Let the degree of each value $x$ in $X$ in both $R$ and $T$ be $h$. For different values of $h$ we will find an upper bound on $m_{\{X,Y,Z\}}$ and hence on the output size. 

{\bf Case 1. $h < \sqrt{N}$:} Then $s_{\{X\}}$ $\leq$ $s_{\emptyset}$ $+$ $d(\emptyset,\{X\},R)$ $=$ $\log(N/h)$. Thus, $s_{\{X,Y\}}$ $\leq$ $s_{\{X\}}$ $+$ $d(\{X\},\{X,Y\},R)$ $\leq$ $\log(N/h)$ $+$ $\log(h)$ $=$ $\log(N)$. Finally, $s_{\{X,Y,Z\}}$ $\leq$ $s_{\{X,Y\}}$ $+$ $d(\{X\}, \{X,Z\}, T)$ $\leq$ $\log(N) + \log(h)$. Thus the $\MO$ bound is $\leq Nh < N^{3/2}$.

{\bf Case 2. $h > \sqrt{N}$:} Since there can be at most $N/h$ distinct $X$ values, we have $d(\{Y\},\{X,Y\},R)$ $\leq$ $\log(N/h))$. More if the degree of $Y$ in $S$ in a degree configuration is $g$, then $s_{\{Y,Z\}}$ $\leq$ $s_{\{Y\}}$ $+$ $d(\{Y\},\{Y,Z\},S)$ $\leq$ $\log(N/g)$ $+$ $\log(g)$ $=$ $\log(N)$. Finally, $s_{\{X,Y,Z\}}$ $\leq s_{\{Y,Z\}}$ $+$ $d(\{Y\},\{X,Y\},R)$ $\leq$ $\log(N)$ $+$ $\log(N/h)$ $=$ $\log(N^2/h)$ $<$ $N^{3/2}$.

The $\MO$ bound has a strictly smaller exponent than AGM unless $h \approx \sqrt{N}$. Computing the AGM bound individually over each degree configuration does not help us do better, as the above example can have all tuples in a single degree configuration.
\end{example}

\begin{example}
Consider a matching database~\cite{beame:communication}, where each attribute has the same domain of size $N$, and each relation is a matching. Thus each value has degree $1$, and $d(A,B,R)$ equals $0$ when $A \neq \emptyset$ and $1$ if $A = \emptyset$. The $\MO$ bound on such a database trivially equals $N$, which can have an unboundedly smaller exponent than the AGM bound.
\end{example}

Appendix~\ref{sec:dbp-agm-examples} similarly compares the DBP and AGM bounds, showing that DBP (and hence $\MO$) has a strictly smaller exponent than AGM for `almost all' degrees.

\subsection{Degree Computation}\label{sec:degree-computation}
If we do not know degrees in advance we can compute them on the fly, as stated below:
\begin{lemma}
Given a relation $R$, $A \subseteq \attr(R)$, and $L > 1$, we can find $\degree(v,R,A)$ for each $v \in \pi_{A}(R)$ in a MapReduce setting, with $O(|R|)$ total communication, in $O(\log_L(|R|))$ MapReduce rounds, and at $O(L)$ load per processor. In a sequential setting, we can compute degrees in time $O(|R|)$.
\end{lemma}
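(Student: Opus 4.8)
The plan is to read this as a group-by-count: for each distinct value $v \in \pi_A(R)$ we must report $|\{t \in R : \pi_A(t) = v\}|$. The sequential direction is immediate—bucket the tuples of $R$ by their $A$-projection with a hash table (or sort on $\pi_A$) and emit the size of each bucket. Each tuple is touched a constant number of times, so this costs $O(|R|)$, with the $\log$ factor of sorting absorbed into the $O$-notation (or avoided outright by hashing).

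For MapReduce, the naive one-round scheme—map each $t$ to key $\pi_A(t)$ and let the reducer for $v$ tally its tuples—fails the load bound, since a single high-degree value would route up to $|R|$ tuples to one reducer. I would instead sum each value's count through a balanced aggregation tree of fan-in $L$. The map phase emits a partial count $(\pi_A(t), 1)$ for every tuple. In each subsequent round, the partial counts sharing a value $v$ are split into groups of size at most $L$, and each group is sent to one reducer that replaces it by the sum of its counts. A round shrinks the number of partial counts for $v$ from $n$ to $\lceil n/L \rceil$, so after $\lceil \log_L d_v \rceil$ rounds only a single partial count—equal to $\degree(v,R,A)$—remains; since $d_v \le |R|$, every value finishes within $O(\log_L |R|)$ rounds, and no reducer ever ingests more than $L$ counts, giving $O(L)$ load.

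For the communication bound, I would note that in round $i$ the partial counts surviving for a value $v$ number at most $d_v / L^i + 1$. Summing over all values and the $O(\log_L |R|)$ rounds gives at most $\frac{L}{L-1}\sum_v d_v + O(|\pi_A(R)| \cdot \log_L |R|)$; the first (geometric) term is $O(|R|)$ for $L \ge 2$, and the second is also $O(|R|)$ since the $O$-notation suppresses the polylogarithmic $\log_L |R|$ factor and $|\pi_A(R)| \le |R|$. Hence total communication is $O(|R|)$.

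The step I expect to be the main obstacle is the grouping inside each round: partitioning the counts of each value into size-$\le L$ groups in MapReduce without first knowing the per-value counts. The plan is to sidestep this with hashing—assign each partial count a random sub-key drawn from a range sized so that, in expectation, each $(v,\text{sub-key})$ reducer receives $O(L)$ counts—which preserves both the $O(L)$ load and the $O(|R|)$ communication up to constant factors while avoiding a separate counting pass. Attaching the finished degrees back onto the tuples, if needed downstream, is one further aggregation round of the same cost.
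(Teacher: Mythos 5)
Your proposal is correct and follows essentially the same route as the paper's proof: the paper also hashes each tuple to a key determined by its $A$-value together with a random sub-key in $\{1,\ldots,|R|/L\}$ (exactly your fix for the grouping obstacle), then aggregates partial counts through a fan-in-$L$ tree over $O(\log_L|R|)$ rounds, with the same geometric-series accounting for the $O(|R|)$ communication and $O(L)$ load. The only cosmetic difference is that the paper obtains the sequential bound by simulating the MapReduce computation on one machine, whereas you use a direct hash-table pass; both give $O(|R|)$.
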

The proof of this lemma is relatively straightforward and can be found in Appendix~\ref{proof:degree-computation}.

To perform degree-uniformization, we compute degrees for all relations $R$, and all $A \subseteq \attr(R)$. The number of such $(R,A)$ pairs is exponential in the number and size of relations, but is still constant with respect to the input size $\IN$. 

\section{Sequential Join Processing}\label{sec:sequential-join}
We present our results on sequential join processing. Section~\ref{sec:sequential-setting} describes our problem setting. In Section~\ref{sec:darts-algorithm} we present our sequential join algorithm, {\em DARTS} (for {\bf D}egree-based {\bf A}ttribute-{\bf R}elation {\bf T}ransform{\bf s}). DARTS handles queries consisting of a join followed by a projection. A join alone is simply a join followed by projection onto all attributes. We pre-process the input by performing degree-uniformization, and then run DARTS on each degree configuration. DARTS works by performing a sequence of {\em transforms} on the join problem; each transform reduces the problem to smaller problems with fewer attributes or relations. We describe each of the transforms in turn. We then show that DARTS can be used to recover (while potentially improving on) known join results such as those of the NPRR algorithm, Yannakakis' algorithm, the fhw algorithm, and the AYZ algorithm. 

In Section~\ref{sec:subquadratic-joins}, we apply DARTS to the subquadratic joins problem; presenting cases in which we can go beyond existing results in terms of the runtime exponent. For a family of joins called $1$-series-parallel graphs, we obtain a full dichotomy for the subquadratic joins problem. That is, for each $1$-series-parallel graph, we can either show that DARTS processes its join in subquadratic time, or that no algorithm can process it in subquadratic time modulo the $3$-SUM problem. Note that $1$-series-parallel graphs have treewidth $2$, making them easily solvable in quadratic time. Thus, our $3$-SUM based quadratic lower bound on some of the graphs is tight making it, to our knowledge, the only tight bound for join processing time with small output sizes. In contrast, there is a $N^{\frac{4}{3}}$ lower bound (using $3$-SUM) for triangle joins, but its matching upper bound depends on the additional assumption that the matrix multiplication exponent equals two. 

In Section~\ref{sec:m-width}, we show that most results of the DARTS algorithms can be recovered using the well known framework of Generalized Hypertree Decompositions (GHDs), along with a novel notion of width we call $m$-width. We show that $m$-width is no larger than fhw, and sometimes smaller than submodular width.

\subsection{Setting}\label{sec:sequential-setting}
In this section, we focus on a sequential join processing setting. We are especially interested in the subquadratic joins problem stated below:
\begin{problem}\label{prob:subquadratic-joins}
For any graph $G$, we let each node in the graph represent an attribute and each edge represent a relation of size $N$. Then we want to know, for what graphs $G$ can we process a join over the relations in {\em subquadratic} time, i.e. $O(N^{2-\epsilon} + \OUT)$ for some $\epsilon > 0$?
\end{problem}
Performing a join in subquadratic time is especially important when we have large datasets being joined, and the output size is significantly smaller than the worst case output size. Note that we define subquadratic to be a $\textrm{poly}(N)$ factor smaller than $N^2$, so for instance a $\frac{N^2}{\log N}$ algorithm is not subquadratic by our definition. 

As an example, if a join query is $\alpha$-acyclic, then Yannakakis' algorithm can answer it in time $O(N + \OUT)$, which is subquadratic. More generally, if the fractional hypertree width (fhw) of a query is $\rho*$, the join can be processed in time $O(N^{\rho*} + \OUT)$ using a combination of the NPRR and Yannakakis' algorithms. The fhw of an $\alpha$-acyclic query is one. For any graph with $\textsf{fhw} < 2$, we can process its join in subquadratic time. The AYZ algorithm (described in Appendix~\ref{sec:AYZ}) allows us to process joins over length $n$ cycles in time $O(N^{2 - \frac{1}{1 + \lceil \frac{n}{2} \rceil}} + \OUT)$, even though cycles of length $\geq 4$ have $\textsf{fhw} = 2$. To the best of our knowledge, this is the only previous result that can process a join with fhw $\geq 2$ in subquadratic time.

The DARTS algorithm is applicable to any join-project problem and not just those with equal relation sizes like in Problem~\ref{prob:subquadratic-joins}. Applying DARTS to Problem~\ref{prob:subquadratic-joins} lets us process several joins in subquadratic time despite having fhw $\geq 2$. Section~\ref{sec:m-width} recovers the subquadratic runtimes of DARTS using GHDs that have $m$-width $< 2$.

\subsection{The DARTS algorithm} \label{sec:darts-algorithm}
We now describe the DARTS algorithm. The problem that DARTS solves is more general than a join. It takes as input a set of relations $\mR$, and a set of attributes $\mO$ (which stands for {\bf O}utput), and computes $\pi_{\mO}\Join_{R \in \mR}R$. When $\mO = \mA$, the problem reduces to just a join. We first pre-process the inputs by performing degree-uniformization. Then each degree configuration is processed separately by DARTS. The $L$ parameter for degree-uniformization is set to be very small ($O(1)$). The total computation time is the sum of the computation times over all degree configurations. Let $G = (c, \mR(c), \mO)$. That is, $G$ specifies the query relations, output attributes, and degrees for each attribute set in each relation according to the degree configuration. We let $c_G, \mR_G, \mO_G$ denote to degree configuration of $G$, the relations in $G$, and the output attributes of $G$. We define two notions of runtime complexity for the join-project problem on $G$:

\begin{definition}
$Q(G)$ is the smallest value such that a join-projection with query structure, degrees, and output attributes given by those in $G$ can be processed in
  time $O(Q(G) + \OUT)$. $P(G)$ is the smallest value such
  that a join-projection with query structure, degrees, and output attributes given
  by those in $G$ can be processed in time $O(P(G))$.
\end{definition}

\begin{example}
As an example of the difference between $P$ and $Q$, consider a chain join $G$ with relations $R_1(X_1,X_2)$, $R_2(X_2,X_3)$, $R_3(X_3,X_4)$, and $\mO = \left\lbrace X_1, X_2, X_3, X_4 \right\rbrace$. All relations have size $N$, and the degree of each attribute in each relation is $\sqrt{N}$. Then $P(G)$ would be $N^2$, the worst case size of the output (where all attributes have $\sqrt{N}$ values and each relation is a full cartesian product). $Q(G)$ on the other hand would be $N$ because the join is $\alpha$-acyclic, and Yannakakis' algorithm lets us process the join in time $O(N + \OUT)$.
\end{example}

\subsubsection{Heavy, Light and Split}
The DARTS algorithm performs a series of {\em transforms} on $G$, each of which reduces it to a smaller problem. In each step, it chooses one of three types of transforms, which we call {\em Heavy}, {\em Light} and {\em Split}. Each transform takes as input $G$ itself and either an attribute or a set of attributes in the relations of $G$. Then it reduces the join-project problem on $G$ to a simpler problem via a {\em procedure}. This reduction gives us a {\em bound} on $P(G)$ and/or $Q(G)$ in terms of the $P$ and $Q$ values of simpler problems. We describe each of these transforms in turn, along with their input, procedure, and bound.

\paragraph*{Heavy:} 
{\bf Input:} $G$, An attribute $X$ 

\noindent
{\bf Procedure:} Let $\mR_X = \left\lbrace R \in \mR(c) \mid X \in \attr(R) \right\rbrace$. Then we compute the values of $x \in X$ that lie in all relations in $\mR_X$ i.e. $\textsf{vals}(X) = \bigcap_{R \in \mR_X} \pi_{X}R$. Then for each $x \in \textsf{vals}(X)$, we marginalize on $x$. That is, we solve the {\em reduced problem}:
$$J_x = \pi_{\mO \setminus \left\lbrace X \right\rbrace} \left(\Join_{R \in (\mR(c) \setminus \mR_X)} R \Join_{R \in \mR_X} (\pi_{\mA \setminus \left\lbrace X \right\rbrace }\sigma_{X=x}R)\right)$$ 
Our final output is $\bigcup_{x \in \textsf{vals}(X)} (\pi_{\mO} x) \times J_x$. For each relation $R \in \mR_X$, let $d_R$ be the maximum value in bucket $c(R,\left\lbrace X \right\rbrace)$. So $|\textsf{vals}(X)| \leq \min_{R \in \mR_X} \frac{|R|}{d_R}$. Secondly, in each reduced problem $J_x$, the size of each reduced relation $\pi_{\mA \setminus \left\lbrace X \right\rbrace }\sigma_{X=x}R$ for $R \in \mR_X$ reduces to at most $d_R$. Let $G'$ denote the reduced relations, degrees, and output attributes for $J_x$. This gives us:

\noindent
{\bf Bound:}
$Q(G) \leq \left(\mathrm{min}_{R \in \mR_X} \frac{|R|}{d_R}\right) Q(G')\text{ , }P(G) \leq \left(\mathrm{min}_{R \in \mR_X} \frac{|R|}{d_R}\right) P(G')$

\paragraph*{Light:} 
{\bf Input:} $G$, An attribute set $X$

\noindent
{\bf Procedure:} The light transform reduces the number of relations in $G$. Define $\mR_X = \left\lbrace R \in \mR(c) \mid \attr(R) \subseteq X \right\rbrace$. We compute $R_X = \Join_{R \in \mR(c)} \pi_{X}R$. This subjoin is computed using a sequential version of the parallel technique in Section~\ref{sec:nprr-parallelization-improved}. Hence it takes time equal to the DBP bound on that join. Then we delete relations in $\mR_X$ from $G$, and add $R_X$ into $\mR_G$. The degrees for attributes in $R_X$ can be computed in terms of degrees in the relations from $\mR_X$. As long as $|\mR_X| > 1$, this gives us a reduced problem $G'$. $\mO$ stays unchanged for the reduced problem. The size of relation $R_X$ can be upper bounded using the DBP bound as well. Let $\DBP(G,X)$ denote this bound.

\noindent
{\bf Bound:} $Q(G) \leq \DBP(G,X) + Q(G')\text{ , }P(G) \leq \DBP(G,X) + P(G')$

%Trying to work out if R_A degrees are clear. Imagine it is intersection of 2 rels. And we have attribute sets, A, B, C, D, E, F, where R_1 is ABCD and R_2 is CDEF. And we want to find degree of ACE. . Effectively reduces the size of one original relation (the one which does not achieve min for C/D). But after that, it seems like a perfect join.

\paragraph*{Split:} 
{\bf Input:} $G$, An {\em articulation set} $S$ of attributes~\cite{Gross:2013:HGT:2613412} such that there are joins $G_1, G_2$ whose attribute sets have no attribute outside $S$ in common, and $\mR_G \subseteq \mR_{G_1} \cup \mR_{G_2}$. Also, $S$ satisfies either (i) $S \subseteq \mO$, or (ii) $\mO \subseteq \bigcup_{R \in \mR_{G_2}} \attr(R)$.

\noindent
{\bf Procedure:} We compute $R_S = \pi_{S} \left(\Join_{R \in \mR_{G_1}} R \right)$. This takes time $P(G'_1)$, where $G'_1$ is like $G_1$ but with $\mO_{G'_1} = S$. Let $J_2 = \left(\Join_{R \in \mR_{G_2}} R\right) \Join R_S$.
If $\mO \subseteq \bigcup_{R \in \mR_{G_2}} \attr(R)$, then we compute and output $\pi_{\mO} J_2$, and we are done. This step costs $P(G_2)$.
Otherwise, $S \subseteq \mO$. We compute $O_2 = \pi_{\mO} J_2$. Each tuple in $O_2$ has a matching output tuple for $G$. Then we set $R_S = R_S \cap \pi_{S} O_2$ and compute $O_1 = \pi_{\mO}(\Join_{R \in \mR_{G_1}} R \Join R_S)$. Then for each tuple $t \in R_S$, we take each pair of matching tuples $t_1 \in O_1$, $t_2 \in O_2$ and output $t_1 \Join t_2$. Let $G''_1$ be like $G_1$, but with $\mO_{G''_1} = \mO \cap \left(\bigcup_{R \in \mR_{G_1}} \attr(R)\right)$, and $G''_2$ be defined similarly. This gives us:

\noindent 
{\bf Bound:} If $S \subseteq \mO$, then $Q(G) \leq P(G'_1) + Q(G''_1) + Q(G''_2)$ \\
If $\mO \subseteq \bigcup_{R \in \mR_{G_2}} \attr(R)$, then $P(G) \leq P(G'_1) + P(G_2)$.

\subsubsection{Combining the Transforms}
Once we know the transforms, the DARTS algorithm is quite straightforward. It considers all possible sequences of transforms that can be used to solve the problem, and picks the one that gives the smallest upper bound on $Q(G)$. The number of such transform sequences is exponential in the number of attributes and relations, but constant with respect to data size. The $P$ and $Q$ values of various $G$s can be computed recursively given a degree configuration. The $G'$ obtained in each recursive step itself specifies a degree configuration, over a smaller problem. The degrees in $G'$ can be computed in terms of degrees in $G$. Note that in some cases, we do not have cost bounds available e.g. we do not have a $P$ bound for the Split transform when $S \subseteq \mO$. This is a part of the DARTS algorithm. DARTS only considers performing a transform when it can upper bound the resulting cost.

We show that DARTS can be used to recover existing results on sequential joins. 

\begin{proposition}\label{prop:nprr}
If we compute the join using a single Light transform, our total cost is $\leq$ the AGM bound, thus recovering the result of the NPRR algorithm~\cite{Ngo:2012:WOJ:2213556.2213565}.
\end{proposition}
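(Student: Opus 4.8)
The plan is to show that a single Light transform applied to the entire join reduces the problem to computing one subjoin $R_X = \Join_{R \in \mR} \pi_X R$ (with $X = \mA$, so the projections do nothing), and that the cost of this transform — the $\DBP$ bound — is itself bounded by the AGM bound. Since the Light transform with $X = \mA$ collapses all relations into a single relation, the resulting $G'$ is trivial and its $Q$ and $P$ values are zero (or absorbed into $\OUT$), so the total cost equals $\DBP(G,\mA)$. The claim then follows provided $\DBP(G,\mA) \in O(\AGM(\mR))$.

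**First** I would instantiate the Light transform with $X = \mA$. By definition $\mR_X = \{R \in \mR(c) \mid \attr(R) \subseteq \mA\} = \mR$, so every relation participates, and we directly compute $R_X = \Join_{R \in \mR} R$ — the full join. The bound stated for the Light transform gives $Q(G) \leq \DBP(G,\mA) + Q(G')$, and since after the transform there is a single relation with no further join to perform, $Q(G') = 0$. Thus the entire cost is $\DBP(G,\mA)$, the time to compute this one subjoin via the sequential version of the parallel technique of Section~\ref{sec:nprr-parallelization-improved}.

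**The key step** is then to argue that the $\DBP$ bound on this subjoin is $O(\AGM(\mR))$. Here I would invoke Theorem~\ref{thm:gep-agm}, which states that the DBP bound is smaller than the AGM bound. Since $\DBP(G,\mA)$ is exactly the DBP bound on the full join $\Join_{R \in \mR} R$, we immediately get $\DBP(G,\mA) \in O(\AGM(\mR))$. Summing over all degree configurations $c \in \mC_L$ only multiplies the bound by a constant (the number of configurations is constant with respect to $\IN$), so the total cost across configurations remains $O(\AGM(\mR))$. This matches the worst-case-optimal runtime guarantee of the NPRR algorithm~\cite{Ngo:2012:WOJ:2213556.2213565}, recovering its result.

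**The main obstacle** I anticipate is the reliance on Theorem~\ref{thm:gep-agm} (DBP $\leq$ AGM) and on the fact that the sequential realization of the MapReduce technique in Section~\ref{sec:nprr-parallelization-improved} genuinely runs in time proportional to the DBP bound rather than merely producing an output of that size; I would need the subjoin computation itself, not just its output, to respect this time budget. Granting those two facts — both of which are established elsewhere in the paper — the argument is essentially a matter of correctly instantiating the transform with $X = \mA$ and checking that no residual work remains after collapsing the relations. The potential subtlety is ensuring that degree-uniformization overhead (Steps $1$–$3$, $5$ of Algorithm~\ref{algo:degree-uniformization}) is absorbed into the $O(\IN + \OUT)$ term and does not inflate the exponent, which follows since those steps cost $O(\IN)$ and $O(\OUT)$ respectively.
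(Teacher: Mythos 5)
Your proposal is correct and follows essentially the same route as the paper's own proof: apply a single Light transform with $X = \mA$, observe that $\DBP(G,\mA)$ is the DBP bound on the full join and the residual problem is a single relation, then invoke Theorem~\ref{thm:gep-agm} to bound DBP by AGM. Your added remarks about absorbing the degree-uniformization overhead and summing over the constantly many degree configurations are fine and consistent with what the paper leaves implicit.
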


\begin{proposition}\label{prop:yannakakis}
If we successively apply the Split transform on an $\alpha$-acyclic join, with $G_1$ being an ear of the join in each step, then the total cost of our algorithm becomes $O(\IN + \OUT)$, recovering the result of Yannakakis' algorithm~\cite{Yannakakis81}.
\end{proposition}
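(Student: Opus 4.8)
The plan is to induct on the number of relations, realizing each successive Split with $G_1$ an ear as one step of Yannakakis' semijoin reduction. First I would fix $\mO = \mA$ (the full join) and take a join tree of $\Join_{R \in \mR} R$, which exists since the query is $\alpha$-acyclic. Pick a leaf $R_1$ with parent $R_2$ and set $S = \attr(R_1) \cap \attr(R_2)$; by the running-intersection property $S = \attr(R_1) \cap \bigcup_{R \neq R_1} \attr(R)$, so the attributes in $\attr(R_1) \setminus S$ occur only in $R_1$. Taking $G_1 = \{R_1\}$ and $G_2$ to be the remaining relations, I would check that this is a legal articulation set for a Split: $G_1$ and $G_2$ share no attribute outside $S$, $\mR_G = \mR_{G_1} \cup \mR_{G_2}$, and since $\mO = \mA$ we have $S \subseteq \mO$, so condition (i) holds and we use the bound $Q(G) \leq P(G'_1) + Q(G''_1) + Q(G''_2)$.

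Next I would evaluate the three terms. As $G_1$ is the single relation $R_1$, computing $R_S = \pi_S R_1$ costs $P(G'_1) = O(|R_1|)$, and computing $O_1 = \pi_{\mO}(R_1 \Join R_S)$ (a semijoin of one relation) costs $Q(G''_1) = O(|R_1|)$ with $|O_1| \leq |R_1|$. The term $Q(G''_2)$ is the join over $\mR_{G_2}$ together with the derived relation $R_S$. Since $R_S$ has schema $S \subseteq \attr(R_2)$, attaching it to the witness $R_2$ keeps the residual query $\alpha$-acyclic (the join tree with the leaf $R_1$ removed, with $R_S$ attached to $R_2$, is a join tree for $G''_2$) and does not raise the input size beyond $O(\IN)$, because $|R_S| \leq |R_1|$. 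Hence $G''_2$ is an $\alpha$-acyclic full join on one fewer original relation, and the induction hypothesis applies; the base case of a single relation is immediate with $Q = O(|R|)$.

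The key step, and the main obstacle, is to keep every intermediate result bounded by the global $\OUT$ so that the recursion does not incur an uncontrolled intermediate-join blowup. Here I would prove $|O_2| \leq \OUT$: since $J_2 = (\Join_{R \in \mR_{G_2}} R) \Join R_S$ is semijoined with $R_S = \pi_S R_1$, every tuple of $O_2$ has an $S$-value occurring in $R_1$ and therefore extends to at least one tuple of the full join; distinct tuples of $O_2$ differ on $\attr(G_2) \subseteq \mO$ and so map to distinct output tuples, giving an injection of $O_2$ into the output. Applying this at each level shows that the sub-join output at every recursion step is at most $\OUT$, so the backward-semijoin and final cross-product work at each level is $O(|R_i| + \OUT)$ rather than the size of an unreduced sub-join. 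Because the number of relations is a constant with respect to the data, the recurrence $Q(G) \leq O(|R_1|) + Q(G''_2)$ telescopes over the elimination order to $\sum_i O(|R_i|) = O(\IN)$, and the per-level $O(\OUT)$ contributions sum to $O(\OUT)$; hence the total runtime is $O(\IN + \OUT)$.

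Finally I would confirm correctness. The forward semijoin (folding $R_S = \pi_S R_1$ into $J_2$) and the backward semijoin ($R_S \leftarrow R_S \cap \pi_S O_2$ followed by $O_1 = \pi_{\mO}(R_1 \Join R_S)$) together discard exactly the dangling tuples on each side of $S$, so pairing the matching $S$-values of $O_1$ and $O_2$ reconstructs $\Join_{R \in \mR} R$ without spurious tuples. This is precisely the two-pass semijoin-reduction behaviour of Yannakakis' algorithm, so the construction recovers its $O(\IN + \OUT)$ guarantee.
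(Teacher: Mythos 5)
Your overall strategy---induction on the number of relations, a Split at an ear $R_1$ with $S = \attr(R_1) \cap \attr(R_2)$, and the bound $Q(G) \leq P(G'_1) + Q(G''_1) + Q(G''_2)$ since $S \subseteq \mO = \mA$---is exactly the paper's, and your additional argument that $|O_2| \leq \OUT$ (every tuple of $J_2$ extends to an output tuple because the attributes of $R_1$ outside $S$ occur nowhere else) is correct; the paper leaves that point implicit in the definition of $Q$ and in the stated Split bound. However, there is a gap in your inductive step. After the Split, $G''_2$ consists of the $n-1$ relations of $\mR_{G_2}$ \emph{plus} the new relation $R_S$, so it has $n$ relations, not $n-1$. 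Your induction measure (``number of relations'') therefore does not decrease, and ``the induction hypothesis applies'' is not justified as written; saying that $G''_2$ has ``one fewer \emph{original} relation'' does not rescue the argument unless you change the measure and re-verify the base case and the ear-existence step for joins carrying such auxiliary relations.

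The paper closes exactly this gap with one more transform: inside $G''_2$ it performs a Light transform with $X = \attr(R_2)$. Since $\attr(R_S) = S \subseteq \attr(R_2)$, both $R_S$ and $R_2$ fall into $\mR_X$, the DBP bound on this subjoin is at most $|R_2| = O(\IN)$, and the two relations are replaced by a single relation $R_X$ of size $O(\IN)$. The result is a genuinely $(n-1)$-relation $\alpha$-acyclic join to which the induction hypothesis applies, and all cost terms are $O(\IN)$, giving $Q(G) = O(\IN)$ and hence a total runtime of $O(\IN + \OUT)$. Your proof can be repaired the same way, or by observing that $R_S$ is itself an ear of $R_2$ (its schema is contained in $\attr(R_2)$) and charging one extra Split to eliminate it; but some such step is needed before the recursion is well founded.
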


\begin{proposition}\label{prop:fhw}
If a query has fractional hypertree width equal to fhw, then using a combination of Split and Light transforms, we can bound the cost of running DARTS by $O(\IN^{fhw} + \OUT)$, recovering the fractional hypertree width result.
\end{proposition}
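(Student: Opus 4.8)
The plan is to realize an optimal generalized hypertree decomposition (GHD) of the query inside DARTS, using Light transforms to materialize the bags and Split transforms to stitch them together exactly as in the Yannakakis recovery. First I would fix a GHD $(T,\chi,\lambda)$ of the query hypergraph whose width equals fhw: here $T$ is a tree, $\chi(t)\subseteq\mA$ is the bag of attributes at node $t$, and $\lambda(t)\subseteq\mR$ is a set of relations that fractionally covers $\chi(t)$ with total weight at most fhw. Since we are recovering the pure join result we take $\mO=\mA$. The strategy is to first materialize one relation per bag, obtaining an $\alpha$-acyclic instance whose join is the original join, and then to process that acyclic instance with Split transforms as in Proposition~\ref{prop:yannakakis}.

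For the materialization step I would apply a Light transform with attribute set $X=\chi(t)$ at each node $t$. By definition this builds the bag relation $R_{\chi(t)}=\Join_{R\in\mR(c)}\pi_{\chi(t)}R$ in time equal to, and of size bounded by, the $\DBP$ bound on that subjoin. The crucial cost estimate is that this $\DBP$ bound is $O(\IN^{fhw})$: by Theorem~\ref{thm:gep-agm} the $\DBP$ bound is $O(\AGM)$, and the AGM bound of the subjoin projected onto $\chi(t)$ is at most $\IN^{\rho^{*}(\chi(t))}$, where $\rho^{*}(\chi(t))$ is the fractional cover number of the bag. Since $\lambda(t)$ witnesses $\rho^{*}(\chi(t))\le\textsf{fhw}$ and projection only shrinks relations ($|\pi_{\chi(t)}R|\le|R|\le\IN$), each Light transform costs and yields a relation of size $O(\IN^{fhw})$. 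Because the number of bags depends only on the query and not on $\IN$, the total materialization cost is $O(\IN^{fhw})$ and the total input size of the materialized instance is $\IN'=O(\IN^{fhw})$.

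It then remains to process the materialized instance. The relations $\{R_{\chi(t)}\}$ live on the attribute sets $\{\chi(t)\}$, which form the tree $T$ and are therefore $\alpha$-acyclic; standard GHD correctness gives $\Join_{t}R_{\chi(t)}=\Join_{R\in\mR}R$, since every original relation $R$ is subsumed by the bag covering it (for that bag $t$ one has $\pi_{\attr(R)}R_{\chi(t)}\subseteq R$) while the running-intersection property of $T$ reconstructs every global tuple. Applying the Split transform with $G_1$ an ear of $T$ at each step, exactly as in Proposition~\ref{prop:yannakakis}, then processes this acyclic join in time $O(\IN'+\OUT)=O(\IN^{fhw}+\OUT)$. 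Adding the materialization cost leaves the bound unchanged, so DARTS runs in $O(\IN^{fhw}+\OUT)$.

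I expect the main obstacle to be pinning the Light-transform cost to exactly $\IN^{fhw}$ rather than something larger, which requires two points of care. First, one must check that joining \emph{all} relations projected onto $\chi(t)$ (not merely those in $\lambda(t)$) does not break the $O(\IN^{fhw})$ estimate for the bag subjoin; this holds because extra relations only add covering options to the AGM linear program, so its optimum can only decrease. Second, one must account for degree-uniformization: DARTS runs on each degree configuration separately, but for constant $L$ the number of configurations in $\mC_L$ is constant in $\IN$, and the $\DBP$ bound already controls the cost summed over configurations, so the per-configuration $O(\IN^{fhw})$ bound sums to $O(\IN^{fhw})$. A minor bookkeeping check is that the order in which bags are materialized, together with the deletion rule of the Light transform (and the treatment of bags covering a single relation, for which no materialization is needed), leaves precisely the tree-structured instance on which the Split-based pass applies.
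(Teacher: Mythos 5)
Your proposal is correct and follows essentially the same route as the paper's own proof: materialize each bag of a width-fhw GHD via a Light transform (costing at most the DBP bound, hence at most the AGM bound $\IN^{fhw}$ of the bag), then process the resulting $\alpha$-acyclic instance with Split transforms via Proposition~\ref{prop:yannakakis}. Your additional remarks on correctness of the bag join, on projections only shrinking the AGM program, and on summing over the constantly many degree configurations are sound elaborations of details the paper leaves implicit.
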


\begin{proposition}\label{prop:ayz}
A cycle join of length $n$ with all relations having size $N$, can be processed by DARTS in time $O(N^{2 - \frac{1}{1 + \lceil \frac{n}{2}\rceil}} + \OUT)$, recovering the result of the AYZ algorithm~\cite{Alon:1994:FCG:647904.739463}. 
\end{proposition}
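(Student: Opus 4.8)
The plan is to reproduce the Alon--Yuster--Zwick heavy/light strategy inside the degree-configuration machinery: I would fix a degree threshold $\Delta$, classify attributes as \emph{heavy} or \emph{light} within each configuration according to the bucket $c$ assigns them, and exhibit for every configuration a sequence of Heavy/Light/Split transforms whose non-output cost is $O(N^{2-1/(1+\lceil n/2\rceil)})$. Because the sequential algorithm uses $L=O(1)$, each of the constantly-many pairs $(R,A)$ is placed in one of $O(\log_L N)$ buckets, so $|\mC_L|$ is polylogarithmic in $N$; summing the per-configuration bounds therefore costs only a polylog factor that the $O(\cdot)$ notation absorbs, and the final union in Step~$5$ of Algorithm~\ref{algo:degree-uniformization} supplies the $+\OUT$ term. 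Concretely I would take $\Delta = N^{1/(1+\lceil n/2\rceil)}$ and call an attribute heavy in $c$ if its bucket in some incident relation lies above $\Delta$ (so it has at most $N/\Delta$ distinct values there, up to the factor-$L$ slack) and light otherwise (each of its values then extends to $O(\Delta)$ tuples).

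The first and easy case is any configuration in which at least one attribute $X_i$ is heavy. Here I would apply a single Heavy transform on $X_i$: the transform's multiplicative factor is $\min_{R\in\mR_{X_i}}|R|/d_R \le N/\Delta$, and fixing $X_i$ collapses its two incident relations to unary relations, so the length-$n$ cycle becomes a length-$(n-1)$ chain. That chain is $\alpha$-acyclic and, by the same reduction that recovers Yannakakis' algorithm (Proposition~\ref{prop:yannakakis}), is solved in non-output cost $O(N)$ by Split transforms. The configuration's non-output cost is thus $O((N/\Delta)\cdot N)=O(N^2/\Delta)=O(N^{2-1/(1+\lceil n/2\rceil)})$; note this covers every \emph{mixed} configuration as well, since one heavy attribute suffices.

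The remaining case is the all-light configuration, where repeated Heavy transforms would be wasteful (a light attribute has up to $N$ values) and one must instead exploit the bounded branching. Here I would perform a Split at the articulation set $S=\{X_1,X_{1+\lceil n/2\rceil}\}$, which cuts the cycle into two arcs of at most $\lceil n/2\rceil$ edges. Computing the endpoint relation $R_S=\pi_S(\Join_{\mathrm{arc}}R)$ is the meet-in-the-middle step: since every attribute on the arc is light, extending a path along its $\lceil n/2\rceil$ light edges costs $O(N\Delta^{\lceil n/2\rceil})$, which is also the DBP bound used by the Split/Light transforms. Balancing this light cost against the heavy cost via $N^2/\Delta = N\Delta^{\lceil n/2\rceil}$ forces exactly $\Delta=N^{1/(1+\lceil n/2\rceil)}$ and the claimed exponent.

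The main obstacle is this all-light case. Re-attaching $R_S$ to the second arc (the term $Q(G_2'')$ of the Split bound) re-creates a cycle of length $\lfloor n/2\rfloor+1$, so the work does not terminate in one Split; I would have to argue, by induction on the cycle length exactly as AYZ do, that these residual shorter cycles are handled within the same budget. I also expect the delicate bookkeeping to be in verifying that the light-path cost genuinely carries $\lceil n/2\rceil$ factors of $\Delta$ (so that the balance produces the $1+\lceil n/2\rceil$ denominator rather than an accidentally different exponent) and that the projected-arc size in $P(G_1')$ never exceeds this budget. Once the all-light recursion is pinned down, combining it with the one-line heavy case and summing over the polylogarithmically many configurations yields $O(N^{2-1/(1+\lceil n/2\rceil)}+\OUT)$.
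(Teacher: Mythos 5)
Your heavy case coincides with the paper's: one Heavy transform on any attribute of degree $> \Delta$ contributes a factor of at most $N/\Delta$, leaves an $\alpha$-acyclic residual problem with $Q \leq N$, and the polylogarithmically many degree configurations are absorbed by the $O(\cdot)$ notation. The genuine gap is in your all-light case, and it is exactly the obstacle you flag but do not close. After your Split at $S=\{X_1,X_{1+\lceil n/2\rceil}\}$, the terms $Q(G''_1)$ and $Q(G''_2)$ are themselves cycles of length about $\lceil n/2\rceil+1$, one of whose edges is the chord $R_S$, whose size is controlled only by the arc-join bound (up to roughly $N\Delta^{\lceil n/2\rceil-1}$). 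This breaks any clean induction on cycle length in two ways: the residual cycle no longer has uniform relation sizes (already for $n=6$ the residual $4$-cycle has edge sizes $N^{3/2},N,N,N$ against a budget of $N^{7/4}$, and a single Light transform on it costs its AGM bound $N^{2}$, which is over budget), and the degrees inside $R_S$ are not given by the original degree configuration, so you would have to re-uniformize and re-balance a fresh threshold for a non-uniform instance. The appeal to ``induction on the cycle length exactly as AYZ do'' does not rescue this: AYZ's light case is a direct meet-in-the-middle path enumeration with an intersection at the midpoint, not a recursion on shorter cycles, precisely because the residual-cycle recursion does not obviously stay within budget.

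The paper closes the light case without ever creating a residual cycle. It performs $n-2$ Light transforms that alternately grow two arc relations outward from $A_1$ in the two directions around the cycle, where each accumulated relation retains \emph{all} attributes of its arc rather than being projected to the endpoints; the DBP bound of the accumulated relation after $i$ steps is shown to be at most $N\Delta^{i+1}$ by a short induction on the choice of cover. At the end one is left with exactly two relations, $R_l$ on $\{A_1,\ldots,A_{\lceil n/2\rceil}\}$ and $R_r$ on $\{A_1,A_n,\ldots,A_{\lceil n/2\rceil}\}$, each of size within the $N^{2}/\Delta = N^{2-\frac{1}{1+\lceil n/2\rceil}}$ budget. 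Any two-relation join is $\alpha$-acyclic, so Proposition~\ref{prop:yannakakis} finishes in $O(N^{2-\frac{1}{1+\lceil n/2\rceil}}+\OUT)$. If you want to keep your formulation, the fix is to replace the projection to $S$ by Light transforms that keep the interior attributes of the arc, so that the final step is a join of two overlapping relations rather than a shorter cycle with a large chord.
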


These propositions are proved in Appendix~\ref{sec:previous-recovery}. In the next subsection, we present a few of the cases in which we can go {\em beyond} existing results. Since we are primarily interested in joins, the output attribute set $\mO$ below is always assumed to be $\mA$. 

\subsection{Subquadratic Joins}\label{sec:subquadratic-joins}
Now we consider applications of DARTS to the subquadratic joins problem. Analyzing a run of DARTS on a join graph allows us to obtain a subquadratic runtime upper bound in several cases. Appendix~\ref{sec:TCS} mentions a simple extension of the AYZ result to graphs that are trees with cycles embedded in them. We now define a set of graphs for which we have a complete decision procedure to determine if they can be solved in subquadratic time modulo the $3$-SUM problem.
 
\stitle{$1$-series-parallel graphs}
\begin{definition}\label{def:1-series-parallel}
A $1$-series-parallel graph is one that consists of :
\squishlist
\item A source node $X_S$
\item A sink node $X_T$
\item Any number of paths, of arbitrary length, from $X_S$ to $X_T$, having no other nodes in common with each other
\squishend
\end{definition}
Equivalently, a $1$-series-parallel graph is a series parallel graph that can be obtained using any number of series transforms (which creates paths) followed by exactly one parallel transform, which joins the paths at the endpoints. A cycle is a special case of a $1$-series-parallel graph.

\begin{theorem}\label{thm:1-series-parallel-graphs}
For $1$-series-parallel graphs, the following decision procedure determines whether or not the join over that graph can be processed in sub-quadratic time:
\begin{enumerate}
\item If there is a direct edge (path of length one) between $X_S$ and $X_T$, then the join can be processed in sub-quadratic time. Else:
\item Remove all paths of length two between $X_S$ and $X_T$, as they do not affect the sub-quadratic solvability of the join problem. Then
\item If the remaining number of paths (obviously all having length $\geq 3$) is $\geq 3$, then the join cannot be processed in subquadratic time (modulo $3$-SUM). If the number of remaining paths is $< 3$, then the graph can be solved in sub-quadratic time.
\end{enumerate}
\end{theorem}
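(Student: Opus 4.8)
The plan is to split the statement into an upper-bound half (the cases the procedure calls subquadratic) and a lower-bound half (the $\geq 3$ long-path case), and in both halves to begin by running degree-uniformization with a constant bucket range $L$, so that there are only $O(1)$ degree configurations and it suffices to solve a single configuration $c$ at constant-factor overhead (Proposition~\ref{prop:mo-bound} controls the bucketing error). The upper bounds are then obtained by exhibiting explicit DARTS transform sequences and invoking the recoveries of Yannakakis' and the AYZ algorithm already established in Propositions~\ref{prop:yannakakis} and~\ref{prop:ayz}.

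For the easy cores I would first ignore length-two paths. A single $X_S$--$X_T$ path is $\alpha$-acyclic, so Proposition~\ref{prop:yannakakis} gives cost $O(\IN+\OUT)$; two internally disjoint $X_S$--$X_T$ paths concatenate into a single cycle, so Proposition~\ref{prop:ayz} gives a subquadratic bound. This disposes of the $0$, $1$, and $2$ long-path cores. For Case~$1$, a direct edge $R_0(X_S,X_T)$ pins $\pi_{\{X_S,X_T\}}$ of every answer into $R_0$, a set of size $\leq \IN$; I would Split at $S=\{X_S,X_T\}$ and, since each remaining path forms a cycle together with $R_0$, run the cycle subroutine behind Proposition~\ref{prop:ayz} on each such cycle to extract, in subquadratic time, the $R_0$-pairs it connects, intersect these pair-sets across the constantly many paths, and enumerate the surviving path-witnesses in time $O(\OUT)$.

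The genuinely new part of the upper bound is folding the length-two paths back into an easy core without leaving subquadratic time, which is exactly the content of the claim that such paths ``do not affect'' solvability. Here I would treat each middle node $M$ by degree configuration: when $M$'s degree bucket is large there are few distinct $M$-values and a Heavy transform on $M$ is cheap, whereas when it is small the incident relations are absorbed by a Light/Split step, so the accumulated cost over the DARTS transform tree stays $O(\IN^{2-\epsilon}+\OUT)$. I expect the cleanest packaging is to show directly that the whole graph has $m$-width $<2$ and that appending length-two paths never raises the width, which also ties the result to Section~\ref{sec:m-width}.

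The lower bound, and the step I expect to be the main obstacle, is the $3$-SUM reduction for three internally disjoint paths of length $\geq 3$. The idea is to encode the additive equation of a $3$-SUM instance along the three paths, using the two shared endpoints $X_S$ and $X_T$ to enforce consistency, so that any join algorithm running in $O(\IN^{2-\epsilon}+\OUT)$ would solve $3$-SUM in subquadratic time. The difficulty is that every relation must have size $O(\IN)$ and the output must stay small, which rules out the naive ``all pairs with a prescribed difference'' gadget; the fix is the hashing-based form of $3$-SUM hardness, bucketing the universe so that each edge of each path becomes a low-degree (near-matching) relation of size $O(\IN)$ while the number of answers remains $O(\IN)$. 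Padding identity edges lifts the reduction from length exactly three to arbitrary lengths $\ell_i\geq 3$. Finally I would close the dichotomy by justifying the two preprocessing steps: a length-one path is precisely what powers Case~$1$, and a length-two path can be made non-constraining in the reduction (a single middle value joined to all $O(\IN)$ relevant endpoints) and absorbed in the algorithm, so deleting it changes neither side.
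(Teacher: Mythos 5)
Your overall architecture is the same as the paper's: the theorem is proved via three lemmas (direct-edge case, removal of length-two paths, and $3$-SUM/$3$-XOR hardness for three or more long paths), each handled exactly as you outline -- Split at $\{X_S,X_T\}$ with per-path cycle arguments for Case 1, a heavy/light case split on the middle vertex $X_U$ for the length-two reduction (heavy transform when $\degree(X_U) > N^{1-\epsilon/2}$, otherwise a Light transform on $\{X_S,X_U,X_T\}$ producing a subquadratic-size ``edge''), and a linear-hashing reduction from $3$-XOR with separate treatment of heavy buckets for the lower bound. The $m$-width packaging you mention is also in the paper (Appendix E.4), and your remark that the output of the reduction instance need only be subquadratic rather than $O(\IN)$ is the only correction needed there (the expected number of false positives is $N^{2-d}$).

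The one step where your write-up would fail as stated is in Case 1: you propose to ``run the cycle subroutine behind Proposition~\ref{prop:ayz} on each such cycle to extract, in subquadratic time, the $R_0$-pairs it connects.'' The AYZ guarantee is $O(N^{2-\epsilon} + \OUT_{\text{cycle}})$ where $\OUT_{\text{cycle}}$ is the \emph{full} output of that single cycle join, which can be $\Theta(N^2)$ (already for a length-$3$ path plus the edge, i.e.\ a $4$-cycle) even when the projection onto $\{X_S,X_T\}$ -- and the final join output -- is tiny. So you cannot obtain the pair-set by running the cycle algorithm and projecting; you need a bound on $P(G_1')$, the projection cost with no additive output term. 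This is where the paper's Lemma~\ref{lemma:s-t-edge} spends its effort: in a degree configuration where every interior attribute of the path has degree at most $\delta = N^{1/(n+2)}$, a chain of Light transforms costs $N\delta^{n}$; otherwise a Heavy transform on a high-degree interior attribute turns the cycle into a chain, and because the output attributes $\{X_S,X_T\}$ lie in a single relation of that chain, repeated Split transforms peel off the interior attributes at cost $O(N)$ each (using the $P(G) \leq P(G_1') + P(G_2)$ branch of the Split bound), giving $P(G_1') = O(N^2/\delta)$. You already deploy precisely this heavy/light-plus-Split pattern for the length-two middle vertices, so the fix is to apply it here as well rather than invoking Proposition~\ref{prop:ayz} as a black box.
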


Theorem~\ref{thm:1-series-parallel-graphs} establishes the decision procedure for subquadratic solvability of $1$-series-parallel graphs. Appendix~\ref{sec:darts-examples} gives an example of a subquadratic solution for a specific $1$-series-parallel graph, namely $K_{2,n}$, followed by an example on the general bipartite graph $K_{m,n}$. In both these examples, DARTS achieves a better runtime exponent than previously known algorithms. We now make three statements that together imply Theorem~\ref{thm:1-series-parallel-graphs}. They are formally stated and proved in Appendix~\ref{sec:1-series-parallel-proofs} (Lemmas~\ref{lemma:s-t-edge},~\ref{lemma:s-t-2-path},~\ref{lemma:s-t-3-paths}).

\begin{itemize}
\item If we have a $1$-series-parallel graph, which has a direct edge from $X_S$ to $X_T$ (i.e. a path of length $1$), then a join on that graph can be processed in subquadratic time.
\item Suppose we have a $1$-series-parallel graph $G$, which does not have a direct edge from $X_S$ to $X_T$, but has a vertex $X_U$ such that there is an edge from $X_S$ to $X_U$ and from $X_U$ to $X_T$ (i.e. a path of length $2$ from $X_S$ to $X_T$). Let $G'$ be the graph obtained by deleting the vertex $X_U$ and edges $X_SX_U$ and $X_UX_T$. Then the join on $G$ can be processed in subquadratic time if and only if that on $G'$ can be processed in subquadratic time.
\item Let $G$ be any $1$-series-parallel graph which does not have an edge from $X_S$ to $X_T$, but has $\geq 3$ paths of length at $\geq 3$ each, from $X_S$ to $X_T$. Then a join over $G$ can be processed in subquadratic time only if the $3$-SUM problem can be solved in subquadratic time.
\end{itemize}

\subsection{A new notion of width ($m$-width)}\label{sec:m-width}
We demonstrate a way to formulate the DARTS algorithm for joins (without projection) in terms of GHDs. 

For each $A \in \mA$, we define $m_A$ similarly to how we defined $m_\mA$ in Section~\ref{sec:mo-bound}. Specifically, for each $A$, we use the same constraints as in linear program~\ref{lp:mo-bound}, but the objective is set to $\textrm{Maximize } s_A$ instead of $\textrm{Maximize } s_{\mA}$. $m_A$ is then defined as the value of this objective function. We let $\textsf{Prog}(A)$ denote the above linear program for finding $m_A$. Then the size $|\pi_{A}(\Join_{R \in \mR} R)|$ must be bounded by $\IN^{m_A}$ for all $A \subseteq \mA$ (see Appendix Proposition~\ref{prop:mo-bound-general}). Moreover, for any GHD $D = (\mT, \chi)$ of query $\mR$, we can define $\MW(D, \mR)$ to be $\mathrm{max}_{t \in \mT}(m_{\chi(t)})$. And $\MW(\mR)$ is simply the minimum value of $\MW(D,\mR)$ over all GHDs $D$. Thus we have:

\begin{definition}\label{m-width}
The $m$-width of a join query $\Join_{R \in \mR} \mR$ (possibly with non-uniform degrees), is given by $\max_{c \in \mC_2} \MW(\mR(c))$.
\end{definition}

\begin{theorem}\label{thm:mw-join}
A query with $m$-width $\MW$ can be answered in time $O(\IN^{\MW} + \OUT)$.
\end{theorem}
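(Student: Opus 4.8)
The plan is to lift the standard fractional-hypertree-width algorithm (Proposition~\ref{prop:fhw}) to the degree-aware setting, replacing the per-bag AGM bound by the per-bag $m$-bound. First I would preprocess with degree-uniformization at $L=2$, so that the original join splits into the joins $\Join_{R \in \mR(c)} R$ over the configurations $c \in \mC_2$. Since $|\mC_2|$ is constant in $\IN$ and each output tuple lands in exactly one configuration (so $\sum_{c} \OUT_c = \OUT$), it suffices to process each $c$ in time $O(\IN^{\MW} + \OUT_c)$ and take the union, whose cost is $O(\OUT)$ by Step~5 of Algorithm~\ref{algo:degree-uniformization}. Fix such a $c$ and let $D = (\mT,\chi)$ be a GHD attaining $\MW(\mR(c)) = \max_{t \in \mT} m_{\chi(t)} \le \MW$; root $\mT$ arbitrarily.

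The heart of the argument is to materialise, for every bag $t$, the relation $B_t = \pi_{\chi(t)}\!\left(\Join_{R \in \mR(c)} R\right)$. Its cardinality is bounded immediately: applying the generalized output-size bound (Proposition~\ref{prop:mo-bound-general}) with target set $\chi(t)$ gives $|B_t| \le \IN^{m_{\chi(t)}} \le \IN^{\MW}$. The real work is to \emph{compute} $B_t$ within time $O(\IN^{\MW})$, not merely to bound its size. For this I would realise an optimal solution of $\textsf{Prog}(\chi(t))$ as a schedule of DARTS transforms: each type-(iii) constraint $s_{B\cup E} \le s_{A \cup E} + d(A,B,R)$ corresponds to extending a partial projection $\pi_{A \cup E}$ to $\pi_{B \cup E}$ using $R$, executed by a Heavy transform (marginalising the newly added attributes, whose degrees inside configuration $c$ are exactly the bounded quantities furnished by Lemma~\ref{lemma:implicit-degrees}) together with Light transforms on the low-degree relations (cf. Proposition~\ref{prop:nprr}). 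Following the order induced by the LP, every intermediate partial projection $\pi_{A}$ produced has size at most $\IN^{s_A} \le \IN^{\MW}$, so no transform ever touches a relation larger than $\IN^{\MW}$ and the total cost of building $B_t$ is $O(\IN^{\MW})$. Since DARTS searches all transform orders, whose number is constant in $\IN$, it will in particular find one matching this LP solution.

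Once all $|\mT|$ bags are materialised, the $B_t$ form an acyclic instance on $\mT$: because $D$ is a valid tree decomposition covering every relation, $\Join_{t \in \mT} B_t = \Join_{R \in \mR(c)} R$, and each $B_t$, being a projection of the full join, already contains no dangling tuples. I would then run Yannakakis' algorithm along $\mT$ — equivalently, a sequence of Split transforms on the separators $\chi(t)\cap\chi(t')$ of adjacent bags (in the spirit of Proposition~\ref{prop:yannakakis}) — to assemble $\Join_{R \in \mR(c)} R$ in time $O\!\left(\sum_{t} |B_t| + \OUT_c\right) = O(\IN^{\MW} + \OUT_c)$. Summing over the constantly many $c \in \mC_2$ yields the claimed $O(\IN^{\MW} + \OUT)$ bound.

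I expect the single genuine obstacle to be the per-bag materialisation. The cardinality bound $|B_t| \le \IN^{m_{\chi(t)}}$ is handed to us by Proposition~\ref{prop:mo-bound-general}, but converting the (essentially non-constructive) reasoning behind that bound into an explicit transform schedule that never inflates an intermediate relation — and checking, via the degree guarantees of Lemma~\ref{lemma:implicit-degrees}, that every Heavy/Light step's cost is dominated by $\IN^{\MW}$ — is the delicate part. Everything else (the reduction over $\mC_2$ and the final acyclic Yannakakis pass) is routine given the DARTS recovery results already established in Propositions~\ref{prop:nprr}--\ref{prop:fhw}.
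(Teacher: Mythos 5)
There is a genuine gap, and it sits exactly where you suspected: the per-bag materialisation. You propose to compute the \emph{exact} projection $B_t = \pi_{\chi(t)}(\Join_{R \in \mR(c)} R)$ for every bag, and you later rely on this exactness when you assert that each $B_t$ ``already contains no dangling tuples'' and that $\Join_{t} B_t = \Join_{R \in \mR(c)} R$ without any further reduction. But the construction you describe --- following a chain of tight constraints of $\textsf{Prog}(\chi(t))$ and extending a partial projection $\pi_{A\cup E}$ to $\pi_{B \cup E}$ via joins with $\pi_B(R)$ --- only enforces consistency with the relations that happen to appear along that chain. It therefore produces a \emph{superset} of $O_{\chi(t)} = \pi_{\chi(t)}(\Join_{R} R)$ of size at most $\IN^{m_{\chi(t)}}$, not the projection itself; computing the exact projection of the full join onto a bag is in general as hard as the original problem, so no schedule of transforms confined to the bag can deliver it. This is precisely how Proposition~\ref{prop:mo-bound-general} is actually proved in the paper (it is constructive, not ``essentially non-constructive'' as you describe it, but it constructs a container $R_A \supseteq O_A$, not $O_A$). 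With supersets in hand, your final step breaks: $\Join_t B_t$ can strictly contain the true output, and Yannakakis' semijoin passes among the bags do not repair this, because they only enforce mutual consistency of the $B_t$'s, not consistency with the input relations.

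The paper closes this gap with one extra step that your proposal omits: after materialising each container $R_{\chi(t)}$, it semijoins $R_{\chi(t)}$ with $\pi_{\chi(t)}(R)$ for every $R \in \mR$. Since every relation is covered by some bag of the GHD, this forces $\Join_t R'_{\chi(t)} = \Join_{R} R$ while only shrinking the bags, and only then is Yannakakis run. Your surrounding scaffolding --- handling each degree configuration $c \in \mC_2$ separately, bounding each bag by $\IN^{m_{\chi(t)}} \le \IN^{\MW}$ via Proposition~\ref{prop:mo-bound-general}, and finishing with an acyclic pass over the tree --- matches the paper's proof, and the size bounds on all intermediate relations along the LP chain are fine. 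The fix is therefore small (weaken ``exact projection'' to ``superset of the projection'' and insert the semijoin reduction against the original relations), but as written the argument asserts a false premise at the step you yourself identified as the crux.
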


This theorem lets us recover all our subquadratic joins results as well. That is, for the $1$-series-parallel graphs that have a subquadratic join algorithm (as per Theorem~\ref{thm:1-series-parallel-graphs}), we can construct a GHD that has $m$-width less than $2$ (see Appendix~\ref{sec:ghd-darts-recovery}). 

We can show the $\MO$ bound to be better than the DBP bound (and consequently, the AGM bound, as stated in Theorem~\ref{thm:mo-agm} earlier).
\begin{theorem}\label{thm:mo-bound}
For any join query $\mR$, and any degree configuration $c \in \mC_2$, $\MO(\mR(c)) \leq \DBP(\mR(c),2) + |C|\log(2)$, where $C$ is the cover used in the DBP bound.
\end{theorem}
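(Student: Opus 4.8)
The plan is to exploit the linear-programming structure of the $\MO$ bound directly. Since $m_{\mA}=\MO(\mR(c))$ is the optimum of the \emph{maximization} program~\ref{lp:mo-bound}, any valid chain of its constraints that telescopes into a bound on $s_{\mA}$ already certifies an upper bound on $m_{\mA}$. First I would make this precise by reading program~\ref{lp:mo-bound} as a system of difference constraints: with $s_{\emptyset}=0$ fixed, constraint (iii) reads $s_{B\cup E}-s_{A\cup E}\le d(A,B,R)$ and constraint (ii) reads $s_A-s_B\le 0$ for $A\subseteq B$. By the standard difference-constraint/shortest-path equivalence, the maximum of $s_{\mA}$ equals the shortest-path distance from $\emptyset$ to $\mA$ in the weighted digraph on subsets of $\mA$ whose edges are an edge $(A\cup E)\to(B\cup E)$ of weight $d(A,B,R)$ for each instance of (iii), together with a free edge $B\to A$ of weight $0$ for each $A\subseteq B$. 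Consequently it suffices to \emph{exhibit one} directed path from $\emptyset$ to $\mA$ of total weight at most $\DBP(\mR(c),2)+|C|\log 2$; this is exactly the type of telescoping argument already used in the worked triangle example of Section~\ref{sec:mo-bound}.

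Next I would construct such a path from the cover $C$ underlying the DBP bound of Section~\ref{sec:nprr-parallelization-improved}. The cover $C$ designates a sequence of (sub)relations that together touch every attribute of $\mA$, and the DBP bound is the sum over $C$ of the corresponding bucketed degree terms. I would order the elements of $C$ and accumulate attributes greedily, maintaining the set $S$ of attributes covered so far (starting at $\emptyset$): each cover element contributes an up-step from $S$ to $S$ together with its new attributes, realized by a single constraint~(iii) in which $E$ is the already-covered context and $A\subseteq B\subseteq\attr(R)$ is the relevant sub-schema of that relation. Because $C$ covers all of $\mA$ the path reaches $\mA$, and because each step consumes exactly one relation's degree information the path carries exactly $|C|$ weighted edges, one per cover element.

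It then remains to compare weights term by term. The $\MO$ constraints use $d(A,B,R)=\log(d_{\pi_{B}(R),A})$, the log of the \emph{actual} maximum degree in the corresponding relation of $\mR(c)$, whereas each DBP summand uses the bucket-based degree estimate coming from the configuration $c$. With $L=2$, a degree lying in a bucket $[2^{l},2^{l+1})$ is estimated by a bucket endpoint, so the actual and estimated logarithms differ by at most $\log 2$; Lemma~\ref{lemma:implicit-degrees} supplies precisely this within-bucket slack for the sub-relation degrees that appear as $d(A,B,R)$. Summing the per-edge error over the $|C|$ edges of the path yields the additive $|C|\log 2$, and combining with the first paragraph gives $m_{\mA}\le \DBP(\mR(c),2)+|C|\log 2$.

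The main obstacle is the faithful correspondence in the second step: I must verify that the attribute sets produced while accumulating the cover always form legitimate triples $A\subseteq B\subseteq\attr(R)$ together with a context $E$ of already-covered attributes, so that each DBP summand is matched by a genuine constraint~(iii) of program~\ref{lp:mo-bound} rather than merely an informal degree product. This is where the precise definition of the DBP cover and its linear program must be invoked, and where the factor-$2$ bookkeeping has to be performed locally---one $\log 2$ per cover element---rather than collapsed into a single global constant. The remaining manipulations (passing to the shortest-path characterization and summing the per-edge slack) are routine once this correspondence is established.
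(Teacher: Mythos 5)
Your proposal follows essentially the same route as the paper's proof: the paper builds exactly your telescoping chain, setting $B_j=\bigcup_{i\le j}A_i$ and applying constraint (iii) of Linear Program~\ref{lp:mo-bound} with context $E=B_{j-1}$, $A=A_j\setminus E_j$, $B=A_j$ to obtain $s_{B_j}\le s_{B_{j-1}}+\log\bigl(d_{\pi_{A_j}(R_j),A_j\setminus E_j}\bigr)$, and then telescopes from $B_0=\emptyset$ to $B_{|C|}=\mA$. One correction to your final step: the per-edge $\log 2$ is not within-bucket slack supplied by Lemma~\ref{lemma:implicit-degrees} (both programs refer to the same actual maximum degrees $d_{\pi_{A}(R),A'}$); it comes from the explicit division by $L=2$ in the constraints of Linear Program~\ref{lp:degree-packing-primal}. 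Concretely, you instantiate that constraint with $A'=E_j$ to get $\sum_{a\in E_j}v_a\ge\log\bigl(d_{\pi_{A_j}(R_j),A_j\setminus E_j}\bigr)-\log 2$, and since the blocks $E_j$ partition $\mA$, summing over $j$ bounds the path weight by $\sum_{a\in\mA}v_a+|C|\log 2=\DBP(\mR(c),2)+|C|\log 2$.
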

Note that since logarithms are to the base $\IN$, the $|C|\log(2)$ term is negligible even though it goes in the exponent of the bound i.e. its exponent is a constant. Theorems~\ref{thm:mw-join} and~\ref{thm:mo-bound} let us recover all the results of the DARTS algorithm (see Appendix~\ref{sec:ghd-darts-recovery}). 

The theorems also imply that our new notion of width ($m$-width) is tighter than fhw. Appendix~\ref{sec:widths-comparison} compares $m$-width to submodular width (which, barring $m$-width, is the tightest known notion of width applicable to general joins). Appendix~\ref{sec:widths-comparison} shows examples where $m$-width is tighter than submodular width, but we do not know in general if $m$-width is tighter than submodular width. 

Appendix~\ref{sec:submodular-width-lower-bound} shows that while $m$-width $<2$ implies subquadratic solvability, the converse is not true; we show an example join which has $m$-width and submodular width $= 2$ but can be solved in subquadratic time. Thus known notions of width do not fully characterize subquadratically solvable graphs.

\section{Parallel Join Processing}\label{sec:nprr-parallelization-improved}
Like in sequential settings, degree-uniformization can be applied in a MapReduce setting. We first present the DBP bound, which is a bound on output size that is tighter than AGM bound (but not tighter than $\MO$), and characterizes the complexity of our parallel algorithm. Then we present a $3$-round MapReduce algorithm whose cost equals the DBP bound at the highest level of parallelism.

\paragraph*{The DBP Bound}
We start by defining a quantity called the Degree-based packing (DBP). 
\begin{definition}\label{def:generalized-packing}
Let $\mR$ be a set of relations, with attributes in $\mA$. Let $C$ denote a cover i.e. a set of pairs $(R, A)$ such that $R \in \mR$, $A \subseteq \attr(R)$, and $\bigcup_{(R,A) \in C} A = \mA$. Let $L > 1$. Then, consider the following linear program for $C, L$.
\begin{lp}\label{lp:degree-packing-primal}
$$\textrm{Minimize } \sum_{a \in \mA} v_a \textrm{ such that } \forall (R,A) \in C , \forall A' \subseteq A : \sum_{a \in A'} v_a \geq \log\left(\frac{d_{\pi_{A}(R), A \setminus A'}}{L}\right)$$
\end{lp}
If $O_{C,L}$ is the maximum objective value of the above program, then we define $\DBP(\mR, L)$ to be $\min_{C} O_{C,L}$ where the minimum is taken over all covers $C$. 
\end{definition}

\begin{proposition}\label{prop:dbp-bound}
Let $L>1$ be a constant. Then the output size of $\Join_{R\in \mR} R$ is in $O(\IN^{\DBP(\mR, L)})$.
\end{proposition}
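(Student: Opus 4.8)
The plan is to fix a cover $C$ achieving $O_{C,L}=\DBP(\mR,L)$ together with an optimal solution $(v_a)_{a\in\mA}$ of Linear Program~\ref{lp:degree-packing-primal}, and to show directly that $|\Join_{R\in\mR}R|\le L^{|C|}\,\IN^{\sum_{a\in\mA}v_a}$. Since $\sum_a v_a=O_{C,L}=\DBP(\mR,L)$ and $L^{|C|}$ is a constant (it does not grow with $\IN$, as $L$ is constant and $|C|$ depends only on the query, not on the data), this yields $|\OUT|=O(\IN^{\DBP(\mR,L)})$. The first move is to reread each LP constraint as a degree bound: exponentiating (recall all logarithms are base $\IN$), the constraint for $(R,A)\in C$ and $A'\subseteq A$ becomes $d_{\pi_A(R),\,A\setminus A'}\le L\prod_{a\in A'}\IN^{v_a}$, i.e. inside the projected relation $\pi_A(R)$, fixing the coordinates $A\setminus A'$ leaves at most $L\prod_{a\in A'}\IN^{v_a}$ completions on $A'$.

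Next I would enumerate the output along a chain built from the cover. Write $C=\{(R_1,A_1),\dots,(R_k,A_k)\}$, set $B_j=A_1\cup\dots\cup A_j$ with $B_0=\emptyset$, and let $A'_j=A_j\setminus B_{j-1}=B_j\setminus B_{j-1}$. Because $C$ is a cover we have $B_k=\mA$, and the blocks $A'_1,\dots,A'_k$ form a partition of $\mA$. I would bound the projections of $\OUT=\Join_{R\in\mR}R$ incrementally, proving $|\pi_{B_j}(\OUT)|\le |\pi_{B_{j-1}}(\OUT)|\cdot L\prod_{a\in A'_j}\IN^{v_a}$. Telescoping over $j=1,\dots,k$ and using $B_k=\mA$ then gives $|\OUT|\le L^{k}\prod_{a\in\mA}\IN^{v_a}=L^{|C|}\IN^{\sum_a v_a}$, where the product over the partition $\{A'_j\}$ collapses to a single $\IN^{\sum_{a\in\mA}v_a}$ since each attribute is counted exactly once (so the sign of the individual $v_a$ is irrelevant).

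The heart of the argument, and the step I expect to need the most care, is the per-level inequality. Fix $u\in\pi_{B_{j-1}}(\OUT)$; its restriction to $A_j\cap B_{j-1}=A_j\setminus A'_j$ is determined. Any extension of $u$ to $B_j$ lying in $\pi_{B_j}(\OUT)$ must, on the coordinates $A_j$, agree with some tuple of $\pi_{A_j}(R_j)$, because every output tuple projects into $R_j$ on $\attr(R_j)\supseteq A_j$. Hence the number of admissible values for the new block $A'_j$ is at most the number of $A'_j$-completions of the fixed $(A_j\setminus A'_j)$-value inside $\pi_{A_j}(R_j)$, which is at most $d_{\pi_{A_j}(R_j),\,A_j\setminus A'_j}$ by definition of the maximum degree. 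Two points make this valid: first, by construction $A_j\setminus A'_j\subseteq B_{j-1}$, so the value we fixed genuinely pins down the coordinates that the degree quantity conditions on; second, fixing the remaining coordinates of $B_{j-1}$ can only shrink the set of admissible completions, so the degree bound, which conditions on only $A_j\setminus A'_j$, stays an upper bound. Substituting the degree reinterpretation of the constraint for $(R_j,A_j)$ with $A'=A'_j$ supplies the factor $L\prod_{a\in A'_j}\IN^{v_a}$.

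Finally I would assemble the pieces and discharge the constant: the telescoped bound holds for the optimal $(v_a)$, so $|\OUT|\le L^{|C|}\IN^{\DBP(\mR,L)}$, and $L^{|C|}$ is absorbed into $O(\cdot)$ (it is a constant, contributing only the vanishing exponent $|C|\log_\IN L$). An equivalent route replaces the enumeration by an entropy computation: for a uniformly random output tuple $X_\mA$, each constraint gives $H(X_{A'}\mid X_{A\setminus A'})\le \log L+\sum_{a\in A'}v_a$, and the same chain yields $H(X_\mA)=\sum_j H(X_{A'_j}\mid X_{B_{j-1}})\le |C|\log L+\sum_a v_a$, using that conditioning on the larger set $B_{j-1}\supseteq A_j\setminus A'_j$ only lowers entropy; since $H(X_\mA)=\log|\OUT|$ this is the same bound. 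I would present the combinatorial version as the main proof, since it is elementary and mirrors the intuition already given for the $\MO$ bound.
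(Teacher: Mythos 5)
Your proof is correct, but it is not the route the paper takes for this particular proposition: the paper proves Proposition~\ref{prop:dbp-bound} only \emph{implicitly}, as a byproduct of the three-round parallel algorithm --- the Shares step with shares $\IN^{v_a}$ uses $\IN^{\sum_a v_a}=\IN^{\DBP(\mR,L)}$ processors each holding $O(L)$ tuples (Lemma~\ref{lemma:shares-communication-load}), so the join of the cover subrelations, and hence the final output, has size $O(\IN^{\DBP(\mR,L)})$. Your argument is instead a direct, deterministic counting proof: exponentiate each LP constraint into the degree bound $d_{\pi_A(R),A\setminus A'}\le L\prod_{a\in A'}\IN^{v_a}$, order the cover as $B_j=A_1\cup\dots\cup A_j$, and telescope $|\pi_{B_j}(\Join R)|\le|\pi_{B_{j-1}}(\Join R)|\cdot L\prod_{a\in A'_j}\IN^{v_a}$. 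This is essentially the paper's own machinery for Theorem~\ref{thm:mo-bound} (the induction showing $s_{B_j}\le j\log 2+\sum_{a\in B_j}v_a$ over exactly this chain) fused with Proposition~\ref{prop:mo-bound-general}'s interpretation of the $s$-variables as projection sizes, so the combined route $\OUT\le\IN^{\MO}\le 2^{|C|}\IN^{\DBP}$ is already available in the paper, just not attached to this proposition. What your version buys is self-containment and the avoidance of the probabilistic load analysis (the paper's implicit proof bounds load only with high probability, which requires an extra remark to convert into a deterministic size bound); what the paper's version buys is economy, since the same lemma certifies both the communication cost and the output size. You also correctly handle the two points that need care: the blocks $A'_j$ partition $\mA$ so the unsigned $v_a$ are summed exactly once, and conditioning on all of $B_{j-1}$ rather than just $A_j\setminus A'_j$ only shrinks the completion set.
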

We implicitly prove this result by providing a parallel algorithm whose complexity equals the output size bound at the maximum parallelism level. We can now define the DBP bound. We arbitrarily set $L=2$ for this definition (choosing another constant value only changes the bound by a constant factor). Thus, we define the {\em DBP bound} to be $\sum_{c \in \mC_2} \IN^{\DBP(\mR(c), 2)}$. As a simple corollary, the output size of the join is $\leq$ the DBP bound.

\begin{theorem}\label{thm:gep-agm}
For each degree configuration $c \in \mC_L$, $\IN^{\DBP(\mR(c), L)} \leq \AGM(\mR(c))$. 
\end{theorem}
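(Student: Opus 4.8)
\subsection*{Proof proposal}

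The plan is to prove the inequality one degree configuration at a time: fixing $c\in\mC_L$ and writing $\mR$ for $\mR(c)$, it suffices to show $\DBP(\mR,L)\le\rho^*$, where $\IN^{\rho^*}=\AGM(\mR)$ and $\rho^*=\sum_{R\in\mR}w_R\log(|R|)$ for an optimal fractional cover $\{w_R\}$ of the AGM linear program. Since $\DBP(\mR,L)=\min_C O_{C,L}$ is a \emph{minimum} over covers, I only need to exhibit one cover $C$ that does well; I take $C=\{(R,\attr(R)):w_R>0\}$, the support of the optimal AGM cover equipped with full attribute sets. The goal then reduces to bounding $O_{C,L}\le\rho^*$, i.e.\ to producing a single assignment $\{v_a\}$ that is feasible for Linear Program~\ref{lp:degree-packing-primal} and has $\sum_{a}v_a\le\rho^*$; the minimizing nature of the program then finishes.

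First I would replace the degree terms by projection sizes, and this is where the degree-configuration structure is essential. Within a single configuration the degrees of every value in every $(R,A)$ lie in one bucket, so by Lemma~\ref{lemma:implicit-degrees} they are uniform up to a factor of $L$; concretely $d_{\pi_A(R),A\setminus A'}\le L\cdot |\pi_A(R)|/|\pi_{A\setminus A'}(R)|$, because the maximum degree exceeds the average degree $|\pi_A(R)|/|\pi_{A\setminus A'}(R)|$ by at most a factor $L$. Taking logarithms and using the explicit $1/L$ appearing on the right-hand side of the program, every constraint is implied by the relaxed, degree-free constraint $\sum_{a\in A'}v_a\ge f_R(A)-f_R(A\setminus A')$, where $f_R(B):=\log(|\pi_B(R)|)$. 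Thus it suffices to find $\{v_a\}$ meeting these projection-size constraints, and the factor $1/L$ is exactly what absorbs the gap between maximum and average degree.

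For a single relation the remaining problem is clean: $f_R$ is monotone and submodular (log-projection sizes are subadditive and submodular), so setting $v_a$ equal to the marginal $f_R(\{a_1,\dots,a_i\})-f_R(\{a_1,\dots,a_{i-1}\})$ along any elimination order makes $\sum_{a\in A'}v_a\ge f_R(\attr(R))-f_R(\attr(R)\setminus A')$ hold for every $A'$ (a telescoping argument using that a submodular marginal only grows as its base shrinks), while $\sum_a v_a=f_R(\attr(R))=\log(|R|)$. The main obstacle is to turn this into a global construction across all relations of $C$ at once: I must produce one modular vector $v$ that simultaneously dominates the skew-driven lower bounds of \emph{every} relation, yet whose total stays at the fractional-cover value $\rho^*=\sum_R w_R\log(|R|)$. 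These two requirements pull in opposite directions --- skew can force a large $v_a$ on different attributes for different relations, so a naive fractional average of the per-relation constructions need not remain feasible --- and reconciling them is where Shearer's inequality for the fractional cover $\{w_R\}$ (which bounds $\sum_a v_a$ by $\sum_R w_R\log(|R|)$ whenever $v$ arises as the marginals of a submodular function dominated by the $f_R$ on each $\attr(R)$) must be combined with a global elimination order chosen compatibly with $C$.

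Finally, once such a feasible $v$ with $\sum_a v_a\le\rho^*$ is in hand, minimality gives $O_{C,L}\le\rho^*$, hence $\DBP(\mR,L)\le\rho^*$ and $\IN^{\DBP(\mR,L)}\le\IN^{\rho^*}=\AGM(\mR)$; no summation over $c\in\mC_L$ is needed since the statement is per-configuration. I would stress that the uniformity step cannot be dropped: for a non-uniform relation --- for instance a ``cross'' in which one value of each attribute has degree $\Theta(|R|)$ while all others have degree $1$ --- the program value can exceed the AGM bound, and it is precisely the partitioning into degree configurations that rules this out. I expect the real work of the write-up to lie entirely in the global feasibility argument of the previous paragraph; the linearization and single-relation steps are routine.
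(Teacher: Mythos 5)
Your overall strategy---fix one cover, exhibit a single feasible point $v$ of Linear Program~\ref{lp:degree-packing-primal} with $\sum_a v_a\le\rho^*$, and conclude by minimality over covers---is legitimate, and it is genuinely different from the paper's route: the paper instead starts from the \emph{dual} of the fractional cover program (a maximization of $\sum_a v_a$), adds subrelation constraints with right-hand sides given by an ``effective size'' $S(R,A)$, takes an optimal solution of that constrained maximization, and reads the cover off from the constraints that are \emph{tight} at the optimum before converting the program into the DBP minimization. The difficulty is that your proposal leaves its central step unproved. The single-relation construction (marginals of the submodular $f_R(B)=\log|\pi_B(R)|$ along an elimination order, telescoped via submodularity) is fine, but the theorem needs one modular $v$ that simultaneously satisfies $\sum_{a\in A'}v_a\ge f_R(\attr(R))-f_R(\attr(R)\setminus A')$ for \emph{every} relation in the cover and every $A'$, while keeping $\sum_a v_a\le\sum_R w_R\log|R|$. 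You correctly note that the weighted average of the per-relation solutions need not be feasible, and you then defer the reconciliation to ``Shearer's inequality combined with a global elimination order.'' Shearer's inequality bounds a joint quantity from above by weighted projections; it does not by itself manufacture a modular minorant of several submodular tail functions at once, and no single elimination order makes the per-relation marginal vectors coincide. That reconciliation \emph{is} the theorem, so as written the proof has a hole exactly where its main argument should be. (The paper closes this hole non-constructively: the optimal point of the subrelation-constrained maximization automatically satisfies the required $\ge$ inequalities on its tight sets, and those tight sets form a cover---which need not be your cover $\{(R,\attr(R)):w_R>0\}$.)

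A secondary but real issue is the linearization step. Within one degree configuration, a value of $A\setminus A'$ has degree in a bucket $[L^j,L^{j+1})$ and a value of $A$ has degree in $[L^i,L^{i+1})$, so Lemma~\ref{lemma:implicit-degrees} gives $d_{\pi_{A}(R),A\setminus A'}\le L^{j+1-i}$ while the average number of $A$-extensions per $(A\setminus A')$-value can be as small as $L^{j-i-1}$: the maximum can exceed the average by a factor of $L^2$, not $L$. Hence your claimed inequality $d_{\pi_A(R),A\setminus A'}\le L\cdot|\pi_A(R)|/|\pi_{A\setminus A'}(R)|$ is not justified, and the single $1/L$ on the right-hand side of the program does not absorb the whole gap; each constraint of your relaxed, degree-free system can be weaker than the true constraint by $\log L$. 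For constant $L$ this costs only a constant factor, but Theorem~\ref{thm:gep-agm} is stated as a clean inequality for every $L$, and the paper's proof avoids the loss precisely by carrying the $S(R,A)$ bookkeeping through all nine LP transformations. Any completed version of your argument would need comparable care with these bucket-endpoint factors.
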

We prove this theorem using a sequence of linear program transformations, starting with the AGM bound, and ending with the DBP bound, which each transformation decreasing the objective function value. The key transform is the fifth one, where we switch from a cover-based program to a packing-based program. The proof itself is long and is deferred to Appendix~\ref{proof:gep-agm}. Appendix~\ref{sec:dbp-agm-examples} contains a simple triangle-join example where the DBP bound has a tighter exponent than the AGM bound, and another more general example showing that the DBP bound has a strictly better exponent than AGM for `almost all' degrees. 

\paragraph*{Parallel Join Algorithm} We present our parallel $3$-round join algorithm. The algorithm works at all levels of parallelism specified by load level $L$. Its communication cost matches the DBP bound when $L = O(1)$. We formally state the result, and then provide an example of its performance (with additional examples provided in Appendix~\ref{sec:parallel-algo-examples}).

\begin{theorem}
For any value of $L$, we can process a join in $O(\log_{L}(\IN))$ rounds (three rounds if degrees are already known) with load $O(L)$ per processor and a communication cost of $O(\IN + \OUT + \max_{c \in \mC_L}L \cdot \IN^{\DBP(\mR(c), L)})$.
\end{theorem}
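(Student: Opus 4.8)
The plan is to realize Algorithm~\ref{algo:degree-uniformization} in MapReduce, computing each per-configuration join $J_c$ by a skew-tolerant hypercube (Shares-style) distribution whose share exponents are read off from the DBP linear program. I would first dispose of Step~1: by the degree-computation Lemma it runs in $O(\log_L(\IN))$ rounds at load $O(L)$ with $O(\IN)$ communication, and can be skipped entirely when degrees are already maintained. Since $|\mC_L|$ is constant in $\IN$, it then suffices to bound the cost of Step~4 for a single configuration $c$ and take the maximum over configurations; the degree tags needed to route a tuple to its configuration are available from Step~1, so configuration routing can be folded into the distribution step below.

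For a fixed $c$, I would fix the cover $C$ attaining $\DBP(\mR(c),L)$, solve Linear Program~\ref{lp:degree-packing-primal} to obtain values $v_a$, and give attribute $a$ an independent hash into $p_a = \IN^{v_a}$ buckets. A processor is a point of the grid $\prod_{a \in \mA}[p_a]$, so there are $\prod_a p_a = \IN^{\sum_a v_a} = \IN^{\DBP(\mR(c),L)}$ of them. The algorithm is three rounds when degrees are known: one round maps each tuple of $R(c)$ to every grid point agreeing with it on $\attr(R)$ (and, simultaneously, to its configuration), a second round computes the join of the tuples received at each processor, and a third round unions the per-processor, per-configuration outputs into the final answer at cost $O(\OUT)$. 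When degrees are unknown these three rounds sit on top of the $O(\log_L(\IN))$ rounds of Step~1.

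The heart of the argument is the load bound: I would show that for each cover pair $(R,A)\in C$ the number of tuples of $\pi_A(R(c))$ landing on any one processor is $O(L)$. Fix a processor and an arbitrary split $A = (A\setminus A')\,\cup\,A'$ recording where skew concentrates. Holding the $A\setminus A'$-coordinates fixed, at most $d_{\pi_A(R),A\setminus A'}$ tuples of $\pi_A(R(c))$ share them, and these are spread across the $\IN^{\sum_{a\in A'}v_a}$ buckets of $A'$; the DBP constraint $\sum_{a\in A'}v_a \ge \log(d_{\pi_A(R),A\setminus A'}/L)$ is exactly what forces each such bucket, hence the processor, to receive $O(L)$ of them. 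Quantifying over all $A'\subseteq A$ covers every direction in which skew can align with the grid, so the per-relation load is $O(L)$, and summing over the constantly many relations keeps the total load $O(L)$.

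Given the load bound, the communication for configuration $c$ is at most (number of processors)$\times$(load per processor) $=\IN^{\DBP(\mR(c),L)}\cdot O(L)$; adding the $O(\IN)$ setup and $O(\OUT)$ emission and maximizing over the $O(1)$ configurations yields the claimed $O(\IN+\OUT+\max_{c\in\mC_L}L\cdot\IN^{\DBP(\mR(c),L)})$. The step I expect to be the main obstacle is handling relations outside the cover: the DBP program constrains only cover pairs, so I would argue that every attribute of every relation is covered and that a non-cover relation can be folded into the local computation at each processor --- equivalently, that enlarging $C$ to contain every relation changes the load and communication by at most constant factors --- so its contribution is dominated by the cover relations. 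Verifying that this preserves correctness of $J_c$ (every join tuple is produced on exactly the processor to which its $\mA$-value hashes, because that processor receives every input tuple agreeing with it) while not inflating communication past $O(L\cdot\IN^{\DBP(\mR(c),L)})$ is the delicate point; the round count and the clean $\IN^{\DBP}$ processor count then follow directly from the LP.
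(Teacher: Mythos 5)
Your skeleton matches the paper's: degree-uniformization, then for each configuration a Shares-style distribution with share $\IN^{v_a}$ read off from Linear Program~\ref{lp:degree-packing-primal}, with communication bounded by $L\cdot\IN^{\DBP(\mR(c),L)}$ per configuration. But two steps do not go through as written.

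First, the load bound. Your argument fixes a split $A = (A\setminus A')\cup A'$ and says the $\leq d_{\pi_A(R),A\setminus A'}$ extensions are ``spread across'' the $\IN^{\sum_{a\in A'}v_a}$ hash buckets, so each bucket gets $O(L)$. That is an averaging statement: hashing only guarantees the \emph{expected} number per bucket is $O(L)$, and quantifying over all splits $A'$ does not upgrade it to a with-high-probability bound. The paper closes exactly this gap by computing the $m$-th moment of the load: it classifies $m$-tuples of input tuples by which attribute values are ``new'' at each position (the sets $T_l$), counts such $m$-tuples by $\prod_l d_{\pi_A(R),A\setminus T_l}$, multiplies by the collision probability $\prod_l\prod_{a\in T_l}\IN^{-v_a}$, and uses the LP constraints for each $T_l$ to bound the product by $L^m$. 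You have identified the right constraints, but the concentration argument is the actual content of Lemma~\ref{lemma:shares-communication-load} and is missing from your proposal.

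Second, the non-cover relations, which you correctly flag as the delicate point --- but both fixes you propose fail. Folding a relation $R$ with $(R,\attr(R))\notin C$ into the distribution (sending its tuples to all agreeing processors) breaks the load and communication bounds, because the LP imposes no constraint of the form $\sum_{a\in\attr(R)}v_a\geq\log(|R|/L)$ for non-cover pairs, so nothing controls how many of its tuples land on one processor or how many processors each tuple reaches. And enlarging $C$ to contain every relation is not a constant-factor change: Linear Program~\ref{lp:degree-packing-primal} is a minimization whose constraints are indexed by cover pairs, so adding pairs adds constraints and can strictly \emph{increase} the objective above $\DBP(\mR(c),L)$. The paper's resolution is structurally different and is why the theorem states three rounds rather than one: round $1$ joins only the projections $\pi_A(R)$ for $(R,A)\in C$ via Shares, producing an intermediate relation of size $O(L\cdot\IN^{\DBP(\mR(c),L)})$ that is a superset of the answer; rounds $2$ and $3$ then semijoin every remaining relation against this intermediate result in parallel and intersect the semijoin results, at $O(1)$ load and $O(\IN+\OUT)$ communication. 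Without some such filtering phase your algorithm either violates the load bound or outputs a superset of $J_c$.
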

\begin{proof} (Sketch)

The join consists of the following steps:
\begin{enumerate}
\item Perform degree finding and uniformization using bucket range $L$, as shown in Section~\ref{sec:degree-uniformization}.
\item For each degree configuration, re-compute the degrees, and use them to solve Linear Program~\ref{lp:degree-packing-primal} for each cover. Let $C$ be the cover that gives the smallest objective value. This smallest value will equal $\DBP(\mR(c), L)$. 
%For re-computing derees: if we assume knowledge of degrees both before and after uniformization, then its 3 rounds. otherwise its O(\log_L(IN)) rounds.
\item {\bf MapReduce round $1$}: Join all the $\pi_{A}(R) : (R,A) \in C$ in a single step using the shares algorithm. Each attribute $a$ is assigned share $\IN^{v_a}$, where $v_a$ is from our solution to Linear Program~\ref{lp:degree-packing-primal}. This ensures a load of $O(L)$ per processor, and communication cost of $O(\max_{c \in \mC_L}L \cdot \IN^{\DBP(\mR(c), L)})$ (Lemma~\ref{lemma:shares-communication-load}, proved in Appendix).
\item {\bf MapReduce rounds $2-3$}: For each $R$ such that $(R, \attr(R)) \notin C$, semijoin it with the output of the previous join. The semijoins for all such $R$s can be done in parallel in one round, followed by intersection of the semijoin results in the next round. This can be done with $O(1)$ load and communication cost of $O(\IN + \OUT)$.
\end{enumerate}

\begin{lemma}\label{lemma:shares-communication-load}
The shares algorithm, where each attribute $a$ has share $\IN^{v_a}$, where $v_a$ is from the solution to Linear Program~\ref{lp:degree-packing-primal}, has a load of $O(L)$ per processor with high probability, and a communication cost of $O(\max_{c \in \mC_L}L  \cdot \IN^{\DBP(\mR(c), L)})$.
\end{lemma}
\end{proof}

\begin{example}
Consider the sparse triangle join, with $\mR = \left\lbrace R_1(X,Y), R_2(Y,Z), R_3(Z,X) \right\rbrace$. Each relation has size $N$, and each value has degree $O(1)$. When the load level is $L < N$, the join requires $\DBP(\mR, L) = \frac{N}{L}$ processors. Equivalently, when we have $p$ processors, the load per processor is $\frac{N}{p}$, which means it decreases as fast as possible as a function of $p$.

In contrast the vanilla shares algorithm allocates a share of $p^{\frac{1}{3}}$ to each attribute, and the load per processor is $Np^{-\frac{2}{3}}$. Current state of the art work~\cite{beame:skew} has a load of $Np^{-\frac{2}{3}}$ as well.
\end{example}

We further explore and generalize this example in Appendix~\ref{sec:parallel-algo-examples}. We also show an example where our parallel algorithm operating at maximum parallelism still has lower total cost than existing state-of-the-art sequential algorithms.

\section{Conclusion and Future Work}
We demonstrated that using degree information for a join can let us tighten the exponent of our output size bound. We presented a parallel algorithm that works at all levels of parallelism, and whose communication cost matches a tightened bound at the maximum parallelism level. We proposed the question of deciding which joins can be processed in subquadratic time, and made some progress towards answering it. We showed a tight quadratic lower bound for a family of joins, making it the only known tight bound that makes no assumptions about the matrix multiplication exponent. We presented an improved sequential algorithm, namely DARTS, that generalizes several known join algorithms, while outperforming them in several cases. We recovered the results of DARTS in the GHD framework, using a novel notion of width that is tighter than fhw and sometimes tighter than submodular width as well.

We presented several cases in which DARTS outperforms existing algorithms, in the context of subquadratic joins. However, it is likely that DARTS outperforms existing algorithms on joins having higher treewidths as well. A fuller exploration of the improved upper bounds achieved by DARTS is left to future work. Appendix~\ref{sec:submodular-width-lower-bound} shows an example where a join can be performed in subquadratic time despite its $m$-width/submodular width being $= 2$. Thus the problem of precisely characterizing which joins can be performed in subquadratic time remains open. Moreover, we focused entirely on using degree information for join processing; using other kinds of information stored by databases to improve join processing is a promising direction for future work. 

\subparagraph*{Acknowledgements}
\scriptsize
The authors would like to thank Atri Rudra for pointing out the connection to submodular width.
CR gratefully acknowledges the support of the Defense Advanced Research Projects
Agency (DARPA) XDATA Program under No. FA8750-12-2-0335 and DEFT Program under No. FA8750-13-2-0039,
DARPAs MEMEX program under No. FA8750-14-2-0240, the National Science Foundation (NSF) under CAREER
Award No. IIS-1353606, Award No. No. CCF-1356918 and EarthCube Award under No. ACI-1343760, the Office of
Naval Research (ONR) under awards No. N000141210041 and No. N000141310129, the Sloan Research Fellowship,
the Moore Foundation Data Driven Investigator award, and gifts from American Family Insurance, Google, Lightspeed
Ventures, and Toshiba.

\bibliography{Degree}

\appendix
\normalsize

\section{Background}\label{sec:background}
\subsection{Generalized Hypertree Decompositions (GHDs)}
\begin{definition}
Given a set of relations $\mR$ over attributes $\mA$, a \emph{generalized hypertree decomposition} is a pair $(\mT, \chi)$ where $\mT$ is a tree and $\chi$ is a function from nodes of $\mT$ to $2^{\mA}$ such that
\begin{itemize}
\item For each relation $R \in \mR$, there exists a tree node in $\mT$ that covers the relation, i.e. $\attr(R) \subseteq \chi(t)$.
\item For each attribute $A \in \mA$, the set of tree nodes containing $A$ i.e. $\{t \mid A \in \chi(t)\}$ forms a connected subtree.
\end{itemize}
\end{definition}

The latter condition is called the ``running intersection property''. The $\chi(t)$ sets are referred to as `bags' of the GHD. Using GHDs, we can define several notions of `width', which capture the cyclicity of a query. For example, the treewidth of a GHD is the maximum value of $|\chi(t)|-1$ over nodes $t$ in $\mT$, and treewidth of a query is the treewidth of its minimum-treewidth GHD. Similarly, fractional hypertreewidth ($\mathsf{fhw}$) is the maximum value of $\log_{IN}(\AGM(\chi(t)))$ over $t \in \mT$ where $\AGM(\chi(t))$ is the AGM bound over the set of attributes in $\chi(t)$ for the given relations $\mR$. Again the $\mathsf{fhw}$ of a query is the minimum $\mathsf{fhw}$ over its GHDs. 

If the width of a GHD is $w$ (for any of the known notions of width), then the size of the join $\Join_{R \in \mR} \pi_{\chi(t)}(R)$ is $\leq \IN^w$ for all $t \in \mT$. Thus
the join can be computed by first computing the join within the bag as above, and then running Yannakakis' algorithm~\cite{Yannakakis81} on the resulting relations with a runtime of $\IN^{w} + \OUT$. 

\subsection{MapReduce}\label{sec:mapreduce}
In the MapReduce (MR) model, there are unboundedly many processors on a networked file system. Each processor has unbounded hard disk space and load capacity $L$ (explained later). The computation proceeds in two phases.
 
{\bf Step 1}: Each processor (referred to as a mapper), reads its tuples from its hard disk and sends each tuple to one or more processors (called reducers). 
The total number of tuples received by each reducer from all mappers should not exceed load capacity $L$.

{\bf Step 2:} Each reducer locally processes the $\leq L$ tuples it receives, and streams its output to the network file system. The output size at a reducer can exceed load capacity $L$ as it is streamed to the network file system.

The {\em communication cost} of each round is defined as the total number of tuples sent from all mappers to reducers. We measure the complexity of our algorithms in terms of communication cost and number of rounds. 

\subsection{The Shares Algorithm}\label{sec:shares}
Shares is a one-round MapReduce algorithm. Shares is parameterized algorithm, whose communication cost is different for different queries and machine sizes. Suppose we have a join $\Join_{R \in \mR} R$ with attribute set $\mA$. Shares assigns a parameter  $S_A$, called a `share' to each attribute $A \in \mA$. It hashes each attribute $A$ into $S_A$ buckets using a hash function $h_A$. It uses $\Pi_{A \in \mA} S_A$ processors, with one processor corresponding to each tuple of hash values. For any processor $P$ and attribute $A$, we use $P(A)$ to denote the hash value of $A$ corresponding to processor $P$. 

Shares uses a single round of MapReduce. In that round, each tuple $t \in R, R \in \mR$, is sent to every processor $P$ such that $P(A) = h_A(t(A)) \text{ } \forall \text{ } A \in \attr(R)$. Then, each processor joins all the tuples it receives, and the final output of the join equals the union of the outputs produced by all processors.

Each tuple in relation $R$ gets sent to $\Pi_{A \notin \attr(R)} S_A$ processors. The communication cost of this algorithm is thus $\sum_{R \in \mR} |R| \Pi_{A \notin \attr(R)} S_A$. The expected `load' on each processor (number of input tuples it receives) is $\sum_{R \in \mR} |R| \Pi_{A \in \attr(R)} (S_A)^{-1}$, which is simply the total communication divided by the total number of processors. On the other hand, the variance in load can be high, leading to some processors receiving a very high number of input tuples. In general, the shares $S_A$ are chosen so as to minimize the total communication cost, given the number of processors. 

\subsection{Articulation set}\label{sec:articulation-set}
Suppose we have a hypergraph $\mH = (\mV, \mE)$ with $\mE \subseteq 2^{\mV}$. The hypergraph is {\em connected} if for each pair $v_1, v_2 \in \mV$, there exists a sequence $u_0, u_1,\ldots, u_k$ such that $u_i \in \mV \text{ } \forall \text{ } 0 \leq i \leq k$, $u_0 = v_1, u_k = v_2$ and $\forall \text{ } i < k \text{ } \exists \text{ } e \in \mE : u_i \in e \land u_{i+1} \in e$.

If $\mH$ is connected, then an articulation set $S$ is a set $S \subsetneq \mV$ such that the hypergraph $\mH_{-S} = (\mV \setminus S, \{e \setminus S \mid e \in \mE\})$ is {\em not connected}. Equivalently, $S$ is an articulation set if $\exists \text{ } \mV' \subsetneq \mV \setminus S$ such that $\forall \text{ } e \in \mE$, either $e \subseteq \mV' \cup S$ or $e \subseteq \mV \setminus \mV'$.

\subsection{The AYZ algorithm}\label{sec:AYZ}
Consider a join given by $R_1(X_1,X_2)$, $R_2(X_2,X_3)$,$\ldots$ $R_n(X_n,X_1)$, for $n \geq 4$. This is the cycle join of length $n$. The cycle has fhw equal to 2, so the join can be processed in time $O(N^2 + \OUT)$. However, we can even process the join in subquadratic time as follows:
For each attribute $X_j$, we compute the degree of each of its values. We choose a threshold $\Delta$. We call any value with degree less than $\Delta$ {\em light}, and other 
values {\em heavy}. We process heavy and light values separately. The number of heavy values in an attribute can be at most $\frac{N}{\Delta}$. For each heavy value $h$ in each attribute $X_j$, we `marginalize' over the value i.e. restrict $X_j$ to $h$. So effectively we compute the join with all values in $X_j$ other than $h$ removed. This effectively turns the join into a chain join 
$$R_{j+1}R_{j+2}\ldots, R_nR_1R_2\ldots, (\pi_{X_{j-1}X_{j+1}}\sigma_{X_j=h}R_{j-1}R_j)$$
Adding column $X_j = h$ to the output of the chain above gives us the output for $h$. Let us call this output $\OUT_h$. Using Yannakakis' algorithm on the chain lets us solve it in time $O(N + \OUT_h)$. Thus, the total time for processing all heavy values in all attributes is 
\begin{align*}
\sum_h O(N + \OUT_h) &= \sum_h O(N) + \sum_h O(\OUT_h)\\
&= O(\frac{nN}{\Delta}N) + O(\OUT) \\
&= O(\frac{N^2}{\Delta} + \OUT)
\end{align*}
This way, we can find all outputs containing at least one heavy value. After this is done, we can delete all the heavy values, and process only light values. This is done by a simple brute force search. We start with each value in $X_1$, which has at most $\Delta$ neighbors in $X_2, X_n$, which together have at most $\Delta^2$ neighbors in $X_3, X_{n-1}$ and so on. At $X_{\frac{n}{2}}$ we take intersection of neighbors from both directions. The total running time for this procedure is the number of values in $X_1$ i.e. $N$, times the total number of neighbors explored per $X_1$ value, which is $\Delta^{\lceil \frac{n}{2} \rceil}$. Thus, the total processing time of the join is
$$O(\frac{N^2}{\Delta} + N\Delta^{\lceil \frac{n}{2}\rceil} + \OUT)$$
Setting $\Delta = N^{\frac{1}{1 + \lceil \frac{n}{2}\rceil}}$ gives us the minimum value of the running time, which is also subquadratic.

\subsection{$3$-SUM}\label{sec:3-SUM}
We first define the $3$-SUM problem below.
\begin{problem}
\label{prob:3sum}
The \textbf{3SUM} problem : Given $n$ integers $x_1$, $x_2$, ... $x_n$ all polynomial sized in $n$, do there exist three of those numbers, $x_i$, $x_j$, $x_k$ such that $x_i + x_j + x_k = 0$?
\end{problem}
There is no known algorithm for solving this problem in time $O(n^{2-\epsilon})$ for any $\epsilon > 0$, and it is believed that such an algorithm does not exist. On the other hand, there is a known algorithm for solving the problem in time that is smaller than $n^2$ by a subpolynomial (log) factor. We next state the $3$-XOR problem, which is subquadratically reducible from the $3$-SUM problem.
\begin{problem}
\label{prob:3xor}
The \textbf{3XOR} problem : Given $n$ integers $x_1$, $x_2$, ... $x_n$ all polynomial sized in $n$, do there exist three of those numbers, $x_i$, $x_j$, $x_k$ such that $x_i \oplus x_j \oplus x_k = 0$ where $\oplus$ refers to bitwise xor?
\end{problem}

\section{Degree Computation}\label{proof:degree-computation}
\begin{lemma*}
Given a relation $R$ and a $A \subseteq \attr(R)$, and a $L > 1$, we can find $\degree(v,R,A)$ for each $v \in \pi_{A}(R)$ in a MapReduce setting, using $O(|R|)$ total communication, in $O(\log_L(|R|))$ MapReduce rounds, and with $O(L)$ load per processor. In a sequential setting, we can compute degrees in time $O(|R|)$.
\end{lemma*}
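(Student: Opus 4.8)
The plan is to recognize that computing $\degree(v,R,A)$ for all $v\in\pi_A(R)$ is exactly a group-by-count aggregation: partition the tuples of $R$ by the value $\pi_A(t)$ and report the size of each group. The sequential claim is then immediate. I would stream once over $R$, maintaining a hash table keyed by $\pi_A(t)$ and incrementing the associated counter for each tuple; this touches each tuple once and runs in $O(|R|)$ time (hashing keeps it linear, and a sort-based variant lands in the same bound once the polylogarithmic factors the paper suppresses are dropped). The real content is the MapReduce claim, where a single reducer cannot absorb all tuples of a high-degree value without violating the load bound $L$.

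For MapReduce I would implement the group-by-count as an iterated $L$-ary summation rather than one aggregation step. Maintain a multiset of \emph{partial counts}, each a pair $(v,c)$ asserting that some batch of tuples with $\pi_A(t)=v$ has been tallied to $c$; initialize it with $(\pi_A(t),1)$ for every $t\in R$, giving $|R|$ partial counts. Each round groups, for every value $v$, its current partial counts into batches of size at most $L$, routes each batch to its own reducer, and has that reducer emit one merged partial count equal to the sum of its batch. Associativity of addition makes this an $L$-ary summation tree per value whose root is the number of tuples projecting to $v$, so correctness is automatic; each round shrinks the number of partial counts for $v$ by a factor of about $L$, and after $O(\log_L(|R|))$ rounds every value is represented by the single partial count $\degree(v,R,A)$.

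The cost bounds then fall out of summing a geometric series per value. Each reducer merges one batch of at most $L$ partial counts, so its load is $O(L)$. A value $v$ of degree $d_v$ is represented by about $d_v/L^{i+1}$ partial counts in round $i$ and is retired once it holds a single count, after $O(\log_L d_v)$ rounds; hence the partial counts it ever emits total $O(d_v)$, and summing over values bounds the total communication by $O(\sum_v d_v)=O(|R|)$ since $\sum_v d_v = |R|$. The number of rounds is the largest of these depths, $O(\log_L(|R|))$.

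The step I expect to be the main obstacle is realizing the per-round batching in a distributed fashion that simultaneously keeps every reducer at load $O(L)$ under skew and keeps the per-value communication at $O(d_v)$. The natural route is to send a partial count of $v$ to the reducer keyed by $(v,j)$, where $j$ is a hash bucket and the number of buckets used for $v$ is set to $\lceil p_v/L\rceil$ when $v$ currently holds $p_v$ partial counts; a Chernoff bound then shows each bucket receives $O(L)$ partial counts with high probability. The delicate points are that a mapper does not know $p_v$ globally and that each tuple must be counted exactly once. I would handle the former by propagating the per-value bucket count across rounds, using the fact that the number of nonempty buckets a value occupies in one round equals its partial-count multiplicity in the next, with the initial multiplicities bootstrapped from the bound $p_v\le|R|$; the latter is guaranteed by the disjointness of the batches. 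Everything else is routine, which is why the lemma can be asserted as straightforward.
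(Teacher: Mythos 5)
Your proposal is correct and is essentially the paper's proof: both compute the group-by-count via an $L$-ary summation tree of height $O(\log_L(|R|))$, with the same geometric-series argument giving $O(d_v)$ communication per value (hence $O(|R|)$ total) and $O(L)$ load per merge. The ``delicate point'' you flag about mappers not knowing the current multiplicity $p_v$ is resolved in the paper exactly along the lines of your bootstrap suggestion --- it fixes the bucket schedule in advance from the worst-case bound (a random replica index $k_2 \in \{1,\ldots,|R|/L\}$ for every value, deterministically collapsed via $k_2 \mapsto \lceil k_2/L\rceil$ each round, together with a hash of the value itself into $|R|$ buckets so each reducer handles $O(1)$ distinct values) --- so no per-round adaptive coordination is needed.
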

\begin{proof}
Suppose the schema of $R$ has $K$ attributes $X_1, X_2,\ldots, X_K$. Let $|A| = K' \leq K$. Without loss of generality, we can assume that $A = \left\lbrace X_1, X_2,\ldots, X_{K'}\right\rbrace$. We want to find the degree of each value in $\pi_{X_1, X_2,\ldots, X_{K'}}(R)$. We make no assumption about the starting location of different tuples of $R$, each tuple of $R$ could be in a different processor. 

We have $|R| \times \frac{|R|}{L}$ processors, indexed by $(k_1, k_2)$, with $1 \leq k_1, k_2 \leq |R|$. For each tuple $(x_1, x_2,\ldots, x_K) \in R$, its processor finds a hash $k_1 \in \left\lbrace 1, 2,\ldots, |R| \right\rbrace$ of $(x_1, x_2,\ldots, x_{K'})$. In addition, the processor generates a random number $k_2 \in \left\lbrace 1, 2,\ldots, \frac{|R|}{L} \right\rbrace$, and sends $(x_1, x_2,\ldots, x_{K'})$ to processor $(k_1, k_2)$. Each processor receives at most $O(L)$ tuples in expectation, because of the second random hash. The first index of the processors ($k_1$) corresponds to the tuple value. Because we have $|R|$ buckets for the first index, each hash value $k_1$ should correspond to $O(1)$ distinct values of $(x_1, x_2,\ldots, x_{K'})$. Each tuple is associated with a `count' field. The initial value of the count field, when the tuple is sent to any processor in the starting step, is $1$. 

The next $\log_L(|R|)$ steps are as follows: In each step, each processor $(k_1, k_2)$ locally aggregates the count of each of its tuples (since a processor may have recieved multiple copies of the same $(x_1, x_2,\ldots, x_{K'})$ value from different processors), and sends each aggregated tuple-count pair to processor $(k_1, \lceil \frac{k_2}{L}\rceil)$. Thus, in $\log_L(|R|)$ steps, we will only have tuples in processors with $k_2 = 1$. Each processor $(k_1, 1)$ should contain $O(1)$ distinct tuples and their counts. In each of these steps, the number of tuples received by a processor $p$ would correspond to the number of distinct values of $(x_1, x_2,\ldots, x_{K'})$ that hash to the same value in $\left\lbrace 1, 2,\ldots, |R| \right\rbrace$, times $L$ (the number of processors sending tuples to $p$), which is $O(L)$ (up to log factors as specified earlier). At this stage, for each value $(x_1, x_2,\ldots, x_{K'}) \in (X_1, X_2,\ldots, X_{K'})$, we have its total count, which equals its degree in $R$, as needed. 

In the sequential setting, we can simply have one processor simulate the MapReduce computation above. Its computation cost equals the sum of computation and communication costs of all Mappers and Reducers in all rounds. The total computation is fully subsumed by the total communication of the MapReduce algorithm, which is $O(|R|)$.
\end{proof}

\section{Recovering previous results using DARTS}\label{sec:previous-recovery}

\subsection{Proof of Proposition~\ref{prop:nprr}}
\begin{proposition*}
If we compute the join using a single Light transform, our total cost is $\leq$ the AGM bound, thus recovering the result of the NPRR algorithm~\cite{Ngo:2012:WOJ:2213556.2213565}.
\end{proposition*}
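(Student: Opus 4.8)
The plan is to exhibit a single Light transform that computes the entire join in one shot and then invoke Theorem~\ref{thm:gep-agm} to control its cost. First I would apply the Light transform with input attribute set $X = \mA$. Since every relation satisfies $\attr(R) \subseteq \mA$, we get $\mR_X = \mR(c)$, so the transform collapses all relations at once: it computes $R_X = \Join_{R \in \mR(c)} \pi_{\mA} R = \Join_{R \in \mR(c)} R$, which is already the full join output. The reduced problem $G'$ then consists of the single relation $R_X$ with $\mO = \mA$ unchanged, so $Q(G') = O(\OUT)$ (merely reading off the already-computed output). The Light bound $Q(G) \leq \DBP(G,X) + Q(G')$ thus gives a per-configuration cost of $O(\IN^{\DBP(\mR(c), L)} + \OUT)$, since the time for the subjoin $R_X$ on configuration $c$ is exactly $\IN^{\DBP(\mR(c), L)}$.

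Second, I would apply Theorem~\ref{thm:gep-agm}, which states that $\IN^{\DBP(\mR(c), L)} \leq \AGM(\mR(c))$ for every degree configuration $c \in \mC_L$. This immediately bounds the compute cost of the single Light transform on configuration $c$ by $\AGM(\mR(c))$, so the per-configuration total cost is $O(\AGM(\mR(c)) + \OUT)$.

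Third, since DARTS runs once per degree configuration after degree-uniformization, the total cost is $\sum_{c \in \mC_L} O(\IN^{\DBP(\mR(c), L)} + \OUT) \leq \sum_{c \in \mC_L} O(\AGM(\mR(c)) + \OUT)$. To finish, I would use two facts: (i) the number of degree configurations $|\mC_L|$ is constant with respect to $\IN$ (it depends only on $L$ and the attribute subsets of the relations), and (ii) each $R(c) \subseteq R$, so the optimal fractional cover for $\mR$ remains feasible for $\mR(c)$ and hence $\AGM(\mR(c)) \leq \AGM(\mR)$. Combining these, $\sum_{c} \AGM(\mR(c)) \leq |\mC_L| \cdot \AGM(\mR) = O(\AGM(\mR))$, and since the AGM bound upper-bounds the output, $\OUT \leq \AGM(\mR)$. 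Therefore the total cost is $O(\AGM(\mR))$, recovering the NPRR guarantee.

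The main obstacle is the last step: controlling the sum of the per-configuration bounds. The crude argument above (a constant number of configurations times the monotonicity $\AGM(\mR(c)) \leq \AGM(\mR)$) already suffices for an $O(\AGM(\mR))$ bound, but one should verify that the factor $|\mC_L|$ is genuinely independent of $\IN$ and is absorbed into the polylogarithmic $O(\cdot)$ convention used throughout the paper. A sharper alternative, if one wanted a tight rather than merely asymptotic bound, is to establish superadditivity of the AGM bound directly, namely $\sum_{c} \prod_{R} |R(c)|^{w_R} \leq \prod_{R} |R|^{w_R}$ for a fixed cover $\{w_R\}$; this follows from the disjointness of the partitions $\{R(c)\}$ together with a generalized H\"older inequality, and it removes the $|\mC_L|$ factor entirely.
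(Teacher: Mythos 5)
Your proposal is correct and follows essentially the same route as the paper: a single Light transform with $X = \mA$ whose cost is the DBP bound, followed by Theorem~\ref{thm:gep-agm} to dominate it by the AGM bound; you are simply more explicit about summing over degree configurations, which the paper leaves implicit. One small correction: $|\mC_L|$ is not constant in $\IN$ for $L = O(1)$ --- the number of buckets is $\log_L(\IN)$, so $|\mC_L|$ is polylogarithmic in $\IN$ --- but, as you anticipated, this factor is absorbed by the paper's convention that $O(\cdot)$ hides polylogarithmic factors, so the conclusion stands.
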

\begin{proof}
If we perform a light transform, with set $X$ equal to the set of all attributes in the join, then $\DBP(G,X)$ simply equals the DBP bound on the join. Theorem~\ref{table:dbp-vs-agm} tells us that this is less than the AGM bound on the join. Moreover, after the light transform, the resulting join only has a single relation $R_X$ whose size equals the DBP bound on the join. Hence the $P$ and $Q$ values of the original join equal its DBP bound, and are $\leq$ the AGM bound.
\end{proof}

\subsection{Proof of Proposition~\ref{prop:yannakakis}}
\begin{proposition*}
If we successively apply the Split transform on an $\alpha$-acyclic join, with $G_1$ being an ear of the join in each step, then the total cost of our algorithm becomes $O(\IN + \OUT)$, recovering the result of Yannakakis' algorithm~\cite{Yannakakis81}.
\end{proposition*}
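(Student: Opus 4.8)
The plan is to argue by induction on the number $k$ of relations in the (full, $\mO=\mA$) query that the ear-based Split strategy yields $Q(G)=O(\IN)$, from which the claimed runtime $O(Q(G)+\OUT)=O(\IN+\OUT)$ follows immediately from the definition of $Q$. The engine of the induction is $\alpha$-acyclicity: by the GYO characterisation, every nonempty $\alpha$-acyclic hypergraph has an ear (a leaf of its join tree), i.e.\ a relation $e$ possessing a witness edge $f$ such that every attribute $e$ shares with the remaining relations already lies in $f$. The base case $k=1$ is immediate, since a single relation is both read and emitted in time $O(\IN+\OUT)$, so $Q(G)=O(\IN)$.

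First I would set up the Split for the inductive step. I take $G_1$ to be the single ear $e$ and $G_2$ to be all the remaining relations, and let $S=\attr(e)\cap\bigcup_{R\neq e}\attr(R)$ be their shared attributes. The ear property gives $S\subseteq\attr(f)$ with $f\in\mR_{G_2}$, so $S$ is an articulation set separating $G_1$ from $G_2$, and $G_1,G_2$ share no attribute outside $S$; this meets the input requirement of the Split transform. Since the query is a join we have $\mO=\mA$, hence $S\subseteq\mO$, placing us in the first branch of the Split bound, namely $Q(G)\leq P(G'_1)+Q(G''_1)+Q(G''_2)$.

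Next I would bound the three terms. Because $G_1$ is the single relation $e$, both $P(G'_1)$ (computing $R_S=\pi_{S}e$) and $Q(G''_1)$ (computing the semijoin of $e$ by the restricted $R_S$) are $O(|e|)=O(\IN)$. For $Q(G''_2)$ I would invoke the induction hypothesis on the subproblem consisting of $\mR_{G_2}$ together with the filter $R_S$: since $\attr(R_S)=S\subseteq\attr(f)$, the relation $R_S$ is absorbed into $f$ by the GYO subsumption rule without affecting acyclicity, so $G''_2$ is $\alpha$-acyclic with $k-1$ original relations, and its input size is still $O(\IN)$ because $|R_S|\leq|e|\leq\IN$. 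Thus $Q(G''_2)=O(\IN)$, giving the recurrence $Q(G)\leq Q(G''_2)+O(\IN)$, which unrolls over the $k$ levels to $Q(G)=O(k\,\IN)=O(\IN)$ since $k$ is constant in the data size. Finally, because DARTS first performs degree-uniformization into the constantly many configurations $c\in\mC_2$, each of which preserves the hypergraph and hence its $\alpha$-acyclicity, and because the per-configuration inputs and outputs sum to $O(\IN)$ and $\OUT$ respectively, the total cost remains $O(\IN+\OUT)$.

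I expect the main obstacle to be the correctness and accounting of the combine step rather than the cost recursion. Concretely, I must verify that the final phase of the Split---pairing, for each $t\in R_S$, the matching tuples of $O_1$ and $O_2$---emits exactly the output of $G$ with no wasted work, i.e.\ that the number of matched pairs equals $\OUT$. This is precisely where $S\subseteq\mO$ and the fact that $S=\attr(G_1)\cap\attr(G_2)$ is a separator are essential: every output tuple of $G$ factors uniquely into its $\attr(e)$-part in $O_1$ and its $\attr(G_2)$-part in $O_2$ agreeing on their common $S$-value, so the matched pairs total exactly $\OUT$ and the combine is $O(\OUT)$, chargeable to the $+\OUT$ term. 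The remaining care points---that $|O_1|,|O_2|\leq\OUT$ so their materialisation stays within budget, and that the $\OUT$ contributions of the $k$ recursive levels still sum to $O(\OUT)$---are routine.
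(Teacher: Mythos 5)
Your proposal is correct and follows essentially the same route as the paper's proof: induction on the number of relations, a Split on an ear with $S=\attr(R_1)\cap\attr(R_2)$, the bound $Q(G)\leq P(G'_1)+Q(G''_1)+Q(G''_2)$ with the first two terms being $O(\IN)$, and the inductive hypothesis applied to the remainder. The one place you gloss over is how $R_S$ is actually folded into the witness relation so that the relation count drops to $k-1$ and the inductive hypothesis (which is indexed by the number of relations) applies: the paper makes this concrete with a Light transform on $X=\attr(R_2)$, whose DBP bound is at most $|R_2|=O(\IN)$ since $\attr(R_S)\subseteq\attr(R_2)$, whereas you appeal informally to GYO subsumption without naming the DARTS operation that realizes the absorption.
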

\begin{proof}
We proce that the $Q$ of an $\alpha$-acyclic join is $O(\IN)$, which implies the proposition.
We use induction on the number of relations in the join. It is clearly true when we have only $1$ relation. Suppose $Q$ equals input size for $\alpha$-acyclic joins with $\leq n-1$ relations, and consider an $\alpha$-acyclic join with $n$ relations. Because it is $\alpha$-acyclic, it has an `ear' i.e. it has a relation $R_1$ and a relation $R_2$ such that each attribute on $R_1$ is either unique to it, or is an attribute of $R_2$ as well. We apply the Split transform with $S = \attr(R_1) \cap \attr(R_2)$. Since this is a join, $\mO$ consists of all attributes, hence $S \subseteq \mO$. This lets us use the bound:
$$Q(G) \leq P(G'_1) + Q(G''_1) + Q(G''_2)$$
$G'_1$ has only one relation ($R_1$), so $P(G'_1)$ is $O(\IN)$. Similarly, $Q(G''_1)$ is $O(\IN)$. Consider $G''_2$, which consists of a relation $R_S$ and the relations in the original join other than $R_1$. The attributes of $R_S$ are a subset of the attributes of $R_1$. We do a light transform with $X = \attr(R_2)$. $\mR_X$ definitely includes $R_2$ and $R_S$. Since the attributes in $X$ are all contained in $R_2$, the DBP bound on this join is at most the size of $R_2$ i.e. $O(\IN)$. Moreover, the resulting join after the light transform has at most $n-1$ relations and is $\alpha$-acyclic. By the inductive hypothesis, its $Q$ value is $\IN$. Thus the $Q$ value of the whole join is at most $O(\IN) + O(\IN) + O(\IN) + O(\IN) = O(\IN)$, which completes the proof.
\end{proof}

\subsection{Proof of Proposition~\ref{prop:fhw}}
\begin{proposition*}
If a query has fractional hypertree width equal to fhw, then using a combination of Split and Light transforms, we can bound the cost of running DARTS by $O(\IN^{fhw} + \OUT)$, recovering the fractional hypertree width result.
\end{proposition*}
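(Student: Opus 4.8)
The plan is to recover the fhw bound by decomposing the query along a fractional hypertree decomposition and simulating the two phases of the standard fhw algorithm—bag computation followed by Yannakakis—using the Light and Split transforms respectively. The key fact, taken from the GHD background, is that if a GHD $D = (\mT, \chi)$ realizes the fractional hypertree width, then for each bag $t \in \mT$ the AGM bound over the attribute set $\chi(t)$ is at most $\IN^{\mathsf{fhw}}$, and the bag relations $\Join_{R} \pi_{\chi(t)}(R)$ form an $\alpha$-acyclic join.

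First I would fix an optimal GHD $D$ of the query. For each bag $t$, I would use a Light transform with $X = \chi(t)$ to compute the bag relation $R_{\chi(t)} = \Join_{R : \attr(R) \subseteq \chi(t)} \pi_{\chi(t)}(R)$; by the Light bound this costs $\DBP(G, \chi(t)) + Q(G')$, and by Theorem~\ref{thm:gep-agm} the DBP term is bounded by $\AGM(\chi(t)) \leq \IN^{\mathsf{fhw}}$. Doing this for each of the constantly-many bags produces a new join whose relations are exactly the bag relations $\{R_{\chi(t)}\}_{t \in \mT}$. Because the bags of a GHD cover every original relation and satisfy the running intersection property, this new join is $\alpha$-acyclic with tree structure $\mT$.

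At that point I would invoke Proposition~\ref{prop:yannakakis}: the resulting $\alpha$-acyclic join over the bag relations can be processed by successive Split transforms (peeling ears of $\mT$) in time $O(\IN' + \OUT)$, where $\IN'$ is the total size of the bag relations. Since each bag relation has size at most $\IN^{\mathsf{fhw}}$ and there are constantly many bags, $\IN' = O(\IN^{\mathsf{fhw}})$. Summing the cost of the Light phase (at most a constant times $\IN^{\mathsf{fhw}}$) with the cost of the Yannakakis phase ($O(\IN^{\mathsf{fhw}} + \OUT)$) yields the desired total bound $O(\IN^{\mathsf{fhw}} + \OUT)$.

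The main obstacle I anticipate is purely bookkeeping rather than conceptual: I must verify that after the Light transforms the bag relations really do constitute a valid $\alpha$-acyclic instance on which Proposition~\ref{prop:yannakakis} applies—i.e. that the Split transform's articulation-set preconditions are met when we peel ears of $\mT$, and that the $\mO = \mA$ case ($S \subseteq \mO$) is the branch that fires so the bound $Q(G) \leq P(G'_1) + Q(G''_1) + Q(G''_2)$ is available. I would also need to confirm that the degrees in each bag relation, required to instantiate $G'$ for the recursive calls, are computable from the original degree configuration, but since degrees are maintained throughout degree-uniformization this is routine.
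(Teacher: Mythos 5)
Your proposal is correct and follows essentially the same route as the paper's own proof: fix a width-realizing GHD, apply a Light transform per bag (bounding each bag computation by the DBP bound, hence by the AGM bound, hence by $\IN^{\mathsf{fhw}}$ via Theorem~\ref{thm:gep-agm}), and then finish the resulting $\alpha$-acyclic join of bag relations with Split transforms via Proposition~\ref{prop:yannakakis}. The bookkeeping concerns you flag (ear-peeling preconditions, degree propagation) are not addressed in any more detail by the paper itself.
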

\begin{proof}
If fhw is the fractional hypertree width of the join, it means there exists a GHD~\cite{gottlob:ght,chekuri:conjunctive} such that the highest value of the AGM bound on the bags of the GHD equals $\IN^{fhw}$. For each bag $B$ of the GHD, we perform a Light transform with $X$ equal to the set of attributes in the bag. The time taken for computing $R_X$ is then the DBP bound on that join, which is less than the AGM bound, which is $\leq \IN^{fhw}$ (by the way the GHD was chosen). After all these light transforms, we are left with an $\alpha$-acyclic join, where each relation size is $\leq \IN^{fhw}$. Using Proposition~\ref{prop:yannakakis}, DARTS can process this join in time $\IN^{fhw} + \OUT$, proving that DARTS recovers the fhw bound.   
\end{proof}

\subsection{Proof of Proposition~\ref{prop:ayz}}
\begin{proposition*}
A cycle join of length $n$ with all relations having size $N$, can be processed by DARTS in time $O(N^{2 - \frac{1}{1 + \lceil \frac{n}{2}\rceil}} + \OUT)$, recovering the result of the AYZ algorithm~\cite{Alon:1994:FCG:647904.739463}. 
\end{proposition*}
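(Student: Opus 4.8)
The plan is to exhibit, for each degree configuration produced by degree-uniformization, an explicit sequence of DARTS transforms whose cost matches the AYZ bound; since DARTS minimizes over all transform sequences, this upper-bounds its actual cost. First I would run degree-uniformization with a constant bucket range $L = O(1)$, so that the number of configurations is polylogarithmic in $N$ and is therefore absorbed into the $O(\cdot)$ notation (which hides polylog factors); the total cost is then the sum of the per-configuration costs. The key observation is that degree-uniformization plays the role of AYZ's heavy/light split: within a single configuration every value of a given attribute in a given relation lies in one degree bucket, so AYZ's per-value heavy/light distinction becomes a per-configuration one. Fixing the threshold $\Delta = N^{1/(1 + \lceil n/2 \rceil)}$ (rounded to the nearest bucket boundary $L^l$, which only costs a constant factor), I would classify each configuration as either (a) having some attribute $X_i$ whose degree in $R_{i-1}$ or $R_i$ is $\ge \Delta$ (\emph{heavy}), or (b) having all degrees $< \Delta$ (\emph{light}).

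For a heavy configuration I would apply the \emph{Heavy} transform on such an attribute $X_i$. Because its degree bucket lies at level $\ge \Delta$, the surviving values satisfy $\min_{R \in \mR_{X_i}} |R|/d_R \le N/\Delta$. Marginalizing $X_i$ deletes it from both incident relations, turning the cycle into a path (chain) from $X_{i+1}$ to $X_{i-1}$ with its two endpoints restricted; this reduced problem $G'$ is $\alpha$-acyclic, so by Proposition~\ref{prop:yannakakis} we have $Q(G') = O(\IN) = O(N)$. The Heavy bound then gives $Q(G) \le (N/\Delta)\cdot O(N) = O(N^2/\Delta)$, exactly matching the heavy branch of AYZ.

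For a light configuration I would use the $2$-element articulation set $S = \{X_1, X_m\}$ with $m = \lceil n/2\rceil + 1$, which separates the cycle into a ``forward'' path $G_1$ and a ``backward'' path $G_2$, and apply the \emph{Split} transform (valid since $\mO = \mA \supseteq S$, so condition (i) holds). The half-paths are computed and projected onto $S$ via \emph{Light} transforms, whose cost is the $\DBP$ bound; in an all-light configuration, stepping along a path of $\lceil n/2\rceil$ edges starting from $N$ values with per-step degree $< \Delta$ makes this bound evaluate to $O(N\Delta^{\lceil n/2\rceil})$, which is precisely the meet-in-the-middle cost of AYZ's light branch. Combining the two projected halves then produces the cycle output with an additive $\OUT$ term. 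Summing the per-configuration costs gives $O(N^2/\Delta + N\Delta^{\lceil n/2\rceil} + \OUT)$ per configuration and, over the polylogarithmically many configurations, the same total; substituting $\Delta = N^{1/(1+\lceil n/2\rceil)}$ balances the two terms to $N^{2 - 1/(1+\lceil n/2\rceil)}$.

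I expect the main obstacle to be the light case: verifying that DARTS realizes AYZ's meet-in-the-middle within its transform algebra, specifically that the $\DBP$ bound on each projected half-path indeed evaluates to $O(N\Delta^{\lceil n/2\rceil})$ under the all-light degree constraints, and that the final combination of the two halves inside the Split transform costs no more than $\OUT$ plus the already-accounted half-path work. The heavy case and the bookkeeping over configurations are comparatively routine given Proposition~\ref{prop:yannakakis} and the stated Heavy/Light/Split bounds.
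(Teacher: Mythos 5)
Your overall strategy coincides with the paper's: degree-uniformization with constant $L$ plays the role of AYZ's per-value heavy/light split; heavy configurations are handled by a Heavy transform on the high-degree attribute, after which the cycle becomes a chain with $Q(G')=O(N)$ by Proposition~\ref{prop:yannakakis}, giving $N^2/\Delta$; light configurations are handled by a meet-in-the-middle argument costing $N\Delta^{\lceil n/2\rceil}$; and $\Delta=N^{1/(1+\lceil n/2\rceil)}$ balances the two. The heavy case is identical to the paper's proof and is fine, as is the bookkeeping over the polylogarithmically many configurations.

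The light case, however, has a concrete loose end --- the very one you flag. You apply Split first with $S=\{X_1,X_m\}$ and then Light within each half, hoping that ``combining the two projected halves'' costs only $\OUT$. But the Split bound is $Q(G)\le P(G_1')+Q(G_1'')+Q(G_2'')$, and $G_1''$, $G_2''$ are not the bare half-paths: by the construction of the Split procedure each of them contains the relation $R_S$ on $\{X_1,X_m\}$, which closes the arc back into a cycle. So they are not $\alpha$-acyclic, Proposition~\ref{prop:yannakakis} does not apply to them, and their $Q$ values are not absorbed into $\OUT$. The step is repairable under the all-light assumption --- a further sequence of Light transforms collapses the chain part of $G_i''$ into a single relation of size $\le N\Delta^{\lceil n/2\rceil-1}$ covering all of its attributes, after which the $R_S$ chord only filters, so $Q(G_i'')\le N\Delta^{\lceil n/2\rceil-1}$ --- but your write-up does not carry this out. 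The paper sidesteps the issue by reversing the order: it first runs an interleaved sequence of $n-2$ Light transforms anchored at $A_1$, growing the two halves into relations $R_l$ and $R_r$ of size $\le N\Delta^{\lceil n/2\rceil-1}$ each (with the DBP bound of the $i$-th step shown inductively to be $\le N\Delta^{i+1}$ via an explicit two-element cover), and only then invokes Proposition~\ref{prop:yannakakis} on the resulting two-relation $\alpha$-acyclic join. Either ordering yields $O(N^2/\Delta + N\Delta^{\lceil n/2\rceil} + \OUT)$, but as written your proof leaves the $Q(G_1'')+Q(G_2'')$ terms unbounded.
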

\begin{proof}
Let $R_i$ be the relation with schema $A_i, A_{i+1}$ ($R_n$ has schema $A_n, A_1$).
Our proof follows the AYZ algorithm described in Section~\ref{sec:AYZ}. Let $\Delta = N^{\frac{1}{1 + \lceil \frac{n}{2} \rceil}}$. Then in degree configuration where at least one attribute $A_1$ has degree $> \Delta$ in a relation, we perform a heavy transform on $A_i$. The number of distinct $A_i$ values is at most $\frac{N}{\Delta}$. 
Thus $Q(G) \leq \frac{N}{\Delta}Q(G')$. Since $G'$ is $\alpha$-acyclic, $Q(G') \leq N$. Thus $Q(G) \leq \frac{N^2}{\Delta} = N^{2 - \frac{1}{1 + \lceil \frac{n}{2}\rceil}}$. Now consider degree configurations where all attributes $A_i$ have all degrees $\leq \Delta$. Then we perform a sequence of $n-2$ light transforms. In the $(2i+1)^{th}$ step, we perform a light transform with $X = \{A_1, A_2,\ldots,A_{i+3}\}$. And in the $(2i+2)^{nd}$ step, we perform a light transform with $X = \{A_1, A_n, A_{n-1} \ldots A_{n-i-1}\}$. The DBP bound for the $R_X$ in the $(2i+1)^{st}$ and $(2i+2)^{nd}$ transform is $\leq N \Delta^{i+1}$. This can be proved inductively. For $i=0$, setting cover $C = \{(R_1, \{_1,A_2\}), (R_1, \{A_2, A_3\})\}$. The solution to the linear program has $w_{R_1, \{A_1,A_2\}} = w_{R_2, \{A_3\}} = 1$ and other values $0$, which gives a output size bound of $N\Delta$. The $\{A_1,A_n,A_{n-1}\}$ case is similar. Now assume the inductive hypothesis for upto $i-1$. For $i$, We consider a cover $C = \{(R'_i, \attr(R'_i)), (R_{i+2}, \{A_{i+2}, _{i+3}\})\}$ where $R'_i$ is the relation with schema $\{A_1,A_2\ldots,A_{i+2}\}$ that was obtained from the last to last light transform. $|R'_i| = N\Delta^{i}$ by the inductive hypothesis. Then the solution to the linear program is $w_{R'_i, \attr(R'_i)} = w_{R_{i+2}, \{A_{i+2}, A_{i+3}\}} = 1$, giving the required bound of $N\Delta^{i+1}$. THe other case (for $(2i+2)^{nd}$ light transform is similar). At the end of these transforms, we will have two relations, $R_l$ with schema $A_1,A_2,\ldots,A_{\lceil \frac{n}{2}\rceil}$ and size $N\Delta^{\lceil\frac{n}{2}\rceil-1}$, and relation $R_r$ with schema $A_1,A_n,\ldots,A_{\lceil \frac{n}{2}\rceil}$ and size $\leq N\Delta^{\lceil\frac{n}{2}\rceil-1}$. Since these two relations now form an $\alpha$-acyclic join (any two relations form an $\alpha$-acyclic join), we use Proposition~\ref{prop:yannakakis} to join them in time $O(N^{2 - \frac{1}{1 + \lceil \frac{n}{2}\rceil}} + \OUT)$ as required. 
\end{proof}

\section{Subquadratic Joins}\label{sec:1-series-parallel-proofs}

\subsection{Tree-Cycle Structures}\label{sec:TCS}
We mention s simple extension of the AYZ result.
\begin{definition}
{\em Tree-Cycle Structure (TCS)}: 
\begin{enumerate}
\item A cycle of any length (including 1, which gives a single isolated node) is a TCS
\item If $T_1$ and $T_2$ are two disjoint TCSs, then adding an edge from any vertex of $T_1$ to any vertex of $T_2$ gives a new TCS. 
\item All TCSs can be formed by the above two steps. 
\end{enumerate}
\end{definition}

We can show that joins on TCSs can be processed in subquadratic time as well.
\begin{theorem}
\label{theorem:TCS-join}
A join over a TCS $T= (V, E)$ can be found in time $O(N^{2 - \frac{1}{1 + \lceil \frac{n}{2} \rceil}} + \OUT)$, where $n$ is the length of the longest cycle in the TCS.
\end{theorem}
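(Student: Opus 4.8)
The plan is to induct on the recursive construction of the TCS, peeling off one cycle at a time and stitching the pieces together across bridge edges with the Split transform, so that the only genuinely super-linear work is the per-cycle processing already handled by the AYZ result (Proposition~\ref{prop:ayz}). Since every TCS is a tree of cycles joined by bridges, it always has a \emph{leaf cycle} $C$ attached to the remainder $T'$ by a single bridge $e=(U,V)$ with $U\in V(C)$ and $V\in V(T')$. I would first run degree-uniformization once on the whole query (as DARTS always does) and argue the bound inside an arbitrary degree configuration; because the number of configurations is constant in $N$, summing over them preserves the bound.

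Within a configuration I would apply the Split transform with $G_1=C$ (the relations of the leaf cycle), $G_2$ the relations of $T'$ together with the bridge relation $R_e$, and articulation set $S=\{U\}$. This $S$ is exactly the unique attribute shared by the two sides, since $V(C)\cap V(T')=\emptyset$ and $R_e$ only contributes $U$ to the left-hand vertex set. As we compute a full join, $\mO=\mA\supseteq S$, so the Split bound for $S\subseteq\mO$ gives $Q(G)\leq P(G'_1)+Q(G''_1)+Q(G''_2)$. Here $G''_1$ is the full leaf cycle, so $Q(G''_1)=O(N^{2-1/(1+\lceil n/2\rceil)}+\OUT)$ by Proposition~\ref{prop:ayz}; its output term is bounded by the \emph{final} output because the Split procedure computes $O_2$ first and then restricts $R_S$ by the semijoin $R_S\cap\pi_S O_2$, so only $U$-values feeding a real output survive. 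The term $Q(G''_2)$ is $T'$ with an $\alpha$-acyclic pendant made of $R_e$ and the reduced unary $R_S$; by induction $Q(T')=O(N^{2-1/(1+\lceil n/2\rceil)}+\OUT)$ (its longest cycle is at most $n$), and removing the pendant is an ear step handled as in Proposition~\ref{prop:yannakakis} at cost $O(\IN+\OUT)$. Since the decomposition has a constant number of cycles and bridges, these subquadratic terms sum to $O(N^{2-1/(1+\lceil n/2\rceil)})$.

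The remaining and central term is $P(G'_1)$, the cost of computing $R_S=\pi_{\{U\}}\!\left(\Join_{R\in\mR_{C}}R\right)$ with \emph{no} additive $\OUT$, i.e.\ of deciding which values of $U$ survive the leaf-cycle join. This cannot be obtained by enumerating the cycle join, whose output may be $\Theta(N^2)$ for a $4$-cycle. The key lemma I would prove is that projecting a cycle of length $\ell\leq n$ onto a single vertex takes pure time $O(N^{2-1/(1+\lceil \ell/2\rceil)})$. The argument mirrors AYZ: fix a threshold $\Delta$; in configurations where some vertex is heavy (degree $>\Delta$) apply a Heavy transform to marginalize over its $\leq N/\Delta$ values, turning the cycle into a chain on which $\pi_{U}$ is computed by Yannakakis-style semijoins in $O(N)$ time (a projection onto a connected attribute set of an $\alpha$-acyclic join costs $O(\IN+\text{proj. size})$ and the projection size is $\leq N$), for a total of $O(N^2/\Delta)$; in the all-light configuration, walk outward from each of the $\leq N$ values of $U$ in both directions around the cycle and test at the middle vertex whether the two reachable neighbour sets intersect, recording a Boolean for $U$, at cost $O(N\Delta^{\lceil \ell/2\rceil})$. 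Balancing at $\Delta=N^{1/(1+\lceil \ell/2\rceil)}$ yields the claimed exponent, and because we only store one bit per $U$-value there is no output term.

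This projection lemma is the main obstacle: the whole point is to avoid paying for the possibly quadratic full cycle output when only the small single-vertex projection feeds the rest of the join, and everything else is bookkeeping. The base case is a single cycle (including the length-$1$ isolated node), handled directly by Proposition~\ref{prop:ayz}, and the induction then delivers the stated $O(N^{2-1/(1+\lceil n/2\rceil)}+\OUT)$ bound, with $n$ the longest cycle since every peeled cycle has length $\leq n$ and so contributes an exponent at most $2-1/(1+\lceil n/2\rceil)$.
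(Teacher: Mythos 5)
The paper states Theorem~\ref{theorem:TCS-join} as ``a simple extension of the AYZ result'' and gives no proof, so there is nothing to compare against line by line; your argument is correct and is exactly the intended extension. You correctly isolate the one point that is not pure bookkeeping: the term $P(G'_1)$ for a leaf cycle projected onto its cut vertex $U$ must be bounded \emph{without} an additive output term, since the full cycle join can have size $\Theta(N^2)$, and your projection lemma (heavy values $\Rightarrow$ marginalize to a chain and full-reduce in $O(N)$ per value, giving $O(N^2/\Delta)$; all-light $\Rightarrow$ meet-in-the-middle from each $u \in U$ at cost $O(N\Delta^{\lceil \ell/2\rceil})$) is the same device the paper itself uses to bound $P(G'_1)$ in the proof of Lemma~\ref{lemma:s-t-edge}, so it is squarely within the DARTS toolkit. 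The remaining steps --- the Split at the cut vertex with $S=\{U\}\subseteq\mO$, charging $|O_1|,|O_2|\le\OUT$ via the semijoin $R_S\cap\pi_S O_2$, invoking Proposition~\ref{prop:ayz} for $Q(G''_1)$, and inducting on the tree of cycles with ear removal as in Proposition~\ref{prop:yannakakis} --- are all sound; the only loose ends are genuinely minor (the induction hypothesis should formally tolerate the unary relation $R_S$ hanging off the attachment vertex, which a linear-time semijoin absorbs, and a leaf of the block tree must be chosen so that the peeled cycle meets the rest in a single cut vertex, which the recursive TCS construction guarantees).
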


The definition of Tree-Cycle structures can be extended to include other graphs for which we show subquadratic solvability.

\subsection{Subquadratic $1$-series-parallel graphs}
\begin{lemma}\label{lemma:s-t-edge}
If we have a $1$-series-parallel graph, which has a direct edge from $X_S$ to $X_T$ (i.e. a path of length $1$), then a join on that graph can be processed in subquadratic time.
\end{lemma}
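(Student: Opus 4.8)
The plan is to show that the presence of a direct edge $X_S X_T$ collapses the join into an $\alpha$-acyclic problem, which by Yannakakis' algorithm (Proposition~\ref{prop:yannakakis}) is solvable in time $O(\IN + \OUT)$, hence subquadratic. The intuition is that once $X_S$ and $X_T$ are joined by a single relation $R_{ST}(X_S, X_T)$ of size $N$, each of the remaining source-to-sink paths forms a chain whose two endpoints are \emph{already} bound together by $R_{ST}$, so the whole graph becomes a collection of cycles sharing only the edge $\{X_S, X_T\}$; but crucially, every such cycle passes through the single edge $X_S X_T$, and we can exploit this to linearize the computation.

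Concretely, I would first apply a \emph{Split} transform (or reason directly): treat $R_{ST}$ as the relation whose attribute set $\{X_S, X_T\}$ acts as a separating articulation set $S = \{X_S, X_T\}$. Since this is a pure join, $\mO = \mA$, so $S \subseteq \mO$ and the Split transform is applicable. Splitting on $S$ decomposes the graph along the edge $X_S X_T$: one side is a single path (say the first $s$-$t$ path), the other side contains $R_{ST}$ together with the remaining paths. The key observation is that a single $s$-$t$ path, \emph{together with the edge} $R_{ST}$ binding its endpoints, is exactly a cycle — but one whose projection onto $\{X_S, X_T\}$ can be computed in linear time because we may iterate over the $N$ tuples of $R_{ST}$ and, for each $(x_s, x_t)$ pair, check reachability along the path. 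More cleanly, I would peel off each path one at a time: for path $P_i$ with intermediate vertices, compute the projection $\pi_{\{X_S, X_T\}}\big(\Join_{R \in P_i} R\big)$, which is the set of endpoint-pairs connected through $P_i$. Since each path is itself a chain join (which is $\alpha$-acyclic), this projection is computable in $O(N + M_i)$ time where $M_i \le N$ is the number of surviving endpoint pairs; intersecting these projected relations with $R_{ST}$ keeps every intermediate relation bounded by the $N$ tuples of $R_{ST}$.

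After performing this projection-and-intersection for every path, all paths have been reduced to relations on the common attribute set $\{X_S, X_T\}$, each of size $\le N$. Intersecting them (a sequence of semijoins against $R_{ST}$, each $O(N)$) yields the set of valid endpoint pairs, and then a single sweep reconstructs the full output tuples by joining back along each $\alpha$-acyclic path via Yannakakis. The total cost is $O(\IN + \OUT)$. In the DARTS language, this is a sequence of Split transforms whose $P(G_1')$ sub-costs are each linear (a single path projected onto its endpoints is $\alpha$-acyclic, so by the argument of Proposition~\ref{prop:yannakakis} its $Q$-value, and with $\mO = S$ its $P$-value, is $O(\IN)$), followed by an $\alpha$-acyclic residual join.

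The main obstacle I anticipate is verifying the $P$-bound rather than merely the $Q$-bound: the Split transform's clean guarantee $P(G) \le P(G_1') + P(G_2)$ only holds in the case $\mO \subseteq \bigcup_{R \in \mR_{G_2}} \attr(R)$, whereas here we are in the case $S \subseteq \mO$ (since all attributes are output). In that branch we only have a $Q$-bound, $Q(G) \le P(G_1') + Q(G_1'') + Q(G_2'')$. So the real work is confirming that each $P(G_1')$ — projecting a single path onto its two endpoints — is genuinely $O(\IN)$ despite the intermediate cross-products a path could in principle generate. This follows because projecting a chain onto its two endpoints is $\alpha$-acyclic and the intermediate blow-up is avoided by Yannakakis-style semijoin reductions before materializing, but making the bound $O(\IN)$ rather than $O(\IN + \mathrm{intermediate})$ explicit — i.e. ensuring the projected endpoint-pair relation never exceeds the $N$ pairs pinned down by $R_{ST}$ — is the step requiring care.
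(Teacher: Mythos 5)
Your overall skeleton (induction/peeling over the $s$--$t$ paths, Split on the articulation set $S=\{X_S,X_T\}$, using the $Q(G)\le P(G_1')+Q(G_1'')+Q(G_2'')$ branch, and correctly identifying $P(G_1')$ as the crux) matches the paper's. But the step you flag as ``requiring care'' is in fact where the argument breaks, and it cannot be repaired in the form you state. You claim that $P(G_1')$ --- computing $\pi_{\{X_S,X_T\}}$ of a single path joined with $R_{ST}$ --- is $O(\IN)$ because the path is a chain, hence $\alpha$-acyclic, and Yannakakis-style semijoins avoid intermediate blow-up. Yannakakis gives $O(\IN+\OUT)$ only for the \emph{full} output of an acyclic join; it does not give output-sensitive time for a \emph{projection} of an acyclic join (the query $\pi_{\{X_S,X_T\}}(R_1(X_S,A)\Join R_2(A,X_T))$ is acyclic but not free-connex, and is essentially Boolean matrix multiplication). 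Semijoin reduction removes dangling tuples but does not tell you, for a surviving pair $(x_s,x_t)\in R_{ST}$, whether $x_s$ and $x_t$ are actually connected \emph{to each other} through the path. Concretely, your argument applied to the $1$-series-parallel graph consisting of the edge $X_SX_T$ plus one path of length $2$ would yield an $O(\IN+\OUT)$ algorithm for the triangle join, contradicting the $3$-SUM-based $\IN^{4/3}$ lower bound cited in this very paper. So the lemma's conclusion cannot be strengthened to linear time, and any proof of it must do more than acyclicity bookkeeping.

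The paper's proof supplies exactly the missing ingredient: a degree (heavy/light) case split to bound $P(G_1')$ \emph{subquadratically} rather than linearly. For a path $Z$ of length $n$, set $\delta=N^{1/(n+2)}$. If every attribute on $Z$ has degree at most $\delta$, a cascade of Light transforms computes the cycle ($Z$ plus the $X_SX_T$ edge) in time $N\delta^{n}<N^2$. If some $X_l$ has degree exceeding $\delta$, a Heavy transform marginalizes over its at most $N/\delta$ values, each residual problem being a chain whose projection onto $\{X_S,X_T\}\subseteq\attr(R_{ST})$ \emph{is} computable in $O(N)$ via the $P(G)\le P(G_1')+P(G_2)$ branch of Split (here the output attributes all lie in $G_2$, so the problematic projection issue disappears), giving $N^2/\delta$ total. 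Balancing yields $N^{2-\epsilon}$ with $\epsilon$ depending on the path length. Your proposal is missing this entire case analysis, and it is not optional: it is the reason the bound is $N^{2-\epsilon}$ and not $\IN+\OUT$.
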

\begin{proof}
Let $\mZ$ be the set of paths of length $> 1$ from $X_S$ to $X_T$. We use induction on $|\mZ|$. If $|\mZ| = 0$, then the join is just a single edge, which gets processed in time $O(N)$. Now assume we have a subquadratic ($N^{2 - \epsilon_{k-1}}$) solution for $|\mZ| = k$, and let $|\mZ| = k$. Now for any $Z \in \mZ$, we perform a split transform, with articulation set consisting $S = \left\lbrace X_S, X_T \right\rbrace$, and $G_1$ consisting of the attributes of $Z$. Since $G_1$ is now a cycle, it's $Q$ is $\leq N^{2-\epsilon}$ for some $\epsilon > 0$. And since $S \subseteq \mO$, we have 
\begin{align*}
Q(G) &\leq P(G'_1) + Q(G''_1) + Q(G''_2) \\
&= P(G'_1) + N^{2-\epsilon} + N^{2 - \epsilon_{k-1}} \\
\end{align*}
So to show subquadraticness, it suffices to show that $P(G'_1)$ is subquadratic. To do this, suppose the length of path $Z$ is $n$. Let $\delta = N^{\frac{1}{N + 2}}$. 

\squishlist
\item Suppose all attributes in $G_1$ have degree $\leq \delta$. Then we perform a sequence of light transforms until the join is solved, at a total cost of $N\delta^{n}$ which is subquadratic.
\item Suppose the path Z is given by $X_0 = X_S, X_1, \ldots, X_n = X_T$. If any attribute $X_l$ in $G_1$ has degree $> \delta$, we perform a heavy transform on it. After a heavy transform, we are left with a chain $X_{l+1}, X_{l+2}, \ldots, X_T, X_S, X_1, \ldots, X_{l-1}$. Then we perform a split tranform with articulation set $X_{l-2}$, and $G_1$ consisting of $X_{l-2}, X_{l-1}$. Since the output attribute set consists of $X_S, X_T$, which lies entirely in $G_2$, we use the split bound 
$$P(G) \leq P(G'_1) + P(G_2)$$
Here, the $P(G'_1)$term is simply $N$, so this split transforms effectively removes $X_{l-1}$ from the chain. We can similarly remove remaining attributes from the edges,leaving only $X_S$ and $X_T$, which gives a $P$ value of $N$, which is subquadratic.
\squishend

This shows that the join can be processed in subquadratic time. 
\end{proof}

\begin{lemma}\label{lemma:s-t-2-path}
Suppose we have a $1$-series-parallel graph $G$, which does not have a direct edge from $X_S$ to $X_T$, but which has a vertex $X_U$ such that there is an edge from $X_S$ to $X_U$ and from $X_U$ to $X_T$ (i.e. a path of length $2$ from $X_S$ to $X_T$). Let $G'$ be the graph obtained by deleting the vertex $X_U$ and edges $X_SX_U$ and $X_UX_T$. Then the join on $G$ can be processed in subquadratic time if and only if that on $G'$ can be processed in subquadratic time.
\end{lemma}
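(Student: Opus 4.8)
The statement is an equivalence, so I would prove the two implications separately; I expect essentially all of the work to be in the direction ``$G'$ subquadratic $\Rightarrow$ $G$ subquadratic''. Throughout, write $R_1(X_S,X_U)$ and $R_2(X_U,X_T)$ for the two relations of the length-$2$ path, and recall that here $\mO=\mA$ and every relation has size $N$.

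For the easy direction ($G$ subquadratic $\Rightarrow$ $G'$ subquadratic) I would reduce a join on $G'$ to a join on $G$ by reinserting a \emph{harmless} length-$2$ path. Assuming $G'$ has at least one other $X_S$--$X_T$ path (otherwise both joins are trivial), pick an edge $e_S$ of $G'$ incident to $X_S$ and an edge $e_T$ incident to $X_T$, fix a fresh value $u_0$ for $X_U$, and set $R_1=\pi_{X_S}(e_S)\times\{u_0\}$ and $R_2=\{u_0\}\times\pi_{X_T}(e_T)$, each of size $\le N$. Since any $X_S$-value surviving the join on $G'$ already lies in $\pi_{X_S}(e_S)$ (and symmetrically for $X_T$), the pair projection $\pi_{X_S,X_T}(R_1\Join R_2)=\pi_{X_S}(e_S)\times\pi_{X_T}(e_T)$ contains every relevant pair, so the $2$-path adds no constraint. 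Thus the join on $G$ is exactly the join on $G'$ with each output tuple extended by $X_U=u_0$; $\OUT$ is unchanged, and projecting out $X_U$ turns any subquadratic algorithm for $G$ into one for $G'$.

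For the hard direction I would solve the join on $G$ by a heavy/light split of the internal vertex $X_U$ at a threshold $\Delta$ to be fixed, exploiting that after degree-uniformization $\degree(X_U,\cdot,\cdot)$ is pinned to within a constant factor, so every degree configuration is entirely \emph{heavy} (degree $\ge\Delta$ in $R_1$ or $R_2$) or \emph{light} (degree $<\Delta$ in both). On a heavy configuration I would apply the \textbf{Heavy} transform on $X_U$: there are $\le N/\Delta$ values in $\textsf{vals}(X_U)$, and marginalizing any one of them collapses the $2$-path into two unary filters on $X_S$ and $X_T$, leaving precisely $G'$; hence $Q(G)\le (N/\Delta)\,(Q(G')+O(N))$. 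On a light configuration I would apply the \textbf{Split} transform with articulation set $S=\{X_S,X_T\}$, taking $G_1$ to be the light-restricted $2$-path and $G_2=G'$. Because $S\subseteq\mO$, the transform gives $Q(G)\le P(G_1')+Q(G_1'')+Q(G_2'')$, with $Q(G_2'')=Q(G')$ and $Q(G_1'')=O(N)$ since the $2$-path is $\alpha$-acyclic. The only genuinely new cost is $P(G_1')$, the price of materializing $\pi_{X_S,X_T}(R_1\Join R_2)$, which I would bound by $\sum_{u\ \text{light}}\deg_{R_1}(u)\,\deg_{R_2}(u)\le\Delta\sum_u\deg_{R_2}(u)=\Delta N$, so $P(G_1')=O(\Delta N)$.

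It then remains to balance the two regimes. Using the hypothesis $Q(G')\le N^{2-\epsilon'}$, the heavy cost is $N^{3-\epsilon'}/\Delta$ and the light cost is $O(\Delta N)$; choosing $\Delta=N^{1-\epsilon'/2}$ makes both $O(N^{2-\epsilon'/2})$, which is subquadratic, and summing over the (polylog-many) degree configurations preserves this. The step I expect to be the main obstacle is exactly the light case: a single high-degree $X_U$ can make $\pi_{X_S,X_T}(R_1\Join R_2)$ of size $\Theta(N^2)$, so the bare Split bound is quadratic; the whole argument works only because heavy $X_U$ values are peeled off first (paying the few $N/\Delta$ marginalizations times the assumed subquadratic $Q(G')$), and because $\Delta$ is tuned to equate the heavy and light costs.
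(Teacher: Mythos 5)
Your skeleton---the Cartesian-product reduction for the easy direction, and the heavy/light dichotomy on $X_U$ at threshold $\Delta\approx N^{1-\epsilon'/2}$ with the Heavy transform disposing of the heavy configurations---is exactly the paper's. The gap is in the light case, at the step ``$Q(G''_2)=Q(G')$''. After you split on $S=\{X_S,X_T\}$ with $G_1$ the $2$-path, the residual problem $G''_2$ is not $G'$: by the definition of the Split transform it is $G'$ \emph{augmented} with $R_S=\pi_{X_S,X_T}(R_1\Join R_2)$, whose size you yourself only bound by $\Delta N=N^{2-\epsilon'/2}$. The hypothesis gives an $O(N^{2-\epsilon'}+\OUT)$ algorithm for $G'$ with all relations of size $N$; it says nothing about $G'$ with an extra superlinearly-sized $X_SX_T$ edge, and invoking Lemma~\ref{lemma:s-t-edge} on the augmented graph fails because its guarantee is subquadratic in the \emph{largest} relation, here $N^{2-\epsilon'/2}$. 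Nor can you drop $R_S$, run the assumed $G'$ algorithm, and filter: that costs $Q(G')+\OUT(G')$, and $\OUT(G')$ is not controlled by $\OUT(G)$. For instance, if $G'$ is a $6$-cycle whose relations are complete bipartite graphs on $\sqrt{N}\times\sqrt{N}$ vertex sets then $\OUT(G')=N^{3}$, while giving $R_1$ and $R_2$ disjoint $X_U$-supports (all $X_U$ degrees equal to $1$, hence light) forces $\OUT(G)=0$.

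The paper closes exactly this hole by \emph{not} invoking the assumed $G'$ algorithm in the light case. After a light transform on $\{X_S,X_U,X_T\}$ produces the $X_SX_T$ relation of size $\leq N^{2-\epsilon/2}$, it reruns the path-by-path argument of Lemma~\ref{lemma:s-t-edge}: for each remaining path $Z$ of length $n$ it splits again on $\{X_S,X_T\}$ and bounds $P(G'_1)$ for that single path by a fresh heavy/light analysis with threshold $\delta=N^{\epsilon/(2n+4)}$, exploiting that every relation on the path other than the one big $X_SX_T$ edge still has size $N$, so the chain of light transforms costs $N^{2-\epsilon/2}\delta^{n}$ and the heavy subcase costs at most $N^{2}/\delta+|R_l|$. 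Your write-up correctly flags the light case as the crux, but the fix you propose (peeling off heavy $X_U$ values first) only controls $|R_S|$, not the cost of joining $R_S$ with $G'$; you still need the per-path argument (or an induction on the number of paths as in Lemma~\ref{lemma:s-t-edge}) to finish.
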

\begin{proof}
One direction of the lemma is easy to prove: If $G'$ requires quadratic time to solve, then by setting $X_SX_U$ and $X_UX_T$ to be full Cartesian products, we make join $G$ equivalent to $G'$, which means it must take quadratic time.

Now assume $G'$ can be solved in time $N^{2-\epsilon}$ for some $\epsilon > 0$. Firstly, if $X_U$ has degree $> N^{1 - \frac{\epsilon}{2}}$ in either relation, then we perform a heavy transform on $X_U$, giving a total cost of $\leq N^{2- \frac{\epsilon}{2}}$, which is subquadratic. So now assume the degree of $X_U$ is $\leq N^{1 - \frac{\epsilon}{2}}$. Then perform a light transform on $\{X_S, X_U, X_T\}$, to get a relation of size $\leq N^{2 - \frac{\epsilon}{2}}$. Then split with $G_1$ consisting of $X_S, X_U, X_T$. This gives a relation with attributes $X_SX_T$ of size $\leq N^{2 - \frac{\epsilon}{2}}$, to be added to $G_2$.

Now the proof is similar to the proof for the previous lemma. We again have an edge from $X_S$ to $X_T$, along with a number of other paths. Only this time, the edge relation has size $\leq N^{2 - \frac{\epsilon}{2}}$, rather than $= N$. But like before, we can choose a path $Z$, and let its length be $n$. Then we perform a split with articulation points $X_S, X_T$, and $G_1$ consisting of attributes of $Z$. Then we are left with a $P(G'_1)$, where $\mO_{G'_1} = \left\lbrace X_S, X_T \right\rbrace$. Like before, we choose a small enough $\delta$ ($ = N^{\frac{\epsilon}{2n+4}}$) such that if all attributes in $Z$ have degree $\leq \delta$ in relations if size $N$, then we perform a sequence of light transforms that give total cost $N^{2 - \frac{\epsilon}{2} + \delta^n}$ which is subquadratic. 

If the attributes don't all have degree $\leq \delta$ in relations of size $N$, then choose the smallest $l$ such that $X_l$ has degree $> \delta$ (where $Z$ is again written as $X_0 = X_S, X_1, \ldots, X_{n-1}, X_n = X_T$). Suppose its degree is $d$. Then we perform light transforms for $\{X_0, X_1, X_2\}$, $\{X_0, X_1,\ldots,X_3\}$, $\ldots$ $\{X_0,X_1\ldots,X_{l-1}\}$, which give a total cost of $N^{2 - \frac{\epsilon}{2} + (l-1)\delta}$, getting a relation $R_l$ with attributes $X_n, X_0, X_1,\ldots, X_{l}$. Let the degree of $X_l$ in $R_l$ be $d'$. Now we perform the heavy transform on $X_l$, which has at most $\min(\frac{N}{d}, \frac{|R_l|}{d'})$ distinct values. For each value, we get a chain, where each relation is of size $\leq N$, except for $R_l$ which has size $d'$. Then using split transforms like in the previous lemma proof, we can take out $X_{l+1}, X_{l+2}$ and so on one by one, and be left with $R_l$ alone, which is projected down to $\left\lbrace X_S, X_T \right\rbrace$. This gives a cost of $N + d'$ per $a \in X_l$. The total cost is thus $\min(\frac{N}{d}, \frac{|R_l|}{d'}) \times (N + d') \leq \frac{N^2}{\delta} + |R_l|$, which is subquadratic. This proves the lemma, as required.
\end{proof}

\subsection{$3$-SUM Hardness Proof}
We formally state and prove the lemma for $3$-SUM hardness of certain $1$-series-parallel graphs.
\begin{lemma}\label{lemma:s-t-3-paths}
Let $G$ be any $1$-series-parallel graph which does not have an edge from $X_S$ to $X_T$, but has $\geq 3$ paths of length at $\geq 3$ each, from $X_S$ to $X_T$. Then a join over $G$ can be processed in subquadratic time only if the $3$-SUM problem can be solved in subquadratic time.
\end{lemma}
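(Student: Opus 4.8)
The statement is a conditional lower bound, so the plan is to give a reduction: assuming a subquadratic algorithm for the join over $G$, build a subquadratic algorithm for $3$-SUM. Since the excerpt already records that $3$-XOR is subquadratically reducible from $3$-SUM (Problem~\ref{prob:3xor}), it suffices to reduce $3$-XOR to the join over $G$. So given $n$ numbers $x_1,\dots,x_n$ I would construct, in near-linear time, an instance of the join over $G$ whose output is nonempty exactly when some triple XORs to $0$, with total input size $\IN = \tilde O(n)$. A join algorithm running in $O(\IN^{2-\epsilon}+\OUT)$ would then decide $3$-XOR in subquadratic time, and hence (via the given reduction) $3$-SUM as well.

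First I would normalize the graph. Out of the $\geq 3$ paths of length $\geq 3$ I keep three, say $P_1,P_2,P_3$, and make every other edge of $G$ an unconstraining relation whose composition from $X_S$ to $X_T$ is the full product; this is representable with $O(n)$ tuples (for example by routing every value through a single dummy internal vertex), so the remaining paths impose no constraint and only $P_1,P_2,P_3$ matter. On each kept path of length $>3$ I collapse the surplus edges to identity/matching relations so that its $X_S$-to-$X_T$ composition coincides with that of a length-$3$ gadget path $X_S - U_p - V_p - X_T$. Thus it suffices to design the nine relations on three internally disjoint length-$3$ paths so that the join is nonempty iff the $3$-XOR instance is satisfiable. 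I would restrict the domains so that $X_S$ and $X_T$ carry the "halves" needed to witness a cancelling triple, while the two internal vertices of each path carry a candidate input number, pinned to the actual input set by the middle edge $E_p^2$ (of size $n$).

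The heart of the argument is the gadget, and the succinctness of the relations is the main obstacle. The cancellation constraint coming from the high/low split of $\bigoplus x = 0$ is a \emph{dense} bipartite relation on $(X_S,X_T)$ (up to $n^2$ pairs), and a single length-$3$ path cannot represent such a relation with few tuples, since its middle edge would blow up well beyond $\tilde O(n)$ in the hard bit-length regime. This is precisely the same phenomenon that makes two paths (a single cycle) subquadratically solvable by the earlier lemmas and the AYZ argument, so the reduction genuinely needs three paths. The crux is therefore to distribute the dense cancellation relation across exactly three length-$3$ paths, using the shared endpoints $X_S,X_T$ as the common coordinates that force consistency across the three checks, so that the three compositions satisfy $C_1\cap C_2\cap C_3$ equal to exactly the set of cancelling configurations. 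I expect the two delicate points to be (i) ensuring the three-way intersection introduces no spurious solutions from treating the high half, the low half, or the membership index independently, and (ii) keeping every relation near-linear in size, which constrains the bit-length regime of the $3$-XOR instance I can target.

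Finally I would verify both directions of correctness together with the complexity bookkeeping. Forward: any $3$-XOR solution decodes to a consistent assignment of all vertices, hence an output tuple. Backward: any output tuple yields three input numbers XORing to $0$. Since the construction is near-linear and $\IN=\tilde O(n)$, a subquadratic join gives subquadratic $3$-XOR. One remaining subtlety is the additive $\OUT$ term in $O(\IN^{2-\epsilon}+\OUT)$, which I would control with the standard isolation/subsampling trick: randomly thin the arrays so that a satisfiable instance retains few witnesses with good probability, ensuring that merely reading the output does not by itself cost $\Omega(n^2)$. Composing this with the given $3$-SUM-to-$3$-XOR reduction completes the conditional lower bound.
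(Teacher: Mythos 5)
Your high-level strategy matches the paper's: reduce $3$-SUM to $3$-XOR, then encode a $3$-XOR instance as a join over three internally disjoint length-$3$ paths, with the endpoints $X_S,X_T$ enforcing cancellation of the two halves of the (hashed) numbers and the middle edges pinning membership in the input set. However, the proposal stops exactly where the proof has to start: the gadget is never constructed, only described as ``the crux.'' The paper's construction is concrete: hash the inputs down to $H=N^{1+d}$ buckets with a \emph{linear} hash $h$ (so $x_i\oplus x_j\oplus x_k=0$ implies $h(x_i)\oplus h(x_j)\oplus h(x_k)=0$), write each hash as a low half $b$ and a high half $c$ on $\frac{(1+d)\log N}{2}$ bits, put the pairs $(b_i,c_i)$ on the middle edges $S_j(B_j,C_j)$, and let $A=X_S$ (resp. $D=X_T$) have one value per \emph{triple} of half-words XORing to zero, connected to its three components in $B_1,B_2,B_3$ (resp. $C_1,C_2,C_3$). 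Note that the number of such triples is $N^{1+d}$, so the relations $R_j(A,B_j)$ and $T_j(C_j,D)$ necessarily have size $N^{1+d}$: your claim that $\IN=\tilde O(n)$ cannot be realized, and the whole reduction only closes because one chooses $d$ small enough that $(1+d)(2-c)<2$. You flag the size issue as a constraint on the ``bit-length regime'' but do not resolve it, and resolving it is not optional.

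The second gap is the back end of the reduction. Hashing to a polynomial universe creates false positives (triples that XOR to zero only after hashing), so nonemptiness of the join does \emph{not} decide $3$-XOR; the paper instead (i) removes the expected $N^{1-a}$ elements in heavy hash buckets by brute force in time $N^{2-a}$, (ii) bounds the expected number of false positives by $N^3/H=N^{2-d}$ so that the $\OUT$ term stays subquadratic when the answer is ``no,'' (iii) verifies each output tuple against the at most $N^{3a}$ original triples in its bucket combination, and (iv) terminates and answers ``true'' if the join algorithm exceeds its time budget. Your ``isolation/subsampling'' suggestion addresses a different problem (many true witnesses), not the false positives introduced by hashing, and thinning the input risks destroying the unique witness. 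Without the hashing-and-verification machinery, the proposal does not yield a decision procedure for $3$-XOR, so as written the argument is a plan with the two load-bearing components missing.
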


We will reduce our join problem to the $3$-XOR problem. We only prove hardness for the simplest $1$-series-parallel graph having $3$ paths of size $= 3$ here. Joins on larger graphs can easily be reduced to this graph. Thus, we prove the theorem below (We use slightly different notation for the attribute names for convenience):

\begin{theorem*}
Consider a join over graph $G$ with attributes $A, B_1, C_1, B_2, C_2, B_3, C_3, D$, and relations $R_i(A, B_i), S_i(B_i,C_i), T_i(C_i, D) : \forall i \in \left\lbrace 1,2,3 \right \rbrace$, where each relation has size $N$. Suppose for some $c > 0$, there is an algorithm that processes the join in time $O(N^{2-c} + \OUT)$. Then $3$-SUM can be solved in time $O(N^{2 - t})$ for a $t > 0$. 
\end{theorem*}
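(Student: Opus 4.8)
The plan is to reduce the $3$-XOR problem (which is subquadratically reducible from $3$-SUM, as noted in Appendix~\ref{sec:3-SUM}) to deciding whether the join on $G$ is non-empty. Given a $3$-XOR instance on three sets $X,Y,Z \subseteq \{0,1\}^k$ of size $n$ each, I want to build relations of size $N = \tilde{O}(n)$ so that the join has an output tuple if and only if there exist $x\in X, y\in Y, z\in Z$ with $x\oplus y\oplus z = 0$. The conceptual core is the following change of variables: let the shared attribute $A$ guess the value $x$ and let the shared attribute $D$ guess the value $x\oplus y$. Since $x\oplus y\oplus z=0$ forces $z = x\oplus y$, a common assignment $(a,d)=(x,\,x\oplus y)=(x,z)$ encodes a solution exactly when $a\in X$, $d\in Z$, and $a\oplus d\in Y$. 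Because the three paths of $G$ share only $A$ and $D$, a tuple of the join exists iff there is a pair $(a,d)$ lying in all three composed relations $\rho_i := R_i\circ S_i\circ T_i \subseteq A\times D$. So it suffices to engineer the three paths to realize $\rho_1=\{(a,d):a\in X\}$, $\rho_2=\{(a,d):a\oplus d\in Y\}$, and $\rho_3=\{(a,d):d\in Z\}$, after which $\rho_1\cap\rho_2\cap\rho_3\neq\emptyset$ is precisely the existence of a $3$-XOR solution.

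Two of the three realizations are immediate. For $\rho_1$ I would take $B_1,C_1$ to be single dummy symbols, set $R_1=\{(a,\star):a\in X\}$, $S_1=\{(\star,\star)\}$, and $T_1=\{(\star,d):d\in Z\}$, which composes to the full product $X\times Z$ restricted to $a\in X$; $\rho_3$ is symmetric, with the constraint moved onto the $T_3$ edge. Each of these relations has size $O(n)$. The hard object is $\rho_2$: I need a single length-three path $A\to B_2\to C_2\to D$ whose edges each have size $\tilde{O}(n)$ but whose composition is exactly the XOR-constraint relation $\{(a,d):a\oplus d\in Y\}$. A naive attempt that routes the chosen $y\in Y$ through the middle edge enumerates (value, $y$) pairs and blows up to size $n^2$, and splitting $a,d$ into high/low halves lets one path verify only one linear block ($a_h\oplus d_h\in\pi_h(Y)$) rather than the full coupled constraint with a single common $y$. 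This is the main obstacle, and I expect to spend the bulk of the effort here.

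My intended way around it is a hashing / universe-reduction step applied before the encoding: using an $\mathbb{F}_2$-linear hash $h$ (so $h(x\oplus y\oplus z)=h(x)\oplus h(y)\oplus h(z)$) I would collapse the universe to $O(\log n)$ bits, i.e.\ a domain of size $\mathrm{poly}(n)$, and bucket the three sets accordingly; linearity guarantees that any true $3$-XOR solution survives into a matching triple of buckets. On the reduced universe the high half has only $\tilde{O}(\sqrt{n})$ values, so the middle edge $S_2$ (connecting $a_h$ to $d_h$ with $a_h\oplus d_h$ a valid high-half of $Y$) has size $\tilde{O}(n)$, and I would use $B_2=a_h$, $C_2=d_h$ with $R_2,T_2$ recording the high halves. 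The low-half coupling that a single path cannot enforce is handled by a verification pass: the path realizes a relaxed relation, and each reported pair $(a,d)$ (equivalently each candidate output tuple) is checked against $Y$ directly in $O(1)$ time, discarding false positives introduced by the hash. To keep the additive $\OUT$ term under control I would additionally apply random subsampling / color-coding over $\tilde{O}(1)$ independent repetitions so that each sub-instance carries only $\tilde{O}(n)$ surviving candidate pairs; this makes $\OUT = \tilde{O}(n)$, well below $N^2$.

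With these pieces the bookkeeping is routine. Building all $9$ relations over all $\mathrm{poly}\log$ buckets and repetitions takes near-linear time $\tilde{O}(n)$, and each constructed instance has $\IN = N = \tilde{O}(n)$ and $\OUT = \tilde{O}(n)$. Feeding it to the hypothesized $O(\IN^{2-c}+\OUT)$ algorithm and scanning its output for any surviving tuple therefore costs $O(n^{2-c}+\tilde{O}(n)) = O(n^{2-c}\,\mathrm{polylog}(n))$ per repetition; summing over the $\tilde{O}(1)$ repetitions and buckets and reporting ``yes'' iff some repetition yields a verified solution decides $3$-XOR, hence $3$-SUM, in time $O(n^{2-t})$ for some $t>0$. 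The only subtlety to verify is that the polylogarithmic overhead of hashing and repetition does not erase the polynomial saving, which holds because subquadratic is defined as a $\mathrm{poly}(n)$ factor below $n^2$: any fixed $c>0$ leaves room for a strictly positive $t$. The remaining work, and the place I expect all the real difficulty to sit, is making the $\rho_2$ gadget and its false-positive analysis fully rigorous.
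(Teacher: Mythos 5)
Your reduction has a genuine gap at exactly the point you flag: the $\rho_2$ gadget. You need a length-three path whose edges each have near-linear size but whose composition is $\{(a,d): a\oplus d\in Y\}$, and the high-half/low-half splitting you propose cannot deliver this. If you hash to a universe of size $H=\mathrm{poly}(n)$ and let $S_2$ check only the high halves, the relaxed relation $\{(a,d): a_h\oplus d_h\in \pi_h(Y)\}$ accepts roughly an $n/\sqrt{H}$ fraction of all pairs (union bound over the $n$ elements of $Y$); making the number of surviving pairs subquadratic therefore forces $\sqrt{H}\gg n$, i.e., $H\gg n^2$. But then the middle edge $S_2=\{(a_h,\,a_h\oplus y_h)\}$, ranging over the high halves of $X$ and of $Y$, has up to $n\cdot n=n^2$ tuples, since with so large a universe those high halves are essentially all distinct. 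Conversely, keeping $H$ small enough that $|S_2|=\tilde{O}(n)$ makes the high-half constraint nearly vacuous ($\pi_h(Y)$ can cover the entire $\sqrt{H}$-sized half-domain), so the join output is $\Theta(n^2)$ and the hypothesized $O(N^{2-c}+\OUT)$ algorithm yields no saving; $\tilde{O}(1)$ rounds of subsampling cannot repair this, because thinning $n^2$ candidates to $\tilde{O}(n)$ while retaining a single true witness with constant probability would need $\Omega(n)$ repetitions. This trade-off is exactly the ``naive blow-up'' you mention, and your sketch does not escape it.

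The paper's proof sidesteps the problem with a different assignment of roles to the attributes. It does not try to make any one path compute an XOR relation between its endpoints. Instead each middle edge $S_i(B_i,C_i)$ simply stores one copy of the hashed input set, with each hash value split as $h(x)=b+\sqrt{H}\,c$ (so $|S_i|\le n$ trivially), while the endpoint attributes $A$ and $D$ range over the $H$ triples of half-values with zero XOR: a value $a\in A$ encodes a triple $(t_1,t_2,t_3)$ of high halves with $t_1\oplus t_2\oplus t_3=0$ and is joined to $t_j\in B_j$ for each $j$, and symmetrically $D$ encodes the low-half triples via the $T_j$. An output tuple then certifies three input elements whose hashes XOR to zero, every relation has size $H=N^{1+d}$, and the $N^{2-d}$ expected hash false positives are verified cheaply (with heavy hash buckets handled separately and the black-box algorithm's runtime capped). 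The three parallel paths are thus used to check the three coordinates of a triple, not to impose three binary constraints on a pair $(a,d)$; that is the idea missing from your proposal, and without it the construction does not go through.
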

\begin{proof}
We can assume that $N$ is a power of $2$. If it is not, we can simply introduce some dummy numbers while increase the problem size by at most a factor of $2$. Suppose we have a $c > 0$ and a corresponding algorithm. Now consider any $3$-XOR instance $x_1, x_2, ... x_N$. We will use the join algorithm to subquadratically solve this instance. We use a family of linear hash functions:

\stitle{Hash Function $h$:} For input length $l$ and output length $r$, the function $h$ uses $r$ $l$-bit keys $\bar{a} = (a_1, a_2, ... a_r)$ and is defined as $h_{\bar{a}}(x) = (\langle a_1, x \rangle, \langle a_2, x \rangle, ... \langle a_r, x \rangle)$ where $\langle a, b \rangle$ denotes inner product modulo $2$. 

This hash function is linear, i.e. $h(x) + h(y) = h(x+y)$ where addition is bitwise-xor. Also, $h_{\bar{a}}(0) = 0$ for all $\bar{a}$, and $\text{Pr}_{\bar{a}}\left[ h_{\bar{a}}(x) = h_{\bar{a}}(y)\right] \leq 2^{-r}$ for any $x \neq y$.

We pick a small $d > 0$ (the exact value will be specified later), and let $H = N^{1+d}$. Assume we picked the $d$ such that $N^{1+d}$ is a power of $4$ (We can always do this for sufficiently large $N$). We will hash down our numbers to $\left[ H \right]$, i.e. to $r = \text{log}(H)$ bits. The linearity of the hash function means that if $x_i + x+j  + x_k = 0$, then $h(x_i) + h(x_j) + h(x_k) = 0$ as well. On the other hand, if $x_i + x_j  + x_k \neq 0$, then the probability that $h(x_i) + h(x_j) + h(x_k) = 0$ is $\frac{1}{H}$. We will try to solve the $3$-XOR problem over the hashed values, and if the original problem has a solution ($3$ numbers that sum to $0$), then so will the hashed values. On the other hand, the expected number of false positives (triples of numbers that don't sum to zero, but whose hashed values sum to $0$) is given by the number of triples times the probability of a false positive, i.e. $\frac{N^3}{H} = N^{2-d}$.

Let $a = \frac{d}{4}$. We have $H$ buckets containing $N$ numbers total. Call a hash bucket {\em heavy} if it has more than $N^a$ elements. We would like to bound the number of elements that are contained in `heavy' buckets. 

We use a Lemma from Reference~\cite{BDP08}:
\begin{lemma}
Let $h$ be a random function $h : U \mapsto \left[H\right]$ such that for any $x \neq y$,
$\text{Pr}_{h}\left[h(x) = h(y)\right] \leq \frac{1}{H}$. Let $S$ be a set of $N$ elements, and let $B_h(x)$ $=$ $\{ y \in S \mid h(x) = h(y) \}$. For all $k$, we have
$$\text{Pr}_{h,x}\left[ |B_h(x)| \geq \frac{2N}{H} + k\right] \leq \frac{1}{k}$$
In particular, the expected number of elements from $S$ with $|B_h(x)| \geq \frac{2N}{H} + k$ is $\leq \frac{N}{k}$.
\end{lemma} 

Thus, the expected number of elements in `heavy' buckets is $N^{1-a}$, which is in $o(N)$. For each heavy element, we can try summing it with each other $x_i$, and see if the resulting sum is one of the $x_i$s. Thus, we can check the sum condition on all heavy elements in time $N^{2-a}$. Thus, we can now assume that all buckets have $< N^a$ elements. 

We now present an instance of the join that is reducible from the $3$-XOR problem instance. For each attribute $B_i$ and $C_i$, their values consist of all bit combinations with $\frac{(1+d)\text{log}(N)}{2}$ bits. Thus, there are $N^{\frac{1+d}{2}}$ distinct attribute values for each of those attributes. Attributes $A$ and $D$ have $N^{1+d}$ distinct attribute values each. Each relation $S_i(B_i, C_i)$ has up to $N$ edges as follows. For each $x_i$ from the original problem that was not in a heavy bucket, we express it's hash value as $h(x_i) = b_i + N^{\frac{1+d}{2}}c_i$. Then, we add an edge between values $b_i \in B_j$ and $c_i \in C_j$ for $j = 1, 2, 3$. For relations $R_j(A, B_j)$ and $T_j(C_j, D)$, we do the following: Consider all triples $t_{i,1}, t_{i,2}, t_{i,3}$ of $\frac{(1+d)\text{log}(N)}{2}$-bit numbers whose bitwise-xor is $0$. There are $N^{1+d}$ such triples. For each such triple, we take one element $a_i \in A$, and connect it to each of $t_{i,1} \in B_1$, $t_{i,2} \in B_2$, $t_{i,3} \in B_3$. Similarly, we take one element $d_i \in D$ and connect it to each of $t_{i,1} \in C_1$, $t_{i,2} \in C_2$, $t_{i,3} \in C_3$. Thus, we have a join instance with relations of size $N^{1+d}$. Setting up this join instance given the $3$-XOR instance takes time $O(N^{1+d})$. 

Now we analyze the output of this join instance. Suppose we have an output tuple $a \in A, b_1 \in B, b_2 \in B_2, b_3 \in B_3, c_1 \in C_1, c_2 \in C_2, c_3 \in C_3, d \in D$. From our relations, we know that there is an $x_{i,1}$ whose hash equals $b_1 + N^{\frac{1+d}{2}}c_1$, an $x_{i,2}$ whose hash equals $b_2 + N^{\frac{1+d}{2}}c_2$, and an $x_{i,3}$ whose hash equals $b_3 + N^{\frac{1+d}{2}}c_3$. Moreover, since $a$ is connected to $b_1, b_2, b_3$, we know that the bitwise xor $b_1 + b_2 + b_3 = 0$. Similarly, $c_1 + c_2 + c_3 = 0$. Hence the bitwise xor $h(x_{i,1}) + h(x_{i,2}) + h(x_{i,3}) = 0$. Thus, either the triple $(x_{i,1}, x_{i,2}, x_{i,3})$ is a solution to the $3$-XOR problem, or it is a false positive.

Now we apply the subquadratic join algorithm whose existence we assumed, on our join instance of size $N^{1+d}$. If it runs for time greater than $O(N^{(1+d)(2-c)} + N^{2-\frac{d}{2}})$, we terminate it and return `true' for the $3$-XOR problem (we will justify this later). Otherwise, for each output tuple, we get a triple of hash buckets whose bitwise xor is zero. For each such triple of buckets, we check the (at most $N^{3a}$) corresponding  triples of $x_i$'s and check if they sum to $0$. This takes time $O(N^{(1+d)(2-c)} + N^{2-d+3a}) = O(N^{(1+d)(2-c)} + N^{2-a})$. If we find such a triple, then we return true for the $3$-XOR problem. If we don't find such a triple for any of the outputs of the join, we return false. Now recall that the expected number of false positives is $N^{2-d}$. If the correct answer to the $3$-XOR problem is false, then the program should terminate in time $O(N^{(1+d)(2-c)} + N^{2-d})$ with high probability. This justifies our decision to return true if the program runs for a polynomially longer time, as the probability of the correct answer falls exponentially as the program keeps running beyond $O(N^{(1+d)(2-c)} + N^{2-d})$. 

This means we can solve the $3$-XOR problem with high probability, in time $O(N^{2-a} + N^{1+d} + N^{(1+d)(2-c)} + N^{2-d})$. So we choose $d$ small enough such that $(1+d)(2-c) < 2$, and set $t = \min(a, 1-d, 2 - (1+d)(2-c))$. This way, $3$-XOR can be solved in time $N^{2-t}$, proving the theorem. 
\end{proof}

\subsection{DARTS Application examples}\label{sec:darts-examples}
\begin{example}
Joins over $K_{2,n}$, a special case of $1$-series parallel graphs, have some potential applications for recommendations. $K_{2,n}$ consists of attributes $X, Z$ on one side, connected to each of $Y_1, Y_2,\ldots, Y_n$ on the other side. Joining over $K_{2,n}$ where each relation is an instance of a friendship graph gives us pairs of people who have at least $n$ friends in common, along with the list of those friends. If instead the $X$ attribute is a netflix user id, $Z$ is a movie id, and $Y_i$s are attributes such as genres, then the join could be interpreted to mean ``find user-movie pairs such that the user likes at least $n$ attributes of the movie''.

As an example of using DARTS for $1$-series-parallel graphs, we prove that a join over $K_{2,n}$ can be processed in subquadratic time. The join has relations $R_i(X, Y_i)$, $S_i(Y_i, Z)$ for all $1 \leq i \leq n$. We prove that the $Q$ of the join is subquadratic using induction on $n$.

{\em Base Case:} If $n = 1$, the graph is a chain, and can be solved in linear time using Yannakakis' algorithm. Since DARTS includes Yannakakis' algorithm as a special case, it can solve the chain in linear time as well i.e. $Q = O(N)$.

{\em Induction:} Now we assume that $Q$ for $K_{2,n}$ is $\leq N^{2-\delta_n}$, for some $\delta_n > 0$. Consider $K_{2,n+1}$. For any degree configuration $c$ in which at least one of the $Y_{i}$'s has a degree greater than $N^{1 - \frac{\delta_n}{2}}$, we perform the heavy transform on that $Y_i$. The number of $Y_i$'s is less than $N^{\frac{\delta_n}{2}}$, and the reduced graph is a $K_{2,n}$, which has $Q \leq N^{2-\delta_n}$. Thus, the heavy transform gives us $Q \leq N^{2 - \frac{\delta_n}{2}}$ for configuration $c$ of $K_{2,n+1}$. On the other hand, if the degree configuration $c$ has all $Y_i$s having degree $\leq N^{1 - \frac{\delta_n}{2}}$, then we perform light transforms on $\{X, Y_i, Z\}$ for each $i$ one by one, and end up with relations $R_{Y_i}(X, Y_i, Z)$ of size $\leq N^{2 - \frac{\delta_n}{2}}$. Now for each $i$, we perform Split transforms using articulation set $\left \lbrace X,Z \right\rbrace$, and $G_1$ consisting of $X, Y_i, Z$. Then $P(G_1) \leq N^{2 - \frac{\delta_n}{2}}$ and the projection onto $XZ$ is of size $\leq N^{2 - \frac{\delta_n}{2}}$ as well. This upper bounds $Q$ by $N^{2 - \frac{\delta_n}{2}}$. Thus, the $Q$ of $K_{2,n+1}$ is subquadratic, which completes the induction.
\end{example}

\begin{example}
The runtime improvements of DARTS are not limited to treewidth $2$ joins. In general, we can process the join on the complete bipartite graph $K_{m,n}$, which has treewidth $\min(m,n)$, in time $O(\IN^{\min(m,n) - \epsilon_{m,n}} + \OUT)$, where $\epsilon_{m,n} > 0$. Simply marginalizing on an attribute on the $m$ side gives us a time bound of $O(\IN^{\min(m-1,n) - \epsilon_{m-1,n} + 1} + \OUT)$. But we get $\epsilon_{m,n} > \epsilon_{m-1,n}$, which means DARTS does more than just marginalize on an attribute. 

For example, consider $K_{3,3}$ which has treewidth $3$. All relations have size $N$, and let the attributes be $X_1$, $X_2$, $X_3$ on one side and $Y_1$, $Y_2$, $Y_3$ on the other. Suppose we can process a join over $K_{2,3}$ in time $O(N^{2 - \epsilon_{2,3}} + \OUT)$. Set $\Delta = N^{(2 - \epsilon_{2,3})/3}$. If the degree for an attribute is $> \Delta$, we could marginalize on it and achieve a runtime of $O(\Delta^{-1} N^{3 - \epsilon_{2,3}} + \OUT)$. On the other hand, if all degrees are $\leq \Delta$, then we can perform a Light transform on $\{X_1, X_2, X_3, Y_i\}$ for each $i$ to get $3$ relations of size $\leq N\Delta^2$ each. We can join them using Split transforms, getting a runtime of $O(N\Delta^2 + \OUT)$. Either way, the runtime of DARTS is bounded by $O(N^{3 - \epsilon_{3,3}} + \OUT)$ where $\epsilon_{3,3}$ $=$ $3 - (1 + 2((2 - \epsilon_{2,3})/3))$ $=$ $2(1 + \epsilon_{2,3})/3 < 1 + \epsilon_{2,3}$. 
\end{example}

\section{Proofs on $m$-width and $\MO$ bound (Section~\ref{sec:mo-bound},~\ref{sec:m-width})}\label{sec:mo-mw-proofs}

\subsection{Proof of Theorem~\ref{prop:mo-bound}}
We first state and prove a more general proposition, and Proposition~\ref{prop:mo-bound} will be a corollary.

\begin{proposition}\label{prop:mo-bound-general}
For all $A \subseteq \mA$, we can compute a relation $R_A$  in time $O(\IN^{m_A})$ such that (i) $|R_A| \leq \IN^{m_A}$ (ii) $\pi_{A}(\Join_{R \in \mR} R) \subseteq R_A$ (where $m_A$ is as defined in Section~\ref{sec:m-width}).
\end{proposition}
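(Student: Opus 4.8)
The plan is to read Linear Program~\ref{lp:mo-bound} (with objective $\max s_A$) as a \emph{system of difference constraints}, and thereby reduce both the value $m_A$ and a witnessing relation $R_A$ to a single-source shortest-path computation. I would build a weighted directed graph $\mathcal{G}$ whose nodes are the subsets $S \subseteq \mA$. For every $R \in \mR$, every $A' \subseteq B' \subseteq \attr(R)$, and every $E \subseteq \mA$, I add an \emph{extend edge} $A' \cup E \to B' \cup E$ of weight $d(A',B',R)$ (this encodes constraint (iii), $s_{B'\cup E} \le s_{A'\cup E} + d(A',B',R)$); for every $A' \subseteq B$ I add a weight-$0$ \emph{project edge} $B \to A'$ (encoding (ii), $s_{A'} \le s_{B}$); and I take $\emptyset$ as the source with $s_\emptyset = 0$. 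All weights are nonnegative because every degree is at least $1$, so there are no negative cycles. Telescoping the constraints along any path from $\emptyset$ to $A$ shows $s_A$ is at most the path length, while assigning $s_S := \mathrm{dist}_{\mathcal{G}}(\emptyset,S)$ is feasible; hence $m_A = \mathrm{dist}_{\mathcal{G}}(\emptyset,A)$, computable by Dijkstra on a graph whose size is constant in $\IN$ (once the weights $d(A',B',R)$, i.e. the maximum degrees, are computed in $O(\IN)$ time).

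Next I would fix a shortest path from $\emptyset$ to $A$ and process its edges in order, maintaining a relation $T$ over the current node $S$ subject to the invariant that (a) $\pi_{S}(\Join_{R \in \mR} R) \subseteq T$ and (b) $|T| \le \IN^{\,\mathrm{dist}(\emptyset,S)}$. The base case is $S=\emptyset$, $T=\{()\}$. On a project edge $B \to A'$ I set $T \leftarrow \pi_{A'}(T)$: containment survives because $\pi_{A'}(\Join) = \pi_{A'}(\pi_{B}(\Join)) \subseteq \pi_{A'}(T)$, and the size does not grow (edge weight $0$). On an extend edge $A'\cup E \to B'\cup E$ via $R$ I set $T \leftarrow T \Join \pi_{B'}(R)$, joining on the shared attributes, which include $A'$. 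For containment, any tuple of $\pi_{B'\cup E}(\Join)$ restricts to a tuple of $\pi_{A'\cup E}(\Join)\subseteq T$ and to a tuple of $\pi_{B'}(R)$ that agree on the shared attributes, so it survives the join. For the size bound, each tuple of $T$ matches at most $\degree(\cdot,\pi_{B'}(R),A') \le \IN^{d(A',B',R)}$ tuples of $\pi_{B'}(R)$ (joining on a superset of $A'$ only reduces matches), so $|T|$ grows by a factor of at most $\IN^{d(A',B',R)}$, exactly the edge's contribution to the path length. Since the weights are nonnegative, the prefix lengths along the path are nondecreasing, so every intermediate $T$ has size $\le \IN^{m_A}$; the path has a constant number of edges (at most $2^{|\mA|}$), each step costs $O(|T| + \IN)$, and the total is $O(\IN^{m_A})$. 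Setting $R_A := T$ at node $A$ then gives (i) and (ii).

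I expect the main conceptual obstacle to be the recognition in the first step: seeing that the constraints of Linear Program~\ref{lp:mo-bound} form a difference system, so that its optimum is a shortest-path length, and---crucially---that the shortest path is not merely a number but an explicit \emph{construction plan} in which every edge corresponds to a concrete relational operation. Once this dictionary is in place, the remaining work (the containment and size invariants in the preceding paragraph) is routine and mirrors the intuition already stated after Proposition~\ref{prop:mo-bound}: project edges realize monotonicity, and extend edges realize the degree-bounded join blow-up. Proposition~\ref{prop:mo-bound} then follows as the special case $A = \mA$.
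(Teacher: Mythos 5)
Your proposal is correct and is essentially the paper's own argument: the paper extracts a chain $\emptyset = A_0, A_1, \ldots, A_k = A$ of tight constraints from the optimal LP solution and materializes relations along it by setting $R_{i+1} = \pi_{A_{i+1}}(R_i)$ for constraint (ii) and $R_{i+1} = R_i \Join \pi_{Q}(R)$ for constraint (iii), which is exactly your project/extend walk along a path. Your shortest-path reading of the difference-constraint system is just a cleaner derivation of that chain (and makes rigorous the paper's somewhat terse claim that a chain of tight constraints back to $s_\emptyset = 0$ must exist), but the construction and the containment/size invariants are identical.
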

\begin{proof}
For each $A \subseteq \mA$, let $O_A = \pi_{A}(\Join_{R \in \mR} R)$. Fix any $A \subseteq \mA$ and consider the solution to $\mathsf{Prog}(A)$. In the solution, there must be at least one tight constraint of the form $s_A \leq s_B$ for $A \subseteq B$ or $s_A \leq s_B + d(P,Q,R)$ for some $P$, $Q$, $E$, $R$ such that $P \subseteq Q \subseteq \attr(R)$, $B = P \cup E$, $A = Q \cup E$. Then in turn, there must be a similar constraint on $s_B$. The only constraint in the system that does not have one relation on the LHS and one on the RHS is the $s_{\emptyset} = 0$ constraint. 

Thus there must be a chain $A_0,A_1,\ldots,A_k$ such that $A_0 = \emptyset$, $A_k = A$ and there is a tight constraint with $A_{i+1}$ on the LHS and $A_i$ on the RHS (i.e. $A_{i+1} \leq A_i + \ldots$. Then we produce a sequence of relations $R_0,\ldots,R_k$ such that for all $i$ : $|R_i| \leq \IN^{m_{A_i}}$. The final $R_k$ equals our $R_A$. We produce these relations inductively: If $A_{i+1} \subseteq A_i$, then we set $R_{i+1} = \pi_{A_{i+1}} R_i$. Otherwise, there exist $P$, $Q$, $R$, $E$ such that $P \subseteq Q \subseteq \attr(R)$, $A_i = P \cup E$, $A_{i+1} = Q \cup E$ and $s_{A_{i+1}} = s_{A_i} + d(P,Q,R)$. Then we set $R_{i+1} = R_i \Join \pi_{Q}(R)$. Since these operations only involve relations in the original join, all $R_i$s satisfy $O_{A_i} \subseteq R_i$. Moreover, for all $i$, $|R_i| \leq \IN^{s_{A_i}}$. Thus, $R_k$ is computed in time $O(\IN^{s_{A_k}}) = O(\IN^{m_A})$, and satisfies $O_A \subseteq R_k$. Setting $R_A = R_k$ gives us the required $R_A$ satisfies conditions (i) and (ii) of the proposition, completing our proof. 
\end{proof}

\begin{proposition*}
The output size $\Join_{R \in \mR} R$ is in $O(\IN^{m_{\mA}})$.
\end{proposition*}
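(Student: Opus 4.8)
The plan is to obtain Proposition~\ref{prop:mo-bound} as an immediate corollary of Proposition~\ref{prop:mo-bound-general} by instantiating the latter at the full attribute set $A = \mA$. First I would invoke Proposition~\ref{prop:mo-bound-general} with $A = \mA$, which produces a relation $R_{\mA}$ satisfying $|R_{\mA}| \leq \IN^{m_{\mA}}$ and $\pi_{\mA}(\Join_{R \in \mR} R) \subseteq R_{\mA}$, computable in time $O(\IN^{m_{\mA}})$. At this point one should check that the quantity $m_{\mA}$ appearing in the general proposition (the value of $\textsf{Prog}(\mA)$, whose objective maximizes $s_{\mA}$) is exactly the $m_{\mA}$ defined in Section~\ref{sec:mo-bound} via Linear Program~\ref{lp:mo-bound}. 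This is immediate from the definitions: $\textsf{Prog}(A)$ uses the same constraints (i)--(iii) as Linear Program~\ref{lp:mo-bound} and differs only in the objective, and for $A = \mA$ the two objectives coincide.

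Next I would observe that projecting the full join onto the complete attribute set is the identity operation. Since every attribute of every relation in $\mR$ lies in $\mA$, the join $\Join_{R \in \mR} R$ already has schema $\mA$, so $\pi_{\mA}(\Join_{R \in \mR} R) = \Join_{R \in \mR} R$. Combining this equality with the containment from the previous step gives $\Join_{R \in \mR} R \subseteq R_{\mA}$, and therefore $|\Join_{R \in \mR} R| \leq |R_{\mA}| \leq \IN^{m_{\mA}}$. Since $|\Join_{R \in \mR} R|$ is precisely the output size, we conclude that the output size is $O(\IN^{m_{\mA}})$, as claimed.

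There is essentially no hard step remaining at this level: the substantive work---constructing the witness relation $R_{\mA}$ by tracing a tight chain of constraints $A_0 = \emptyset, A_1, \ldots, A_k = \mA$ back to the $s_{\emptyset} = 0$ base case, realizing each link by either a projection or a join with some $\pi_Q(R)$, and bounding the intermediate sizes by $\IN^{s_{A_i}}$---has already been carried out in the proof of Proposition~\ref{prop:mo-bound-general}. The only points requiring care here are the bookkeeping observations that the two notions of $m_{\mA}$ agree and that $\pi_{\mA}$ acts trivially on the full join; both are routine once the definitions are unwound, so I expect the corollary to follow in a few lines.
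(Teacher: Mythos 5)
Your proposal is correct and matches the paper's own proof essentially verbatim: the paper likewise derives Proposition~\ref{prop:mo-bound} as an immediate corollary of Proposition~\ref{prop:mo-bound-general} by setting $A = \mA$ and noting that $\pi_{\mA}(\Join_{R \in \mR} R)$ is the join output itself. Your additional remark that the objective of $\textsf{Prog}(\mA)$ coincides with that of Linear Program~\ref{lp:mo-bound} is a reasonable bookkeeping check that the paper leaves implicit.
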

\begin{proof}
For each $A \subseteq \mA$, let $O_A = \pi_{A}(\Join_{R \in \mR} R)$. We set $A = \mA$ in Proposition~\ref{prop:mo-bound-general}. $O_{\mA}$ is simply the output of the join $\Join_{R \in \mR}R$ and since it is a subset of $R_{\mA}$ which has size $\leq \IN^{m_{\mA}}$, the output itself must have size $O(\IN^{m_{\mA}})$.
\end{proof}

\subsection{Proof of Theorem~\ref{thm:mw-join}}
\begin{theorem*}
Any join query can be answered in time $O(\IN^{\MW} + \OUT)$, where $\MW$ is its $m$-width.
\end{theorem*}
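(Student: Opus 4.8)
The plan is to prove the bound configuration by configuration, exploiting the fact that $m$-width is defined as $\max_{c \in \mC_2} \MW(\mR(c))$. First I would run degree-uniformization with $L = 2$ (Algorithm~\ref{algo:degree-uniformization}, Steps 1--3), which costs $O(\IN)$ and splits the query into the subqueries $\{\mR(c)\}_{c \in \mC_2}$; crucially, the number of configurations is constant in $\IN$. Since $\MW(\mR(c)) \le \MW$ for every $c$ by definition, it suffices to show that each $J_c = \Join_{R \in \mR(c)} R$ can be computed in time $O(\IN^{\MW} + \OUT_c)$, where $\OUT_c = |J_c|$. Summing over the constantly many configurations and taking the union in Step~5 (cost $O(\OUT)$) then yields the claimed $O(\IN^{\MW} + \OUT)$.

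For a fixed configuration $c$, I would fix a GHD $D_c = (\mT, \chi)$ of $\mR(c)$ attaining $\MW(\mR(c))$, so that $m_{\chi(t)} \le \MW$ for every bag $t \in \mT$, where the $m$-values are computed from the degree bounds of $c$. The key step is to materialize, for each bag $t$, the relation $R_{\chi(t)}$ guaranteed by Proposition~\ref{prop:mo-bound-general}: it has $|R_{\chi(t)}| \le \IN^{m_{\chi(t)}} \le \IN^{\MW}$, satisfies $\pi_{\chi(t)}(J_c) \subseteq R_{\chi(t)}$, and is computed in time $O(\IN^{m_{\chi(t)}}) = O(\IN^{\MW})$. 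I would then form an acyclic query consisting of all the bag relations $\{R_{\chi(t)}\}_{t \in \mT}$ together with the original relations of $\mR(c)$, attaching each original relation $R$ to a bag $t$ with $\attr(R) \subseteq \chi(t)$ (which exists by the covering property of the GHD). The tree $\mT$, extended by one leaf per original relation, is a join tree for this query, so the query is $\alpha$-acyclic, and its total input size is $O(\IN^{\MW})$ since there are constantly many bags and $\MW \ge 1$. Running Yannakakis' algorithm on it (as in Proposition~\ref{prop:yannakakis}) then computes its output in time $O(\IN^{\MW} + \OUT_c)$.

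Correctness is where the main work lies. I would argue that the output of this acyclic query equals $J_c$ exactly. The containment $J_c \subseteq \Join_{t} R_{\chi(t)}$ follows because every tuple $u \in J_c$ projects into each $R_{\chi(t)}$ (by property (ii) of Proposition~\ref{prop:mo-bound-general}) and the bags cover $\mA$; the reverse containment can fail for $\Join_{t} R_{\chi(t)}$ alone, since each $R_{\chi(t)}$ may carry spurious tuples beyond $\pi_{\chi(t)}(J_c)$, and this is precisely why the original relations must be included in the join. Having them in the join forces every surviving tuple to satisfy all constraints of $\mR(c)$, so the acyclic join returns exactly $J_c$, while still containing all of $J_c$ because the additional relations never discard a genuine output tuple. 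The main obstacle is thus not the runtime bookkeeping but verifying this exactness: one must check that adding the original relations keeps the query $\alpha$-acyclic (the running-intersection property survives because each $R$'s attributes lie inside a single bag) and that it does not inflate the input size beyond $O(\IN^{\MW})$. A secondary point to get right is that the $m$-values, the GHD, and Proposition~\ref{prop:mo-bound-general} are all applied to the subquery $\mR(c)$ with the degree bounds supplied by the configuration $c$ (tight up to the constant factor $L = 2$), so that $m_{\chi(t)} \le \MW(\mR(c)) \le \MW$ genuinely holds.
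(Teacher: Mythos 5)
Your proposal is correct and follows essentially the same route as the paper: materialize each bag relation $R_{\chi(t)}$ via Proposition~\ref{prop:mo-bound-general} and then finish with Yannakakis' algorithm, the only cosmetic difference being that the paper enforces consistency with the base relations by semijoining each $R_{\chi(t)}$ with $\pi_{\chi(t)}(R)$ before running Yannakakis, whereas you attach the original relations as leaves of the join tree and let Yannakakis' first semijoin pass do the same filtering. You are in fact slightly more careful than the paper's written proof in making the per-degree-configuration decomposition (and the constant number of configurations) explicit.
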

\begin{proof}
For all $A \subseteq \mA$, let $O_A$ denote $\pi_{A} (\Join_{R \in \mR} R)$. Given a GHD $(\mT, \chi)$ with $m$-width equal to $\MW$, we perform the join in three steps:
\begin{itemize}
\item For each bag $\chi(t)$ of the GHD, we compute $R_{\chi(t)}$ like in Proposition~\ref{prop:mo-bound-general}. That is, we compute $R_{\chi(t)}$  in time $O(\IN^{m_{\chi(t)}})$ such that (i) $|R_{\chi(t)}| \leq \IN^{m_{\chi(t)}}$ (ii) $O_{\chi(t)} \subseteq R_{\chi(t)}$. The latter property ensures that $O_{\mA} \subseteq \Join_{t \in \mT} R_{\chi(t)}$. Moreover, by definition of $m$-width, the computation time for each $R_{\chi(t)}$ and the size of $R_{\chi(t)}$ are bounded by $O(\IN^{\MW})$.
\item Then for each bag $\chi(t)$, we compute $R'_{\chi(t)}$ which is $R_{\chi(t)}$ semi-joined with $\pi_{{\chi(t)}}(R)$ for each $R \in \mR$. This ensures that $O_{\mA} = \Join_{t \in \mT} R_{\chi(t)}$. Moreover, $|R'_{\chi(t)}| \leq |R_{\chi(t)}| \leq \IN^{\MW}$.
\item  Then we use Yannakakis' algorithm to join all the $R_{\chi(t)}$'s. This can be done in time $O(\IN^{\MW} + \OUT)$, completing the proof.
\end{itemize}
\end{proof}

\subsection{Proof of Theorem~\ref{thm:mo-bound}}
\begin{theorem*}
For any join query $\mR$, and any degree configuration $c \in \mC_2$, $\MO(\mR(c)) \leq \DBP(\mR(c),2) + |C|\log(2)$, where $C$ is the cover used in the DBP bound.
\end{theorem*}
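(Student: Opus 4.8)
The plan is to prove the inequality by a direct primal argument that bounds the objective of the $\MO$ program (Linear Program~\ref{lp:mo-bound}) by the optimum of the degree-packing program (Linear Program~\ref{lp:degree-packing-primal}) for the cover $C$ that realizes $\DBP(\mR(c),2)$, plus the stated correction term. Write $\mR$ for $\mR(c)$ throughout. Let $\{v_a\}_{a \in \mA}$ be an optimal solution of the degree-packing program for the minimizing cover $C$, so that $\sum_{a \in \mA} v_a = \DBP(\mR, 2)$, and let $\{s_A\}_{A \subseteq \mA}$ be an arbitrary feasible point of the $\MO$ program. I would show that $s_{\mA} \le \sum_{a \in \mA} v_a + |C|\log(2)$; since this holds for every feasible $s$, it holds for the optimizer, giving $\MO(\mR) = m_{\mA} \le \DBP(\mR, 2) + |C|\log(2)$.

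The heart of the argument is a telescoping chain that walks from $s_{\emptyset}$ up to $s_{\mA}$ one cover-pair at a time. Fix an arbitrary ordering $(R_1, A_1), \dots, (R_k, A_k)$ of the pairs in $C$, where $k = |C|$, and set $E_0 = \emptyset$ and $E_j = A_1 \cup \dots \cup A_j$, so that $E_k = \mA$ because $C$ is a cover. For each $j$ I would apply constraint~(iii) of the $\MO$ program to $R_j$ with the choices $B = A_j$, $A = A_j \cap E_{j-1}$, and $E = E_{j-1}$ (all legal since $A_j \cap E_{j-1} \subseteq A_j \subseteq \attr(R_j)$). This yields $s_{E_j} \le s_{E_{j-1}} + d(A_j \cap E_{j-1}, A_j, R_j)$, so each step grows the objective by at most a single degree term.

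To absorb that degree term into the packing variables, I would invoke the degree-packing constraint for the same pair $(R_j, A_j)$ with $A' = A_j \setminus E_{j-1}$. The point is that $A_j \setminus A' = A_j \cap E_{j-1}$, so the right-hand side is exactly $\log\bigl(d_{\pi_{A_j}(R_j),\, A_j \cap E_{j-1}}/2\bigr) = d(A_j \cap E_{j-1}, A_j, R_j) - \log(2)$; hence $d(A_j \cap E_{j-1}, A_j, R_j) \le \sum_{a \in A_j \setminus E_{j-1}} v_a + \log(2)$. Combining the two inequalities gives $s_{E_j} \le s_{E_{j-1}} + \sum_{a \in A_j \setminus E_{j-1}} v_a + \log(2)$. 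Summing over $j = 1, \dots, k$ telescopes the $s$-terms and, using $s_{\emptyset} = 0$, leaves $s_{\mA} \le \sum_{j=1}^{k}\sum_{a \in A_j \setminus E_{j-1}} v_a + k\log(2)$. Since the sets $A_j \setminus E_{j-1}$ partition $\mA$, the double sum collapses to $\sum_{a \in \mA} v_a$, completing the bound.

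I expect the main obstacle to be the bookkeeping that aligns the two programs: one must verify that the degree quantity appearing in the $\MO$ step, namely the maximum degree of values of $A_j \cap E_{j-1}$ inside $\pi_{A_j}(R_j)$, is literally the quantity controlled by the packing constraint with $A' = A_j \setminus E_{j-1}$, and that the choice $A = A_j \cap E_{j-1}$, $B = A_j$, $E = E_{j-1}$ satisfies the containment hypotheses of constraint~(iii) at every step, including the first, where $A = \emptyset$ and $d(\emptyset, A_1, R_1) = \log|\pi_{A_1}(R_1)|$. A secondary point worth stating explicitly is that the collapse of the double sum to $\sum_{a} v_a$ relies only on $\{A_j \setminus E_{j-1}\}_j$ being a partition of $\mA$, so it is valid irrespective of the signs of the $v_a$; each of the $k = |C|$ steps contributes one $\log(2)$, which is precisely the source of the $|C|\log(2)$ term.
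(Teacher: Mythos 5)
Your proposal is correct and follows essentially the same route as the paper's proof: both walk through the cover pairs $(R_j,A_j)$ in order, apply constraint~(iii) of the $\MO$ program with $E$ equal to the accumulated union and $A = A_j$ minus the new attributes, and absorb each degree term via the packing constraint for $A' = A_j \setminus E_{j-1}$, incurring one $\log(2)$ per step. The paper phrases this as an induction on $j$ with $B_j = \bigcup_{i\le j} A_i$ rather than a telescoping sum, but the decomposition, the choice of constraints, and the accounting of the $|C|\log(2)$ term are identical.
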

\begin{proof}
$\DBP(\mR(c),2)$ is obtained by solving Linear Program~\ref{lp:degree-packing-primal} for the optimal cover. Let $C$ be the optimal cover, and $v_a$ be the value in the optimal solution for each $a \in \mA$. And for each $A \subseteq \mA$, let $s_A$ denote the value in the optimal solution for the linear program $\textsf{Prog}(\mA)$.

Let $C = \{(R_1,A_1), (R_2,A_2),\ldots,(R_{|C|},A_{|C|})\}$, where $R_i \in \mR$ and $A_i \subseteq \attr(R_i)$ for all $i$. Define $B_j = \bigcup_{i=1}^{j}A_i$ for all $1 \leq j \leq |C|$. Since $C$ is a cover, we must have $B_{|C|} = \mA$. 

Now for each $j$, we will show that $s_{B_j} \leq j\log(2) + \sum_{a \in B_j} v_a$. We do this using induction on $j$. Then for $j=|C|$ the LHS $s_{B_{|C|}}$ equals $\MO(\mR(c))$ and RHS $|C|\log(2) + \sum_{a \in \mA} v_a$ equals $\DBP(\mR(c),2) + |C|\log(2)$, proving our theorem.

Base Case: For $j=1$, setting $R = R_1$, $A = A_1$, $A' = A_1$ for Linear Program~\ref{lp:degree-packing-primal} gives us the constraint $\sum_{a \in A_1} v_a \geq \log(d_{\pi_{A}(R),\emptyset}/2)$. And $\textsf{Prog}(\mA)$ with $A = \emptyset$, $B = A_1$, $R = R_1$, $E = \emptyset$ gives us the constraint $s_{A_1}$ $\leq s_{\emptyset} + d(\emptyset, A_1, R_1)$ $= \log(d_{\pi_{A}(R),\emptyset})$ $\leq \log(2) + \sum_{a \in A_1} v_a$. Then since $B_1 = A_1$, our base case is proved.

Induction: Suppose we have proved $s_{B_j} \leq j\log(2) + \sum_{a \in B_j} v_a$ for $j-1$. Now let $E_j = B_j \setminus B_{j-1}$. Then Linear Program~\ref{lp:degree-packing-primal} with $R = R_j$, $A = A_j$, $A' = E_j$ gives us $\sum_{a \in E_j} v_a \geq \log(d_{\pi_{A_j}(R_j),A_j \setminus E_j}/2)$. $\textsf{Prog}(\mA)$ with $R = R_j$, $A = A_j \setminus E_j$, $B = A_j$, $E = B_{j-1}$ gives us $s_{B_{j-1} \cup A_j}$ $\leq$ $s_{(A_j \setminus E_j) \cup B_{j-1}}$ $+$ $d_{A_j \setminus E_j, A_j, R_j}$. Now $B_j = B_{j-1} \cup A_j$ by definition of $B_j$, and $(A_j \setminus E_j)$ $\cup$ $B_{j-1}$ $=$ $B_{j-1}$ since $A_j \subseteq B_j$ $=$ $E_j \cup B_{j-1}$. So $s_{B_j}$ $\leq s_{B_{j-1}} + \log(d_{\pi_{A_j}(R_j),A_j \setminus E_j})$ $\leq s_{B_{j-1}} + \log(2) + \sum_{a \in E_j} v_a$. And by inductive hypothesis, $s_{B_{j-1}} \leq (j-1)\log(2) + \sum_{a \in B_{j-1}} v_a$. This gives us $s_{B_j} \leq j\log(2) + \sum_{a \in B_j} v_j$.

This proves that $s_{B_j} \leq j\log(2) + \sum_{a \in B_j} v_a$ for all $j$, and consequently that $\MO(\mR(c)) \leq \DBP(\mR(c),2) + |C|\log(2)$, completing our proof.
\end{proof}

\subsection{Recovering DARTs results using GHDs}\label{sec:ghd-darts-recovery}
Theorem~\ref{thm:mo-bound} shows that the $\MO$ bound is smaller than the AGM bound. As a result, the $\MW$ of a GHD is smaller than its fhw. This lets us recover Propositions~\ref{prop:nprr}-\ref{prop:fhw}. We now show how to recover the subquadratic join results from Theorem~\ref{thm:1-series-parallel-graphs} and the AYZ result.

\subsubsection{Recovering AYZ}
\begin{proposition*}
A cycle join of length $n$ with all relations having size $N$, has $m$-width $\leq 2 - \frac{1}{1 + \lceil \frac{n}{2}\rceil}$, recovering the result of the AYZ algorithm. 
\end{proposition*}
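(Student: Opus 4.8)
The plan is to produce, for every degree configuration $c \in \mC_2$, a GHD all of whose bags have $m$-value at most $2 - \frac{1}{1+\lceil n/2\rceil}$. Since the $m$-width is $\max_{c}\MW(\mR(c))$ and $\MW(\mR(c))$ is the minimum over GHDs of the largest bag $m$-value, producing one good GHD per configuration suffices. Write the cycle as $R_i(A_i, A_{i+1})$ with indices mod $n$, put $k = \lceil n/2\rceil$, and fix the AYZ threshold $\Delta = N^{1/(1+k)}$. I would split on $c$ exactly as AYZ splits on values: call $c$ \emph{heavy} if some attribute $A_i$ has degree $\geq \Delta$ in one of its two relations, and \emph{light} otherwise. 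The whole argument is then just two GHD constructions plus a chaining of the constraints of $\textsf{Prog}(\cdot)$ from Linear Program~\ref{lp:mo-bound}.

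For a heavy configuration I would pick a heavy attribute $A_i$ and use the path decomposition with bags $\chi_j = \{A_i\}\cup\attr(R_j) = \{A_i, A_j, A_{j+1}\}$, arranged as a path in cyclic index order; this is a legitimate tree because $A_i$ lies in every bag (so the cycle of bags may be cut at $A_i$), while every other $A_\ell$ lies only in the consecutive bags $\chi_{\ell-1},\chi_\ell$, giving running intersection, and each $R_j$ is covered by $\chi_j$. To bound $m_{\chi_j}$ I would chain two instances of constraint (iii): since $A_i$ is heavy it has at most $N/\Delta$ distinct values, so $A=\emptyset,\,B=\{A_i\},\,R=R_i$ yields $s_{\{A_i\}} \leq \log(N/\Delta)$; then $E=\{A_i\},\,A=\emptyset,\,B=\{A_j,A_{j+1}\},\,R=R_j$ adds $d(\emptyset,\{A_j,A_{j+1}\},R_j) = \log|R_j| \leq \log N$. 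Hence $s_{\chi_j} \leq \log(N^2/\Delta) = (2 - \tfrac{1}{1+k})\log N$, i.e.\ $m_{\chi_j} \leq 2 - \frac{1}{1+k}$.

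For a light configuration I would use the two ``half-path'' bags $B_1 = \{A_1,\dots,A_{k+1}\}$ (covering $R_1,\dots,R_k$) and $B_2 = \{A_1, A_{k+1}, A_{k+2},\dots,A_n\}$ (covering $R_{k+1},\dots,R_n$), which share exactly $A_1$ and $A_{k+1}$ and therefore form a valid two-node GHD. To bound $m_{B_1}$ I would start from $s_{\{A_1\}} \leq \log N$ and walk along the path, at step $j$ invoking constraint (iii) with $A=\{A_j\},\,B=\{A_j,A_{j+1}\},\,E=\{A_1,\dots,A_{j-1}\},\,R=R_j$; lightness gives $d(\{A_j\},\{A_j,A_{j+1}\},R_j)\leq \log\Delta$, so after the $k$ edges of $B_1$ we get $s_{B_1} \leq \log N + k\log\Delta = (2 - \tfrac{1}{1+k})\log N$. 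The identical walk bounds $m_{B_2}$, which traverses $n-k \leq k$ edges (equality for even $n$, strictly fewer for odd $n$), so $m_{B_2} \leq 2 - \frac{1}{1+k}$ as well. Combining the two cases, every configuration admits a GHD of $m$-width at most $2 - \frac{1}{1+\lceil n/2\rceil}$, giving the claim.

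The main obstacle I anticipate is the bookkeeping that converts the combinatorial heavy/light information of a configuration into the correct numeric bounds on the $d(\cdot,\cdot,\cdot)$ coefficients and on $s_{\{A_i\}}$ inside $\textsf{Prog}$. In particular I must verify that a heavy $A_i$ genuinely forces $s_{\{A_i\}} \leq \log(N/\Delta)$ \emph{within} the configuration, using that each value in the relevant bucket has degree $\geq \Delta$ together with $|R_i(c)| \leq N$, and I must confirm that the light half-path bag never accumulates more than $k$ light edges irrespective of the parity of $n$. Everything else (GHD validity and the two-step constraint chains) is routine once the threshold $\Delta = N^{1/(1+k)}$ is fixed.
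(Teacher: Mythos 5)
Your proposal is correct and matches the paper's own proof essentially step for step: the same threshold $\Delta = N^{1/(1+\lceil n/2\rceil)}$, the same heavy/light split over degree configurations, the same path-of-bags GHD $\{A_i\}\cup\attr(R_j)$ in the heavy case, the same two half-path bags in the light case, and the same chaining of constraint (iii) to bound each bag's $m$-value by $\log(N^2\Delta^{-1})$. The only differences (the exact index at which the cycle is split into two bags, and starting the light-case chain from $s_{\{A_1\}}$ rather than $s_{\{A_1,A_2\}}$) are cosmetic.
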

The cycle join has relations $R_1(X_1,X_2),\ldots,R_n(X_n,X_1)$ of size $N$ each. Choose $\Delta = N^{\frac{1}{1 + \lceil n/2 \rceil}}$ as before. We will show that for each degree configuration, we can construct a GHD that has $\MW \leq 2 - \frac{1}{1 + \lceil n/2 \rceil}$.

Suppose the configuration is such that the degree of some $X_k$ if $\geq \Delta$, then we build a GHD with a bags $\{X_k\} \cup \attr(R_j)$ for each $j$. The bags form a chain $\{X_k\} \cup R_{k}$, $\{X_k\} \cup R_{k+1}$, $\{X_k\} \cup R_{k+2}$, $\ldots$, $\{X_k\} \cup R_{k-2}$, $\{X_k\} \cup R_{k-1}$, which gives us the GHD. The $m$ value for each bag is bounded by $\log(N^2\Delta^{-1})$ since $m_{\attr(R_j)} \leq \log(N)$ and using $A = \emptyset$, $B = \{X_k\}$, and $d(A,B,R_k) \leq \log(N\Delta^{-1})$.

If all degrees in the configuration are $\leq \Delta$, we form a GHD with two bags: $\{X_1$, $X_2,\ldots$, $X_{\lceil n/2 \rceil}\}$ and $\{X_1$, $X_n$, $X_{n-1},\ldots$, $X_{\lceil n/2 \rceil}\}$. The $m$ value of each bag is still $N\Delta^{\lceil n/2 \rceil} = N^2\Delta^{-1}$. This time, we have $m_{\{X_1,X_2\}} \leq \log(N)$ and for each $i$, $m_{\{X_1,\ldots,X_{i+1}\}} \leq m_{\{X_1,\ldots,X_i\}} + \log(\Delta)$ since $d(A,B,R) = \log(\Delta)$ for $A = \{X_i\}$, $B = \{X_i,X_{i+1}\}$, $R = R_i$. 

Thus for each degree configuration, we can find a GHD with $\MW \leq N^2\Delta^{-1}$, which implies that $m$-width is $\leq 2 - \frac{1}{1 + \lceil n/2 \rceil}$, which lets us recover the AYZ result.

\subsubsection{Lemma~\ref{lemma:s-t-edge}}
\begin{lemma*}
If we have a $1$-series-parallel graph, which has a direct edge from $X_S$ to $X_T$ (i.e. a path of length $1$), then the $m$-width of a join over the graph is $< 2$.
\end{lemma*}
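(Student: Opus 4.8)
The plan is to reduce this to the AYZ cycle recovery (the preceding proposition) applied path-by-path, and then to glue the resulting per-path GHDs at the direct edge. Write the direct edge as a relation $R_0(X_S,X_T)$, and let the remaining paths from $X_S$ to $X_T$ be $Z_1,\dots,Z_k$, where $Z_i$ has length $n_i$ and its internal vertices are disjoint from those of every other path. For each $i$, form the cycle $C_i$ consisting of the edges of $Z_i$ together with the direct edge $R_0$; this is a cycle of length $n_i+1$ whose only vertices shared with any other $C_j$ are $X_S$ and $X_T$. (If $k=0$ the join is the single relation $R_0$ and the width is at most $1$, so assume $k\ge 1$.)

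First I would fix an arbitrary degree configuration $c\in\mC_2$ and build a GHD of $\mR(c)$ whose $\MW$ is bounded below $2$ by a constant independent of $c$. For each cycle $C_i$ I invoke the AYZ cycle construction with threshold $\Delta_i=\IN^{1/(1+\lceil (n_i+1)/2\rceil)}$: if $c$ makes some vertex of $C_i$ heavy I use the ``carry the heavy vertex'' chain of bags $\{V\}\cup\attr(R)$, and otherwise I use the two ``half-cycle'' bags. Either way every bag $t$ of this sub-GHD $D_i$ satisfies $m_{\chi(t)}\le 2-\frac{1}{1+\lceil (n_i+1)/2\rceil}$, exactly as in the cycle recovery. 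The only extra requirement I impose is that $D_i$ be chosen so that a single bag $b_i$ contains both $X_S$ and $X_T$, and so that inside $D_i$ the bags containing $X_S$ and the bags containing $X_T$ each form a connected subtree through $b_i$. In the heavy case $b_i$ is the bag covering $R_0$: it contains $X_S,X_T$, the bags containing $X_S$ are exactly the two covering the edges incident to $X_S$ (which are adjacent to $b_i$), and symmetrically for $X_T$. In the light case I split $C_i$ at $X_S$ and at a midpoint, so $X_S$ lies in both half-bags while $X_T$ lies in the single half-bag $b_i$ that also carries the $R_0$ edge.

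Next I would glue the $D_i$ together. Introduce a hub bag $H=\{X_S,X_T\}$, which covers $R_0$, and attach each $D_i$ to $H$ by an edge from $H$ to its gluing bag $b_i\supseteq\{X_S,X_T\}$. Because distinct cycles share only $X_S$ and $X_T$, every internal and path vertex keeps its already-connected occurrence-subtree from a single $D_i$, while the occurrences of $X_S$ form the union of $H$ with the connected $X_S$-subtrees of the $D_i$, all meeting at $H$ through the $b_i$; the same holds for $X_T$. Hence the running intersection property holds and the glued tree is a valid GHD $D$. Its width is $\MW(D)=\max\bigl(m_H,\max_i \MW(D_i)\bigr)$, and since the direct edge forces $\pi_{\{X_S,X_T\}}(\Join_{R\in\mR}R)\subseteq R_0$ we have $m_H\le 1$, so $\MW(D)\le \max_i\bigl(2-\frac{1}{1+\lceil (n_i+1)/2\rceil}\bigr)=2-\frac{1}{1+\lceil (n^\ast+1)/2\rceil}$, where $n^\ast=\max_i n_i$. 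As this bound is a constant strictly below $2$ and independent of $c$, taking the maximum over $c\in\mC_2$ yields $m$-width $<2$.

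I expect the main obstacle to be the gluing step: the per-cycle GHDs all share the two vertices $X_S$ and $X_T$, so the delicate point is to choose each sub-GHD so that a common bag $b_i$ contains both and the occurrence-subtrees of $X_S$ and of $X_T$ remain connected after attaching everything to the hub. The per-bag $m$-value bounds are routine once the cycle recovery is in hand, and $m_H\le 1$ is immediate from the presence of the direct edge; it is only the simultaneous running-intersection bookkeeping for the two shared vertices across all paths that needs care.
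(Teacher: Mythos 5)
Your proposal is correct and follows essentially the same route as the paper: form a cycle from each $X_S$--$X_T$ path together with the direct edge, apply the AYZ-recovery GHD construction to each cycle per degree configuration, and glue the resulting sub-GHDs at a new bag $\{X_S,X_T\}$ attached to a bag of each sub-GHD containing both endpoints. Your additional care in verifying the running-intersection property for the two shared vertices and in bounding $m_{\{X_S,X_T\}}\le 1$ via the direct edge only makes explicit what the paper leaves implicit.
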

\begin{proof}
Once again, we will show that for any degree configuration, we can construct a GHD with $\MW < 2$. Suppose there are $k$ paths from $X_S$ to $X_T$ excluding the $X_SX_T$ edge. Each of the $k$ paths, along with edge $X_SX_T$ forms a cycle. For each cycle, we form a GHD for the given degree configuration like we did for the AYZ recovery. Call these GHDs $D_1,D_2,\ldots,D_k$. Since we have an edge $X_SX_T$, each $D_i$ contains at least one bag $B_i$ that contains both $X_S$ and $X_T$. We create a new bag $\{X_S,X_T\}$, and connect it to each $B_i$ for $1 \leq i \leq k$. This gives us a GHD for the full join, and the $m$ value of its bags is no more than it was in the original GHDs, which was shown to be $< 2$ when we recovered AYZ. As a result, when there is a $X_SX_T$ edge, we have GHD with $\MW < 2$ for every degree configuration, and thus the $m$-width of the join is $< 2$.
\end{proof}

\subsubsection{Lemma~\ref{lemma:s-t-2-path}}
\begin{lemma*}
Suppose we have a $1$-series-parallel graph $G$, which does not have a direct edge from $X_S$ to $X_T$, but has a vertex $X_U$ such that there is an edge from $X_S$ to $X_U$ and from $X_U$ to $X_T$ (i.e. a path of length $2$ from $X_S$ to $X_T$). Let $G'$ be the graph obtained by deleting the vertex $X_U$ and edges $X_SX_U$ and $X_UX_T$. Then the $m$-width of a join on $G$ is $< 2$ if and only if the $m$-width of the join on $G'$ is $< 2$.
\end{lemma*}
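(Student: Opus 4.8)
The plan is to prove both implications by comparing the generalized hypertree decompositions (GHDs) of $G$ and of $G'$ configuration by configuration, reusing the GHD recoveries already established for cycles (the AYZ recovery) and for the $X_SX_T$-edge case (Lemma~\ref{lemma:s-t-edge}). Fix a configuration $c \in \mC_2$ of $G$; it restricts to a configuration $c'$ of $G'$, and write $\beta,\gamma$ for the bucketed degree exponents of $X_U$ in $R_{SU}=\{X_S,X_U\}$ and $R_{UT}=\{X_U,X_T\}$, so that $R_{SU}$ has $\IN^{1-\beta}$ distinct $X_U$-values and $R_{UT}$ has $\IN^{1-\gamma}$. Two facts about the length-two path, both read off Linear Program~\ref{lp:mo-bound}, drive everything. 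First, the single bag $\{X_S,X_U,X_T\}$ covers $R_{SU}$ and $R_{UT}$ and satisfies $m_{\{X_S,X_U,X_T\}} \le 1+\min(\beta,\gamma)$: build one endpoint through a size-$\IN$ edge (cost $\le 1$) and add the other across $X_U$ via constraint (iii) at cost $\gamma$ (resp.\ $\beta$). Second, for any $A$, adding $X_U$ from scratch costs only the log-number of distinct $X_U$-values, $m_{A\cup\{X_U\}} \le m_A + \min(1-\beta,1-\gamma)$, by taking $E=A$, $B=\{X_U\}$, $A=\emptyset$, $R=R_{SU}$ or $R_{UT}$ in constraint (iii).

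For the backward direction I argue the contrapositive. If $m$-width$(G')\ge 2$, choose $c'$ with $\MW(\mR'(c'))\ge 2$ and extend it to a configuration $c$ of $G$ in which $X_U$ is a \emph{star} ($\beta=\gamma=1$): $R_{SU}$ is one $X_U$-value joined to all of $X_S$, and likewise $R_{UT}$. The only constraints these two relations add to Linear Program~\ref{lp:mo-bound} over sets $A\subseteq\mA'$ (the attributes of $G'$) are $s_{\{X_S\}\cup E}\le s_E+\log\IN$ and $s_{\{X_T\}\cup E}\le s_E+\log\IN$, which are already implied by the first edges of the paths of $G'$; hence $m^{(G)}_A=m^{(G')}_A$ for all $A\subseteq\mA'$. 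Deleting $X_U$ from every bag of an arbitrary GHD $D$ of $G$ yields a legal GHD $D^-$ of $G'$ (no $G'$-relation mentions $X_U$), so some bag has $m^{(G')}_{\chi(t)\setminus\{X_U\}}\ge 2$; by monotonicity and the equality of $m$-values, $m^{(G)}_{\chi(t)}\ge m^{(G)}_{\chi(t)\setminus\{X_U\}}=m^{(G')}_{\chi(t)\setminus\{X_U\}}\ge 2$. As $D$ was arbitrary, $m$-width$(G)\ge 2$.

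For the forward direction, suppose $m$-width$(G')\le 2-\epsilon_1$ and build, for each $c$, a GHD of $G$ with $\MW<2$, splitting on how useful $X_U$ is. When $\max(\beta,\gamma)\ge 1-\epsilon_1/2$ ($X_U$ has few distinct values, so the path is essentially useless), start from a good GHD $D'$ of $G'$ under $c'$ and add $X_U$ to exactly the bags on the unique tree-path of $D'$ joining a bag containing $X_S$ to one containing $X_T$; this covers $R_{SU},R_{UT}$, preserves the running-intersection property, and by the second fact raises each bag's exponent by at most $1-\max(\beta,\gamma)\le\epsilon_1/2$, so $\MW\le 2-\epsilon_1/2$. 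When instead $\min(\beta,\gamma)$ is small, the first fact makes $\{X_S,X_U,X_T\}$ behave like an $X_SX_T$ edge of $m$-value $\le 1+\min(\beta,\gamma)$, so I reuse the recovery of Lemma~\ref{lemma:s-t-edge}: for each remaining path $P_i$ form the cycle $C_i=P_i\cup\{X_U\text{-path}\}$, decompose it by the AYZ recovery (all edges of $C_i$ have size $\IN$, so its bags have exponent $<2$), and tie the $C_i$ to the central bag $\{X_S,X_U,X_T\}$, which here plays the role of the central $\{X_S,X_T\}$ bag.

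The hard part is exactly this second regime. The tie forces, for each $C_i$, a \emph{single} bag containing all three gadget attributes $X_S,X_U,X_T$; in the AYZ \emph{heavy} sub-case no natural bag contains all three, and pairing the gadget with a heavy path-vertex of $\IN^{1-\mu}$ distinct values gives exponent $1+\min(\beta,\gamma)+(1-\mu)$, which is below $2$ only when $\mu>\min(\beta,\gamma)$. This pins the heavy/light threshold to $\IN^{\min(\beta,\gamma)}$, and the low-degree bag $\{X_S,X_U,X_T\}\cup(\text{half of }P_i)$ then closes only when $\min(\beta,\gamma)<1/(1+\lceil n_i/2\rceil)$, leaving an intermediate band of $X_U$-degrees uncovered by either clean strategy. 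Closing this band is the crux: it requires a single balanced optimization of the threshold (in the spirit of the AYZ balancing) that simultaneously charges the gadget's exponent $\min(\beta,\gamma)$ and the path degrees, together with the hypothesis that $G'$ is good, which bounds the number and lengths of the long paths that must be absorbed. Verifying that this balanced choice always yields $\MW<2$ and that the three regimes cover every configuration is where the real work of the proof lies.
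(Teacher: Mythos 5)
Your backward direction and your first forward regime (few distinct $X_U$-values) match the paper's argument, and your two ``facts'' about the gadget are exactly the right ingredients. The problem is the second forward regime, and you have in effect flagged it yourself: your plan to absorb each remaining path by running the AYZ recovery on the cycle $C_i$ and tying it to the bag $\{X_S,X_U,X_T\}$ genuinely does not close, because the AYZ recovery's bags only stay below $2$ when anchored at an honest size-$\IN$ edge $X_SX_T$ (so that $m_{\{X_S,X_T\}}\le 1$), whereas here the anchor costs $1+\min(\beta,\gamma)$, and balancing the heavy/light threshold against that leaves the intermediate band of $X_U$-degrees you describe uncovered. A proof that ends with ``a single balanced optimization \ldots is where the real work lies'' has not done that work, and the AYZ-style balancing is in fact the wrong tool here.

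The paper closes the gap by abandoning per-path AYZ balancing entirely. In the complementary regime it only extracts one thing from the gadget: $m_{\{X_S,X_T\}}\le m_{\{X_S,X_U,X_T\}}\le 2-\epsilon/2$, i.e.\ a fixed slack of $\epsilon/2$ below $2$ on the pair of endpoints. For a path $X_1=X_S,\ldots,X_n=X_T$ it then sets a \emph{tiny} threshold $\delta=\IN^{\epsilon/(2n+4)}$ (depending on $\epsilon$ and $n$, not on $\beta,\gamma$) and splits on whether any path vertex is $\delta$-heavy. If $X_i$ is the first heavy vertex, the decomposition uses the single bag $\{X_T,X_1,X_2,\ldots,X_i\}$ --- containing \emph{both} endpoints plus the light prefix --- whose $m$-value is at most $2-\epsilon/2+i\cdot\epsilon/(2n+4)<2$, plus bags $\{X_i\}\cup\attr(R_j)$ for $j>i$ with $m\le 2-\epsilon/(2n+4)$ because the heavy $X_i$ has at most $\IN/\delta$ distinct values; if no vertex is heavy, the whole path fits in one bag of $m$-value at most $2-\epsilon/2+(n-2)\epsilon/(2n+4)<2$. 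Because the case split is ``does any path vertex exceed $\delta$'' rather than ``does the AYZ-optimal threshold beat $\min(\beta,\gamma)$,'' every configuration falls into one of the two cases and no intermediate band arises. If you replace your cycle-plus-AYZ step with this prefix-bag construction, the rest of your outline (stitching the per-path GHDs through a central $\{X_S,X_T\}$ bag) goes through.
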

\begin{proof}
We have edge $X_SX_U$ and $X_UX_T$ and no direct edge $X_SX_T$. As before, one direction is easy to prove. Suppose the $m$-width of the join over $G$ is $< 2$. That is, the join on $G$ has a GHD with $\MW < 2$ for all degree configurations. Then for any configuration $c'$ for $G'$, consider the corresponding configuration $c$ for $G$ where $X_U$ has degree $N$ in both its relations and other degrees are the same. Consider the GHD with $\MW < 2$ for this configuration on $G$. We have $s_{\{X_U\}} = 0$ and $s_{A} = s_{A \cup \{X_U\}}$ for all $A \subseteq \mA$. Then the GHD obtained by removing $X_U$ from each bag gives us a GHD for $G'$ with $\MW < 2$. This implies that the $m$-width of the join over $G'$ is also $< 2$.

Now suppose the $m$-width of the join over $G'$ is $< 2$. That is, there is an $\epsilon$ such that for each degree configuration for $G'$, there is a GHD with $\MW \leq 2-\epsilon$. Now consider any degree configuration $c$ for $G$ and the configuration $c'$ for $G'$ obtained by keeping the same degrees for all values (not in $X_U$). Suppose $X_U$ has degree $\geq N^{1 - \frac{\epsilon}{2}}$, then $s_{\{X_U\}} \leq \epsilon/2$. Let $D'$ be a GHD of $G'$ with $\MW < 2 -\epsilon$. Adding $X_U$ to each bag of GHD $D'$ gives us a GHD for $G$ that has $\MW < 2 - \epsilon/2$. 

So now we can assume that the degree of $X_U$ is $\leq N^{1 - \frac{\epsilon}{2}}$ in both its relations. Thus $s_{\{X_S,X_U,X_T\}} \leq 2 - \frac{\epsilon}{2}$. Now like in the previous proof, we will consider every other path from $X_S$ to $X_T$, and construct a GHD with $\MW < 2$ for each path, which has at least one bag containing both $X_S$ and $X_T$. Then we can create a new bag $\{X_S,X_T\}$ and use it to stitch all the GHDs together to get a GHD for $G$ that has $\MW < 2$. We now describe how to construct the $\MW < 2$ GHD for each path.

Consider any other path $X_1,X_2,\ldots,X_n$ where $X_1 = X_S$, $X_n = X_T$. Let our relations in the path be $R_1(X_1,X_2),\ldots,R_{n-1}(X_{n-1}X_n)$. Let $\delta = N^{\epsilon/(2n+4)}$. 
Suppose some $X_i$ has degree $\geq \delta$ in relation $R_i$. Choose the smallest such $i$, (so for all $j < i$, the degree of $X_j$ in $R_j$ is $\leq \delta$). Then we form a GHD with one bag $\{X_n,X_1,X_2,\ldots,X_i\}$, and also a bag $\{X_i\} \cup R_j$ for each $j > i$. The $m$ of the first bag is $\log(N^{2 - \frac{\epsilon}{2}}\delta^i)$ (because $m_{\{X_n,X_1\}} \leq \log(N^{2 - \frac{\epsilon}{2}})$ and each of $X_2,\ldots,X_i$ adds $\log(\delta)$ to it). From the definition of $\delta$, we have $N^{2 - \frac{\epsilon}{2}}\delta^i$ $\leq$ $N^2\delta^{-1}$. The $m$ of other bags is $\log(N^2\delta^{-1})$, since $m_{\{X_j,X_{j+1}\}} \leq \log(N)$ and $X_i$ adds at most $\log(N\delta^{-1})$. Thus the $\MW$ of the path GHD is $\leq 2 - \log(\delta)$. On the other hand, if no $X_i$ has degree $\geq \delta$ in any $R_i$, then a single bag $\{X_1,X_2,\ldots,X_n\}$ has $m$ $\leq \log(N^{2 - \frac{\epsilon}{2}}\delta^{n-2})$, which gives us a GHD for the path with $\MW < 2$. 

Thus for each degree configuration of $G$, we can construct a GHD with $\MW < 2$, which implies that the $m$-width of the join over $G$ is $< 2$.
\end{proof}

\subsection{Comparison to other widths}\label{sec:widths-comparison}
Theorem~\ref{thm:mo-bound} implies that $m$-width is no larger than fractional hypertreewidth (and consequently, no larger than treewidth and generalized hypertreewidth). $m$-width can even be smaller than submodular width (which, ignoring $m$-width, is the tightest known notion of width for general joins), as shown in the Example below. 

\begin{example}\label{example:m-beats-submodular}
Consider a cycle join with $n$ relations, with each relation having size $N$ and all degrees being equal to $1$ in each relation. Then the $m$-width of the join is given by $1$ (because all the $d(A,B,R)$ values in Linear Program~\ref{lp:mo-bound} are $0$ for $A \neq \emptyset$ and $1$ if $A = \emptyset$). On the other hand, the submodular width of this join is $2 - \frac{1}{1 + \lceil \frac{n}{2}\rceil}$.

Similarly, if we consider a clique join with $n$ attributes (i.e. for each pair of attributes, there is a single relation with $N$ tuples), and all degrees are $1$ in each relation, then the $m$-width of the join is $1$, while the submodular width is $n/2$, which can be unboundedly larger.
\end{example}

The above examples rely on the fact that $m$-width takes actual degrees of the relations following degree-uniformization into account, while submodular width uses worst-case degrees. In addition, whenever $m_A$ happens to be a submodular function over $\mA$, $m$-width is guaranteed to be $\leq$ submodular width. Unfortunately, $m_A$ is not always submodular, as shown by the example below:
\begin{example}
Consider a join with relations $R(A,B)$, $S(B,C)$, $T(B)$, $U(C)$. Let $|R| = |S| = N$, $|T| = |U| = \sqrt{N}$. And let the degree of each $A$ value in each relation be $\sqrt{N}$ (so there are $\sqrt{N}$ distinct $A$ values), while the degrees of $B$ and $C$ values are $1$ (so there are $N$ distinct $B$, $C$ values in $R$, $S$ and $\sqrt{N}$ values in $T$, $U$. Now we compute the $m$ values for different sets.

Since there are $N$ $B$, $C$ values in relations $R$, $S$, but only $\sqrt{N}$ $B$, $C$ values in relations $T$ and $U$, we have $m_{\{B\}}$ $= m_{\{C\}}$ $= \log(\sqrt{N})$, and $m_{\{A\}}$ is $\log(\sqrt{N})$ as well. Now for $m_{\{A,B\}}$, we have $s_{\{A,B\}} \leq s_{\{B\}} + d(\{B\}, \{A,B\}, R)$. Since the degree of $B$ is $1$, $d(\{B\}, \{A,B\}, R)$ is $0$, which gives us $m_{\{A,B\}} = \log(\sqrt{N})$ as well. Similarly, $m_{\{A,C\}} = \log(\sqrt{N})$. Finally, we have $m_{\{B,C\}}$ $= m_{\{A,B,C\}} = \log(N)$. Thus we have $m_{\{A\}} + m_{\{A,B,C\}} = \log(N\sqrt{N})$, while $m_{\{A,B\}} + m_{\{A,C\}} = \log(N)$, which implies that $m$ is not submodular. 
\end{example}

The above example gets to the heart of why our degree uniformization is weaker than Marx's uniformization (while being less expensive). Our degrees are uniform within relations, but not necessarily in the final output. For example, each $A$ value has degree $\sqrt{N}$ in the relations, but because only $\sqrt{N}$ out of $N$ $B$ and $C$ values will be in the output, the degree of an $A$ value in the output can range anywhere from $1$ to $\sqrt{N}$. Marx's uniformization ensures that degrees are uniform in certain projections of the output as well.

Even though we started with $\sqrt{N}$ values of $A$ each having degree $\sqrt{N}$, once most of the $B$ and $C$ values are eliminated due to relations $T$, $U$, both the number of matching $A$ values and their degrees are reduced. The number of $A$ values that still have degree $\sqrt{N}$ can now be at most $1$ (since there are $\sqrt{N}$ values of $B$, $C$ left). This change in the number of values is not taken into account in our $s$ values. One naive way to remedy this is to repeatedly perform degree-uniformization after every step of the join, but this can lead to a higher than linear cost.

\subsection{Relating subquadratic solvability to widths}\label{sec:submodular-width-lower-bound}
Each graph that we showed to be subquadratically solvable has $m$-width $< 2$ (and also submodular width $< 2$). Moreover, the $3$-SUM hard 1-series-parallel graph from Theorem~\ref{thm:1-series-parallel-graphs} can be shown to have $m$-width and submodular width equal to $2$. We show this next.

The graph has edges $X_SX_{A_1}$, $X_{A_1}X_{B_1}$, $X_{B_1}X_T$, $X_SX_{A_2}$, $X_{A_2}X_{B_2}$, $X_{B_2}X_T$, $X_SX_{A_3}$, $X_{A_3}X_{B_3}$, $X_{B_3}X_T$. Then we give a edge-dominated submodular function $f$ such that for any GHD, there must exist a bag $\chi(t)$ such that $f(\chi(t)) \geq 2$. Suppose there are $N$ values in $X_S$, $X_T$ with degree $1$ in each relation, and $\sqrt{N}$ values in other attributes with degree $\sqrt{N}$ in each relation. Then the $m$ values for this join happen to be submodular. Specifically, we have $m_{\{X_S\}}$ $= m_{\{X_T\}} = 1$, and for all $i$, we have $m_{\{X_S,A_i\}}$ $m_{\{X_T,B_i\}}$ $= 1$, $m_{\{A_i\}}$ $= m_{\{B_i\}}$ $= 0.5$, $m_{\{A_i,B_i\}} = 1$, $m_{\{X_S,B_i\}}$ $= m_{\{X_T,A_i\}}$ $= m_{\{X_S,A_i,B_i\}}$ $= m_{\{X_T,A_i,B_i\}}$ $= 1.5$, $m_{\{X_S,A_i,B_i,X_T\}}$ $= m_{\{X_S,B_i,X_T\}}$ $= m_{\{X_S,A_i,X_T\}}$ $= m_{\{X_S,X_T\}}$ $= 2$. Moreover, for all $i$, $j \neq i$, if $P_i = \{X_S,A_i,B_i,X_T\}$, $P_j = \{X_S,A_j,B_j,X_T\}$, and $P \subseteq P_i \cup P_j$ then $m_{P}$ $= m_{P \cap P_i}$ $+ m_{P \cap P_j}$ $- m_{P \cap P_i \cap P_j}$. $m_P$ for $P \subseteq P_i \cup P_j \cup P_k$ can be found similarly. 

Now any GHD that puts $X_S$ and $X_T$ together must have width $2$ since $m_{\{X_s,X_T\}} = 2$. But if $X_S$ and $X_T$ never occur together, then the path between their nodes in the GHD must contain each of the paths in the graph ($\{X_SA_i,A_iB_i,B_iX_T\}$ for all $i$). Thus each node in the path must contain at least one node from each path, and at least of them must contain the edge $A_1B_1$. This means that at least one node in the GHD must contain four of the $A_i$s and $B_i$s combined, which again makes the width $2$. This shows that the submodular width of the $3$-SUM hard graph is $2$. 

This may suggest that a join can be solved subquadratically if and only if its submodular width is $< 2$. However, this is not the case. In fact, submodular width is not the a tight lower bound on the runtime exponent. As a counterexample, a triangle join has submodular width equal to $3/2$. But when output size is small, a triangle join can be computed in time $\IN^{4/3}$~\cite{bpwtriangles}. This triangle computation algorithm uses matrix multiplication as a subroutine, and makes use of the fact that the matrix multiplication exponent $\omega$ is $< 3$ (The matrix multiplication exponent $\omega$ is defined as the smallest value such that two dense $N \times N$ matrices can be multiplied in time $O(N^{\omega})$). As another example, the graph with edges $XY_1$, $XY_2$, $Y_1Z_1$, $Y_2Z_1$, $XY_3$, $XY_4$, $Y_3Z_2$, $Y_4Z_2$, $Z_1Z_2$ can also be shown to have submodular width $2$. But we can compute its join in subquadratic time, again by using matrix multiplication in combination with the DARTS algorithm. 

\begin{theorem*}
Consider a graph with edges $XY_1$, $XY_2$, $Y_1Z_1$, $Y_2Z_1$, $XY_3$, $XY_4$, $Y_3Z_2$, $Y_4Z_2$, $Z_1Z_2$. A join over the graph can be solved in subquadratic time when output size is small.
\end{theorem*}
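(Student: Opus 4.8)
The plan is to first apply degree-uniformization (as in DARTS), so that it suffices to bound the cost for a single degree configuration $c\in\mC_2$, in which every relation and attribute-subset has a known degree bucket; the final cost is the maximum over the constantly-many configurations. Throughout I regard the query as a triangle on the three ``hub'' attributes $\{X,Z_1,Z_2\}$: the edge $X$--$Z_1$ is realized by the left $4$-cycle gadget on $\{X,Y_1,Z_1,Y_2\}$, the edge $X$--$Z_2$ by the right $4$-cycle gadget on $\{X,Y_3,Z_2,Y_4\}$, and the edge $Z_1$--$Z_2$ by the relation $R_{Z_1Z_2}$ of size $\le\IN$. Since the full output determines $(x,z_1,z_2)$, the number $T$ of such central triangles satisfies $T\le\OUT$, so once the triangles are listed the remaining witnesses $y_1,\dots,y_4$ can be recovered in time $O(\OUT)$ by a semijoin-style expansion; hence it suffices to list the central triangles subquadratically. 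I case on the degree of the shared hub $X$ against a threshold $\Delta=\IN^{\gamma}$.

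In the heavy case, where $X$ has degree $\ge\Delta$ in one of its relations, there are at most $\IN/\Delta$ distinct values of $X$, and I apply a Heavy transform marginalizing on $X$. With $x$ fixed, each gadget degenerates to a star (the selections $\sigma_{X=x}R_{XY_i}$ become unary relations on the $Y_i$, leaving $Y_1,Y_2$ pointing at $Z_1$ and $Y_3,Y_4$ pointing at $Z_2$), so the residual query is $\alpha$-acyclic; by Proposition~\ref{prop:yannakakis} it is solved in time $O(\IN+\OUT_x)$. Summing over the at most $\IN/\Delta$ values of $X$, the Heavy bound gives total cost $O(\IN^{2-\gamma}+\OUT)$, which is subquadratic.

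In the light case, where all four $X$-degrees are below $\Delta$, I collapse each gadget into a binary relation on its two hub attributes via matrix multiplication---the new ingredient beyond the three standard DARTS transforms. For the left gadget I compute $L=\pi_{X,Z_1}$ of the $4$-cycle as the support of the entrywise product of the two boolean ``two-path'' matrices $M_{XY_1}M_{Y_1Z_1}$ and $M_{XY_2}M_{Y_2Z_1}$; the degree buckets fixed by uniformization bound the inner dimensions of these products, so the relevant rectangular matrix multiplications run in subquadratic time and yield $|L|=O(\IN^{2-\delta})$, and symmetrically for $R=\pi_{X,Z_2}$ of the right gadget. This produces a genuine triangle instance with binary relations $L(X,Z_1)$, $R(X,Z_2)$, $M=R_{Z_1Z_2}$, each of subquadratic size, to which I apply the matrix-multiplication-based output-sensitive triangle-listing algorithm~\cite{bpwtriangles}: on relations of size $m$ it runs in time $\tilde{O}(m^{2\omega/(\omega+1)}+T)$, which for $m=O(\IN^{2-\delta})$ and $\omega<3$ is subquadratic once $\delta$ is chosen large enough (and $T\le\OUT$). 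Expanding witnesses then costs $O(\OUT)$ as noted above.

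The main obstacle is the light case: one must choose the threshold $\gamma$ and carry the degree-bucket bookkeeping so that, uniformly over all degree configurations, both the sizes $|L|,|R|$ of the collapsed gadgets and the cost of the matrix products that produce them are simultaneously bounded below $\IN^{3/2}$ (the break-even point past which the $m^{2\omega/(\omega+1)}$ listing step ceases to be subquadratic for $\omega=2$). Making this balance work requires a finer sub-case analysis---marginalizing instead on an internal gadget attribute $Y_i$, or on $Z_1,Z_2$, in exactly those configurations where a two-path count threatens to exceed $\IN^{3/2}$---and it is here that the interaction between degree-uniformization and rectangular matrix multiplication must be made quantitative. The remaining ingredients (acyclic solvability after marginalization, the bound $T\le\OUT$, and output-sensitive witness expansion) are routine given the DARTS transforms and Proposition~\ref{prop:yannakakis}.
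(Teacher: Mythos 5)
There is a genuine gap, and you have located it yourself but not closed it: the entire difficulty of this theorem lives in the case your proposal defers to ``a finer sub-case analysis,'' namely when both $Z_1$ and $Z_2$ have degree close to $\sqrt{\IN}$. In that configuration your reduction fails quantitatively. Collapsing the left gadget to $L=\pi_{X,Z_1}$ of the $4$-cycle produces a relation whose size is only bounded by the number of $2$-paths from $X$ to $Z_1$, which is $\sum_{(x,y_1)\in R_{XY_1}}\degree(y_1,R_{Y_1Z_1},\{Y_1\})$ and can reach $\IN^{3/2}$ precisely when the $Z_1$-side degrees are $\sqrt{\IN}$ (and symmetrically for $R$). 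Feeding relations of size $m=\IN^{3/2}$ into the output-sensitive triangle-listing bound $m^{2\omega/(\omega+1)}$ gives $\IN^{3\omega/(\omega+1)}$, which is at best $\IN^{2}$ even under the optimistic assumption $\omega=2$, and strictly superquadratic for the true $\omega$. So the light case of your argument is not subquadratic exactly where it needs to be, and no choice of the threshold $\gamma$ on $X$ rescues it, since the heavy case only covers $X$.

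The paper's proof takes a structurally different route through this bottleneck, and the difference is essential. First, it eliminates $X$ early: a Light transform on $\{X,Y_1,Y_2,Y_3,Y_4\}$ (valid because $X$ has low degree) followed by a Split produces a single $4$-ary relation on $Y_1Y_2Y_3Y_4$ of size roughly $\IN$, which couples the two gadgets through the $Y$'s rather than treating them as independent ``edges'' of a hub triangle. Second, in the hard subcase it does \emph{not} project the gadgets down to $(X,Z_1)$ and $(X,Z_2)$; instead it forms a triangle on $Z_1$, $Z_2$, and the \emph{composite} attribute $Y_3Y_4$, which has $\sqrt{\IN}$, $\sqrt{\IN}$, and $\IN$ distinct values respectively, and then randomly partitions the $\IN$ values of $Y_3Y_4$ into $\sqrt{\IN}$ blocks to obtain $\sqrt{\IN}$ balanced triangle instances with relations of size $\IN$ each. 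Each instance is solved in time strictly below $\IN^{3/2}$ by the matrix-multiplication triangle algorithm, for a total strictly below $\IN^{2}$. This asymmetric composite-attribute decomposition is the idea your proposal is missing; without it, the reduction to a symmetric triangle on $\{X,Z_1,Z_2\}$ cannot be pushed below quadratic in the balanced-degree configuration. (Separately, your claim that witness expansion costs $O(\OUT)$ needs care: recovering the $y_i$ for a listed triangle $(x,z_1,z_2)$ by intersecting adjacency lists costs $\min(\degree(x),\degree(z_1))$ per triangle, not the intersection size, though this is a minor issue next to the main one.)
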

\begin{proof}(Sketch)
We briefly describe the transforms used to reduce the above join. First, if $X$ has degree $N^{\epsilon}$ for any $\epsilon > 0$, then a heavy transform reduces the join to an acyclic one, which means we can process the join in time $O(N^{2-\epsilon} + \OUT)$. So assume that $X$ has small degree. 

Then we perform a light transform on $\{X$, $Y_1$, $Y_2$, $Y_3$, $Y_4\}$, which gives a single relation of size $\approx N$ (since $X$ has low degree). Then we use a split transform to remove $X$, and we are left with edges $Y_1Y_2Y_3Y_4$, $Y_1Z_1$, $Y_2Z_1$, $Y_3Z_2$, $Y_4Z_2$, $Z_1Z_2$, all of size $N$. 

Now, if either $Z_1$ or $Z_2$ has degree $> N^{0.5 + \epsilon}$, we do a heavy transform on it, reducing the problem to a triangle join which can be solved in time $N^{3/2}$. In fact, if the degree of $Z_1$ is more than $d \times N^{\epsilon}$, while that of $Z_2$ is $d$ for any $d$, then we can do a heavy transform on $Z_1$, and the number of triangles for $Z_2$ is bounded by $Nd$, which gives us subquadratic time. So now we can assume that the degrees of $Z_1$ and $Z_2$ are almost equal, and less than $\sqrt{N}$. 

But if the degrees of $Z_1$ and $Z_2$ are less than $N^{0.25 - \epsilon}$ each, then a light transform on all attributes gives us an output with size $< N^{2-4\epsilon}$ (as each $Z_1,Z_2$ has at most $N^{1-4\epsilon}$ quadruples of neighbors.) So assume the degrees of $Z_1$, $Z_2$ are almost equal and between $N^{0.25}$ and $N^{0.5}$.

If the degrees  of $Z_1$, $Z_2$ are given by $d < N^{0.5-\epsilon}$, then we perform light transforms on $\{Z_1$,$Y_1$,$Y_2\}$ and $\{Z_2$,$Y_3$,$Y_4\}$, to get two triangles that have $< Nd$ tuples each. Then we perform a Split transform using articulation set $\{Z_1,Y_3,Y_4\}$. We can compute the join on attributes $Z_1$ and all the $Y$'s in time $N^{2-d}$ as there are $N^{1-d}$ $Z_1$ values and $N$ values of the $Y$'s. Thus the size bound on the projection onto $\{Z_1,Y_3,Y_4\}$ is also $N^2/d$. Then we can compute the join for $Z_1$, $Z_2$, $Y_3$, $Y_4$ in time $Nd^2$ since there are $Nd$ values of $Z_2Y_3Y_4$, and each $Z_2$ value has at most $d$ neighbors in $Z_1$. Thus we can solve this join in time $Nd^2 < N^{2-2\epsilon}$.

Now finally, assume that value in $Z_1$, $Z_2$ both have degree $d = N^{0.5}$. Like in the previous case, we perform a split transform on $Z_1$, $Y_3$, $Y_4$ and compute the join of $Z_1$ with all $Y$'s and their projection onto $Z_1Y_3Y_4$ in time $N^2/d = N^{3/2}$. But the other remaining join has relations $Z_1Y_3Y_4$, $Z_2Y_3Y_4$ and $Z_1Z_2$ of sizes $N^{3/2}$, $N^{3/2}$, $N$ respectively. We have $N^{1/2}$ values in $Z_1$, $Z_2$ and $N$ values in $Y_3Y_4$. We can convert $Y_3$, $Y_4$ into a single attribute with $N$ values to get a triangle join. Then we can randomly divide the $N$ values of $Y_3Y_4$ into $\sqrt{N}$ sets, to get $\sqrt{N}$ triangle joins (of three relations of size $N$ each). This is where we use matrix multiplication. Using the multiplication multiplication based algorithm for triangle finding~\cite{bpwtriangles}, we can solve each triangle join in time strictly less than $N^{3/2}$ when $\OUT$ is small. Then we can combine the solutions from the $\sqrt{N}$ triangle joins, and the total time taken is strictly less than $N^{3/2}$ $\times$ $\sqrt{N}$ $=$ $N^2$. The proves that the join can be solved in subquadratic time.
\end{proof}

\section{DBP Bound and Parallel Processing}

\subsection{Intuition behind the DBP bound}\label{sec:DBP-intuition}
The intuition behind the DBP bound is clearer when we use the dual version of Linear program~\ref{lp:degree-packing-primal}.
\begin{lp}\label{lp:degree-packing-dual}(Dual of Linear Program~\ref{lp:degree-packing-primal})
$$\textrm{Maximize} \sum_{(R,A) \in C, A' \subseteq A} w_{R,A'} \log\left(\frac{d_{\pi_{A}(R), A \setminus A'}}{L}\right)
\textrm{ s.t. } \forall a \in \mA : \sum_{(R,A) \in C, A' \subseteq A \mid a \in A'} w_{R,A'} \leq 1$$
\end{lp}

Linear program~\ref{lp:degree-packing-dual} is structurally similar to an edge packing program. In edge packing we assign a non-negative weight to each edge such that the total weight on each attribute is $\leq 1$, while maximizing the sum of all weights (weighted by log of the relation sizes). The linear program for $\DBP(\mR, 2)$ can be thought of as a variant of edge packing with the following differences:
\squishlist
\item Instead of assigning weights to only relations, we assign weights ($w_{R,A'}$) to subrelations $\pi_{A'}R$ as well.
\item We take a minimum over all covers of the join, where covers can consist of relations ($R$) or subrelations ($\pi_{A}(R)$). 
\item The biggest difference is, in edge packing the weight of each edge $\pi_{A'}(R)$ is multiplied by the log of its size. Here, instead of size, we use the maximum number of distinct values in $\pi_{A'}(R)$ that an external value (in $\pi_{A \setminus A'}(R)$) can connect to. This in-degree $d_{\pi_{A}(R), A \setminus A'}$ is naturally bounded by the size $|\pi_{A'}(R)|$ but can be smaller for sparse relations. 
\squishend

\subsection{Proof of Theorem~\ref{thm:gep-agm}}\label{proof:gep-agm}
\begin{theorem*}
For each degree configuration $c \in \mC_L$, the value of $\IN^{\DBP(\mR(c), L)}$ is $\leq$ to the AGM bound on $\mR(c)$. 
\end{theorem*}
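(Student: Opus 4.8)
The plan is to reduce the statement to a purely combinatorial inequality between two linear-program optima. Fix the degree configuration $c$ and write $\mR$ for $\mR(c)$ throughout. By the definition of the AGM bound, $\AGM(\mR) = \IN^{\rho^*}$ where $\rho^*$ is the optimum of the fractional-cover Linear Program~1; and since all logarithms are base $\IN$, the claim $\IN^{\DBP(\mR,L)} \le \AGM(\mR)$ is equivalent to the exponent inequality $\DBP(\mR,L) \le \rho^*$. So the entire task is to show that the degree-based packing optimum is no larger than the AGM exponent.

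First I would exploit the structure $\DBP(\mR,L) = \min_C O_{C,L}$: to get an upper bound on $\DBP$ it suffices to exhibit a single cover $C$ together with a single feasible point $\{v_a\}$ of the primal Linear Program~\ref{lp:degree-packing-primal} for that cover whose objective $\sum_{a \in \mA} v_a$ is at most $\rho^*$. The natural choice is to let $w^*$ be an optimal AGM fractional cover and take $C = \{(R,\attr(R)) : w_R^* > 0\}$, which already covers $\mA$. I would then build the required $\{v_a\}$ by transforming the AGM cover program into this packing program through a sequence of steps, each of which can only decrease the optimum: (i) enlarge the pool of covering sets to include every subrelation $\pi_A(R)$ at cost $\log|\pi_A(R)| \le \log|R|$, which lowers a covering minimum; (ii) replace each projection size by the in-degree $d_{\pi_A(R),A\setminus A'}$, which is bounded above by the corresponding size and hence weakens the constraints $\sum_{a\in A'}v_a \ge \log(d_{\pi_A(R),A\setminus A'}/L)$, enlarging the feasible region; (iii) absorb the $-\log L$ offset, which only further weakens the constraints; and (iv) perform the crucial \emph{cover-to-packing} switch, passing by strong duality to the packing form of Linear Program~\ref{lp:degree-packing-dual}. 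Because each transformation is value-non-increasing, the terminal packing value is at most the initial AGM value $\rho^*$.

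The hard part will be the duality switch and the verification that the constructed solution respects the \emph{entire} hierarchy of subset constraints. The termwise fact that degrees are dominated by sizes gives the correct direction, but it is not enough on its own: AGM is a covering program over whole relations, whereas DBP is a packing program over subrelations carrying the full family of constraints indexed by all $A' \subseteq A$. A naive cover — for instance the singleton cover $\{(R_a,\{a\})\}$ — forces $\sum_a v_a \ge \sum_a \log|\pi_{\{a\}}(R_a)|$, which overshoots $\rho^*$ badly (e.g. $3\log N$ versus $\tfrac32 \log N$ for the triangle). Hence the cover must be tied to the AGM support, and one must check that the additional degrees of freedom introduced by the subrelation weights $w_{R,A'}$ in the packing dual cannot push the optimum above the AGM value. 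I expect this to require, for each cover relation, a telescoping/chain argument over an ordering of its attributes that bounds $\log|R|$ by a sum of conditional log-degrees and matches it against the packing constraints $\sum_{(R,A),\,A'\ni a} w_{R,A'} \le 1$. Carrying this telescoping out simultaneously for all relations while keeping every subset constraint satisfied is the technical heart of the argument, and is presumably the reason the full proof is long.
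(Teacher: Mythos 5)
Your overall strategy --- upper-bound $\DBP(\mR(c),L)=\min_C O_{C,L}$ by exhibiting one cover and one feasible point of Linear Program~\ref{lp:degree-packing-primal} with objective at most $\rho^*$ --- is sound in principle, but your choice of cover, $C=\{(R,\attr(R)): w_R^*>0\}$, is where the argument breaks, and not merely at the level of a missing verification: for that cover the required feasible point can fail to exist. Take the triangle $R(X,Y)\Join S(Y,Z)\Join T(Z,X)$ with $|R|=|S|=|T|=N$ and every value $d$-regular in every relation, with $d=N^{0.6}$; after degree-uniformization all tuples can sit in a single configuration, and $\rho^*=\tfrac32\log N$ with $w_R^*=w_S^*=w_T^*=\tfrac12$, so your cover is $\{(R,\{X,Y\}),(S,\{Y,Z\}),(T,\{Z,X\})\}$. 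Linear Program~\ref{lp:degree-packing-primal} then contains, from $(R,\{X,Y\})$ with $A'=\{X\}$, the constraint $v_X\ge\log(d_{R,\{Y\}}/L)=\log(d/L)$, and symmetrically $v_Y,v_Z\ge\log(d/L)$. Hence \emph{every} feasible point has $\sum_a v_a\ge 3\log(d/L)=1.8\log N-3\log L>\rho^*$, so $O_{C,L}>\rho^*$ and no telescoping argument can rescue this cover. The theorem is saved only because a \emph{different} cover, built from subrelations --- here $\{(R,\{X\}),(S,\{Y\}),(T,\{Z\})\}$, giving $3\log(N/(dL))=1.2\log N-3\log L$ --- achieves a small value. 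The paper's own Case~2 example already signals this: when degrees are large it switches to a cover containing $(R,\{X\})$, i.e.\ the projection $\pi_X(R)$, not the full relation.

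The missing idea, which is the real content of the paper's proof, is that the cover must be chosen \emph{adaptively as a function of the degrees}, and the paper extracts it from an optimization rather than fixing it in advance. Concretely, the paper starts from the AGM \emph{dual} (maximize $\sum_a v_a$ subject to $\sum_{a\in\attr(R)}v_a\le\log|R|$), adds the subrelation constraints $\sum_{a\in A}v_a\le\log S(R,A)$ for an inductively defined ``effective size'' $S(R,A)\le |R|/d_{R,A}$ (each addition only lowers the maximum), and then takes $C$ to be the set of constraints that are \emph{tight at the optimum} of this augmented program; complementary slackness guarantees these tight pairs $(R,A)$ form a cover, and only then does it convert the tight $\le$ constraints into the $\ge$ constraints of the packing program and flip the maximization to a minimization. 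There is no appeal to strong duality between Linear Programs~\ref{lp:degree-packing-primal} and~\ref{lp:degree-packing-dual} --- the whole chain lives in the vertex-weight variables $v_a$. If you want to repair your write-up along your own lines, you must replace your fixed cover by one whose elements are the degree-dependent tight subrelations (or reproduce the paper's effective-size bookkeeping that identifies them); as written, the plan is refuted by the example above.
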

\begin{proof}

For any relation $R \in \mR(c)$, and any $A \subseteq \attr(R)$, $d_{R,A}$ denotes the maximum degree of any value in $A$ in relation $R$. $d_{R, \emptyset}$ simply equals $|R|$. Note that the degree configuration $c$ specifies a degree bucket for each $(R,A)$. Let $d'_{R,A}$ denote the minimum degree of that bucket. The actual maximum degree $d_{R,A}$ may be strictly less than the values in bucket $Ld'_{R,A}$ because some of the neighbors of values in $A$ in the original relation may not be compatible with degree configuration $c$. The actual degree $d_{R,A}$ is also $\leq |\pi_{\attr(R) \setminus A}(R)|$. Now we define an {\em effective size} $S(R,A)$ for any pair $(R,A)$ inductively: 
\squishlist
\item $S(R,\emptyset)  = 1$
\item $S(R,A) = \max_{A' \subsetneq A} S(R,A') \times \frac{d_{\pi_{A}(R), A'}}{L}$
\squishend

If $A \neq \emptyset$, then setting $A' = \emptyset$ in the definition tells us that $S(R,A) \geq S(R,\emptyset) \times \frac{d_{\pi_{A}(R), \emptyset}}{L} = \frac{|\pi_{A}(R)|}{L}$. This tells us that $S(R,A)$ is lower bounded by the actual size of $\pi_{A}(R)$ divided by $L$. 
We can inductively prove an upper bound on $S(R,A)$, by its maximum possible size divided by $L$. Specifically, for $A \neq \emptyset$:
$$S(R,A) \leq \frac{|R|}{d'_{R,A}L}$$

This is easily true for singleton $A$s, since their $S$ is simply equal to $\frac{|\pi_{A}(R)|}{L} \leq \frac{|R|}{d'_{R,A}L}$.
For bigger $A$s, we can prove this as follows: Each $A'$ value in the current configuration has at most $Ld'_{R,A'}$ neighbors in the original $R$. Each $A$ value in the current configuration has at least $d'_{R,A}$ neighbours in the original $R$. Thus, each $A'$ value in the current configuration has at most $\frac{Ld'_{R,A'}}{d'_{R,A}}$ neighbors in $\pi_{A}(R)$ in the current configuration i.e. $d_{\pi_{A}(R), A'} \leq \frac{Ld'_{R,A'}}{d_{R,A}}$. Now in the definition of $S(R,A)$, if $A' = \emptyset$, then we again get 
\begin{align*}
S(R,\emptyset) \times \frac{d_{\pi_{A}(R), \emptyset}}{L} \leq 1 \times \frac{|\pi_{A}(R)|}{L} \leq \frac{|R|}{d'_{R,A}L}
\end{align*}
For $A' \neq \emptyset$, we have
\begin{align*}
S(R,A') \times \frac{d_{\pi_{A}(R), A'}}{L} \leq \frac{|R|}{d'_{R,A'}L} \times \frac{Ld'_{R,A'}}{d_{R,A}L} = \frac{|R|}{d_{R,A}}
\end{align*}

We prove the result by giving a sequence of linear programs, starting from the dual of the fractional cover program (whose optimal objective value equals the log of the AGM bound), and ending with the DBP program (whose optimal objective value equals log of the DBP bound), such that the optimal objective value in each step is less than or equal to that in the previous step.
\begin{enumerate}
\item To start with, we have the dual of the fractional cover linear program, that assigns a non-negative value $v_a$ to each attribute $a$ such that for each relations $R$ in the join, the sum of values of attributes assigned to that relation is less than log of the relation size $|R|$. The objective is to maximize the sum of the $v_a$s. The optimal objective value for this program gives us the AGM bound. 
\item We modify the program to include constraints for subrelations. That is, for each $R$, for each $A \subseteq \attr(R)$, we add a constraint saying that the some of values of attributes in $A$ must be $\leq \log\left(\frac{|R|}{d_{R,A}}\right)$. The program is still feasible (since all $v_a$s equal to zero is a valid solution), but more constrained than the previous one. Since it is a maximization problem, additional constraints can only reduce the optimal objective value.
\item We reduce the right hand sides of the constraints from $\frac{|R|}{d_{R,A}}$ to $S(R,A)$. Since $S(R,A) \leq \frac{|R|}{d_{R,A}}$ for each $R,A$, the resulting program is strictly more constrained, while still being feasible, and hence its optimal objective value is less than or equal to the previous program.  
\item Now we actually consider an optimal solution to the linear program. Some of the constraints must be tight in the optimal solution. Moreover for each attribute $a$, there must exist a tight constraint $(R,A)$ such that $a \in A$, because otherwise we could increase $v_a$ slightly, increasing the objective value, without violating any constaints, which contradicts the optimality of our solution. That is, the set of tight constraints $(R,A)$ form a cover of the attributes. Call the cover $C$. Replace the inequality constraints for $(R,A) \in C$ with equality constraints. The resulting program is more constrained, but the previous optimal solution is feasible for this program as well, so it has the exact same optimal objective value. 
\item Now for each $(R,A) \in C$ and each $A' \subseteq A$, we have an equality constraint $\sum_{a \in A} v_a = \log(S(R,A))$ and and inequality constraint $\sum_{a \in A'} v_a \leq \log(S(R,A'))$. Together, these constraints imply $\sum_{a \in A \setminus A'} v_a \geq \log\left(\frac{S(R,A)}{S(R,A')}\right)$. Thus, for each $(R,A) \in C, A' \subsetneq A$, we keep the equality constraint $\sum_{a \in A} v_a = \log(S(R,A))$, but replace $\sum_{a \in A'} v_a \leq \log(S(R,A'))$ with $\sum_{a \in A\setminus A'} v_a \geq \log\left(\frac{S(R,A)}{S(R,A')}\right)$. This gives an equivalent linear program, which hence has the same optimal objective as before. Note that by replacing $A'$ with $A \setminus A'$, we can rewrite the above constraint as $\sum_{a \in A'} v_a \geq \log\left(\frac{S(R,A)}{S(R,A \setminus A')}\right)$.
\item Now, we keep constraints the same, but try to minimize rather than maximize the objective. The resulting program is still feasible, but may have a smaller objective value. The value won't be zero because now we have $\geq \log\left(\frac{S(R,A)}{S(R,A \setminus A')}\right)$ constraints for the $R,A,A'$s.
\item Earlier, we had only changed constraints for $R,A,A'$ where $(R,A)$ belonged to cover $C$ and $A'$ was a subset of $A$ (turning then from $\leq$ constraints to $\geq$ constraints). Thus, from our original dual program, we may have leftover $\leq$ constraints for $A'$ that are not the subset of any $A$ in the cover. We drop these constraints. The resulting problem is now less constrained than earlier, and since it is a minimization problem, the resulting objective can only be smaller. 
\item For $A' \subsetneq A$, the inductive definition of $S$ tells us that $\frac{S(R,A)}{S(R,A \setminus A')} \geq \frac{d_{\pi_{A}(R),A \setminus A'}}{L}$. We change the RHS of the $R,A,A'$ constraints from $\log\left(\frac{S(R,A)}{S(R,A \setminus A')}\right)$ to $\log\left(\frac{d_{\pi_{A}(R),A \setminus A'}}{L}\right)$. This only loosens the constraints. For each $(R,A) \in C$, we currently have an equality constraint $\sum_{a \in A} v_a = \log(S(R,A))$. We use the known lower bound on $S(R,A)$ to replace the equality constraint by $\sum_{a \in A} v_a \geq \log\left(\frac{|\pi_{A}(R)|}{L}\right)$. This also loosens the constraints. Note that since $d_{\pi_{A}(R),\emptyset} = |\pi_{A}(R)|$, this constraint is actually now a special case of the constraints with $R, A, A'$. Since both the above steps loosen the constraints, this can only decrease the optimal objective value.  
\item The resulting linear program can be seen to be the program used to define DBP, with an extra $\frac{1}{L}$ factor in the RHS of each constraint. As $L$ becomes smaller, the optimal objective value of the program tends to that of the DBP program. Moreover, since DBP itself is a minimum over all covers, while for this program we chose a specific cover, the actual DBP is less than the solution to this linear program, which is less than the AGM bound.
\end{enumerate} 

This proves the result, as required. 
\end{proof}

If $L$ is less than the size of each relation, and $\rho*$ is the fractional cover of the join query (used in the AGM bound), then in fact $\DBP(\mR(c), L) \leq L^{-\rho*}\AGM$. This can be seen by replacing the right hand sides of the constraints of the program in step $1$ by $\frac{|R|}{L}$ instead of $|R|$. This reduces the objective value of the original program, and the remaining steps still go through. 

\subsection{Examples comparing the DBP and AGM bounds}\label{sec:dbp-agm-examples}
\begin{example} ({\bf Comparison between DBP and AGM})

Let $L = 2$ for this example. Consider a triangle join $R(X,Y) \Join S(Y,Z) \Join T(Z,X)$. Let $|R| = |S| = |T| = N$. Let the degree of each value $x$ in $X$, in $R$ and $T$ be $d$. For different values of $d$, we will choose a cover $C$ and find the objective value of the linear program for that cover. Note that the DBP bound is a minimum over all covers, so it is possible that a different cover $C^*$ gives an even smaller linear program objective, but the purpose of this example is to show that the DBP bound can be much tighter than the AGM bound; hence it suffices to show that an `upper bound' on the DBP bound is much tighter than the AGM bound.

{\bf Case 1. $d < \sqrt{N}$:} We choose cover $C = \left\lbrace (R, \left\lbrace X,Y \right\rbrace), (T, \left\lbrace X,Z \right\rbrace) \right\rbrace$. For this cover, the solution to Linear Program~\ref{lp:degree-packing-dual} is $w_{R, \left\lbrace X,Y \right\rbrace} = w_{T, \left\lbrace Z \right\rbrace} = 1$ with all other values set to $0$. The objective value is $\log(N) + \log(d) = \log(Nd)$. Thus, the DBP bound is $\leq Nd$, which tells us that join output size is upper bounded by $Nd$. 

{\bf Case 2. $d > \sqrt{N}$:} Since $d$ is large, the number of distinct $X$ values must be small. To take advantage of this, we consider cover $C' = \left\lbrace (R, \left\lbrace X \right\rbrace), (S, \left\lbrace Y,Z \right\rbrace) \right\rbrace$. Now the linear program solution is trivially $w_{R, \left\lbrace X \right\rbrace} = w_{S, \left\lbrace Y,Z \right\rbrace} = 1$, which gives us the join size bound of $\frac{N^2}{d}$ (since $d_{R, \left\lbrace X \right\rbrace} \leq |\pi_{X}(R)| \leq \frac{N}{d}$). 

In contrast, the AGM bound gives us a loose upper bound of $N^{\frac{3}{2}}$ irrespective of degree $d$. Computing the AGM bound individually over each degree configuration does not help us do better, as the above example can have all tuples in a single degree configuration.
\end{example}

\begin{example}
As suggested by the above example, the DBP bound has a tighter exponent than the AGM bound for almost all possible degrees (namely, degrees higher or lower than $\sqrt{N}$). As a more general example, suppose we have a join consisting of binary relations of size $N$ each, where each value has degree $d$, where the join hypergraph is connected. Then the AGM bound on this join will equal the DBP bound only when $d \approx \sqrt{N}$. If $d < \sqrt{N}^{1 - \epsilon}$, then the DBP bound will be smaller than the AGM bound by a factor of at least $N^{\frac{\epsilon}{2}}$. 

To show this, consider a traversal of the join hypergraph $X_1, X_2,\ldots,X_n$ such that $R(X_1,X_2)$ is a relation in the join, and for all $i > 2$, there is a $j < i$ such that $X_j, X_i$ is a relation (call it $R(i)$) in the join. Then consider cover $C = \{(R, \{X_1,X_2\})\} \cup \{(R(i), \{X_i\}) \mid i > 2\}$. The solution to the linear program is $w_{R',A'} = 1$ for all $(R',A') \in C$ and $0$ otherwise. This gives us a bound of $N \times d^{|C|-1} = Nd^{n-2}$. In contrast, if we have $n$ attributes, the AGM bound must be at least $\sqrt{N}^n$ (which is actually achieved if all attributes have $\sqrt{N}$ values and all relations are full cartesian products). Thus the ratio of the AGM bound to the DBP bound is at least $(\frac{\sqrt{N}}{d})^{n-2} > \sqrt{N}^{\epsilon} = N^{\frac{\epsilon}{2}}$.

On the other hand, $d$ cannot be $> \sqrt{N}^{1 + \epsilon}$ for all values, because if it is (say in relation $R(X,Y)$, then the number of values in attribute $X$ must be $O(\sqrt{N}^{1-\epsilon})$ which is this smaller than the degree of values in $Y$.
\end{example}

\subsection{Proof of Lemma~\ref{lemma:shares-communication-load}}
\begin{lemma*}
The shares algorithm, where each attribute $a$ has share $\IN^{v_a}$, where $v_a$ is from the solution to Linear Program~\ref{lp:degree-packing-primal}, has a load of $O(L)$ per processor with high probability, and a communication cost of $O(\max_{c \in \mC_L}L \cdot \IN^{\DBP(\mR(c), L)})$.
\end{lemma*}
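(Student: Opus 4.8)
The plan is to analyze the Shares algorithm directly, bounding both load and communication for a fixed degree configuration $c$ and then taking the maximum over $c \in \mC_L$. Recall from Appendix~\ref{sec:shares} that under Shares with share $S_a = \IN^{v_a}$ for each attribute $a$, a tuple of relation $R$ is replicated across $\prod_{a \notin \attr(R)} S_a$ processors, and the expected load on any single processor from $R$ is $|R| \prod_{a \in \attr(R)} S_a^{-1}$. The communication cost is $\sum_{R \in \mR(c)} |R| \prod_{a \notin \attr(R)} S_a$, and the expected load is $\sum_{R} |R| \prod_{a \in \attr(R)} S_a^{-1}$, i.e. total communication divided by the total number of processors $\prod_{a \in \mA} S_a = \IN^{\sum_a v_a} = \IN^{\DBP(\mR(c),L)}$.

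\textbf{Bounding the load.}
First I would establish the expected-load bound. The key observation is that the Shares algorithm in Round~1 is run only on the subrelations $\{\pi_A(R) : (R,A) \in C\}$ for the chosen cover $C$, not on all of $\mR(c)$. For each such subrelation $\pi_A(R)$, I would use the constraints of Linear Program~\ref{lp:degree-packing-primal}. Taking $A' = A$ in the constraint gives $\sum_{a \in A} v_a \geq \log(d_{\pi_A(R), \emptyset}/L) = \log(|\pi_A(R)|/L)$, whence $\prod_{a \in A} S_a \geq |\pi_A(R)|/L$, so the expected load contributed by $\pi_A(R)$ is $|\pi_A(R)| \prod_{a \in A} S_a^{-1} \leq L$. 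Summing over the constantly-many pairs in $C$ yields expected load $O(L)$. To upgrade ``expected'' to ``with high probability,'' I would invoke the standard skew-control fact for Shares on degree-uniformized relations: because degree-uniformization caps the maximum degree of each attribute subset within a bucket (Lemma~\ref{lemma:implicit-degrees}), each residual subrelation constraint from the linear program, not merely the $A' = A$ case, controls how many tuples sharing a hash coordinate can collide on one processor. Concretely, the general constraint $\sum_{a \in A'} v_a \geq \log(d_{\pi_A(R), A \setminus A'}/L)$ ensures that fixing the hash values of $A'$ leaves at most $L \cdot \prod_{a \in A \setminus A'} S_a$ matching tuples in expectation, so no single processor is overloaded beyond $O(L)$ with high probability via a Chernoff/union-bound argument over the constantly-many processors relevant to each fixed tuple.

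\textbf{Bounding the communication and assembling.}
For communication, the expected load times the number of processors gives total communication $\sum_R |R|\prod_{a \notin \attr(R)} S_a = (\text{expected load}) \cdot \IN^{\DBP(\mR(c),L)} = O(L \cdot \IN^{\DBP(\mR(c),L)})$ for configuration $c$. Taking the maximum over all $c \in \mC_L$ gives the claimed communication cost $O(\max_{c \in \mC_L} L \cdot \IN^{\DBP(\mR(c),L)})$, and the load bound $O(L)$ holds uniformly. I expect the main obstacle to be the with-high-probability load guarantee rather than the expectation bounds: the expectation calculations follow almost immediately from the $A' = A$ linear-program constraints, but proving that \emph{no} processor substantially exceeds $O(L)$ requires carefully using the full family of subset constraints together with the bounded-degree guarantee of degree-uniformization to rule out heavy collisions, and then a concentration argument. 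I would isolate that concentration step as the technical heart, relying on the fact that the number of processors and relations is constant in $\IN$ so that union bounds cost only polylogarithmic (hidden) factors.
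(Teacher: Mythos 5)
Your communication bound and your expected-load bound coincide with the paper's: both follow from the $A'=A$ instance of the constraints in Linear Program~\ref{lp:degree-packing-primal}, which gives $|\pi_A(R)| \leq L\cdot \IN^{\sum_{a\in A} v_a}$, and then either multiplying by the replication factor $\prod_{a\notin A}\IN^{v_a}$ per tuple (as the paper does) or multiplying expected load by the processor count $\IN^{\DBP(\mR(c),L)}$ (as you do) yields $O(L\cdot\IN^{\DBP(\mR(c),L)})$. That part is fine.

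The gap is in the step you yourself flag as the technical heart: the with-high-probability load bound. Your proposed mechanism --- ``a Chernoff/union-bound argument over the constantly-many processors relevant to each fixed tuple'' --- does not work as stated, for two reasons. First, the indicator variables $I_x$ (tuple $x$ of $\pi_A(R)$ hashes to a fixed processor) are \emph{not} independent: two tuples agreeing on some subset $A'\subseteq A$ of attributes have correlated hashes because the processor coordinate is a product of per-attribute hashes, and this correlation is exactly the skew phenomenon the lemma must control; a vanilla Chernoff bound is therefore inapplicable. Second, the number of processors is $\prod_a \IN^{v_a} = \IN^{\DBP(\mR(c),L)}$, which is polynomial in $\IN$, not constant, so the union bound needs an inverse-polynomial tail per processor, which the expectation bound alone cannot supply. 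The paper closes this gap with a moment-method argument you do not reproduce: it bounds the $m$-th moment of the load by $O(L^m)$ by enumerating, for every $m$-tuple $t_1,\dots,t_m\in\pi_A(R)$, the sets $T_l$ of attributes on which $t_l$ introduces a new value, observing that the joint collision probability is $\prod_{l}\prod_{a\in T_l}\IN^{-v_a}$ while the number of tuple sets realizing a given pattern is at most $\prod_l d_{\pi_A(R),A\setminus T_l}$, and then invoking the full family of constraints $d_{\pi_A(R),A\setminus T_l}\leq L\cdot\IN^{\sum_{a\in T_l}v_a}$ so the product telescopes to $L^m$. You correctly guess that the general subset constraints are the needed ingredient, but without the equality-pattern enumeration and the moment bound your sketch does not yield the claimed concentration.
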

\stitle{Communication:} Consider any $(R, A) \in C$. As per the shares algorithm, every tuple in $\pi_{A}(R)$ will have to be sent to every processor whose hash value in $A$ matches that of the tuple. Thus, the number of processors to which each tuple is sent is given by $\Pi_{a \notin A} \IN^{v_a}$. Thus, total communication for $R, A$ is given by 
$$|\pi_{A}(R)| \times \Pi_{a \notin A} \IN^{v_a} \leq L \cdot \IN^{\sum_{a \in A} v_a} \times \IN^{\sum_{a \notin A} v_a} = L \cdot \IN^{\DBP(\mR(c), L)}$$

Thus, total communication is bounded by $L \cdot \IN^{\DBP(\mR(c), L)}$ (multiplied by some factors that depend on the number of relations and schema sizes, but not on the number of tuples in the relations). 

\stitle{Load:} Now we analyze load per processor. We will show that the $m^{th}$ moment of load on a processor is $O(L^{m})$, which shows that the load is $O(L)$ with high probability, ignoring factors not depending on $\IN$. Consider an $(R, A) \in C$, and a processor with hash value $h_1$ for $A$ and $h_2$ for remaining attributes. Each tuple of $\pi_{A}(R)$ will be sent to this processor if its hash on $A$ equals $h_1$. For any value $x \in \pi_{A}(R)$, let $I_{x}$ be an indicator variable thats true if the hash of $x$ equals $h_1$. Then expected load on the processor from $(R,A)$ is
$$ E\left[\text{Load}\right] = \sum_{x \in \pi_{A}(R)} E[I_{x}] \leq L \cdot \IN^{\sum_{a \in A} v_a} \times \IN^{\sum_{a \in A} -v_a} = L$$

Now let us consider the $m^{th}$ moment of the load. Consider $m$ tuples $t_1, t_2,\ldots, t_m \in \pi_{A}(R)$. Each tuple specifies a value in each attribute in $A$. Some of these values may be equal to each other. For example, for tuples $(x, y)$ and $(x, y')$, the first value is equal. We are going to count the number of $m$-sized sets of tuples with the same pattern of equal values, and the probability of all these tuples being sent to the processor and show that it is $O(L^{m}$. Define $T_l$ for $1 \leq l \leq m$ to be the set of attributes whose values in $t_l$ occur in $t_l$ but not in $t_1, t_2,\ldots, t_{l-1}$. For instance, if $R$ had schema $(X, Y, Z)$ and we had tuples $t_1 = (x_1, y_1, z_1), t_2 = (x_1, y_2, z_2), t_3 = (x_2, y_2, z_1), t_4 = (x_2, y_3, z_3)$, then we would have $T_1 = \left\lbrace X, Y, Z\right\rbrace, T_2 = \left\lbrace Y, Z\right\rbrace, T_3 = \left\lbrace X \right\rbrace, T_4 = \left\lbrace Y, Z\right\rbrace$. $T_1$ is always equal to $A$ by this definition. The probability of all these tuples being hashed to a given processor is $\Pi_{l=1}^{m}(\Pi_{a \in T_l} \IN^{-v_a})$. The number of such tuple sets is upper bounded by $\Pi_{l=1}^{m}d_{\pi_{A}(R),A \setminus T_l}$, since the number of ways of choosing $t_l$ such that its $A \setminus T_l$ part is fixed, is $d_{\pi_{A}(R),A \setminus T_l}$. Thus, the $m^{th}$ moment of the load is:
$$\Pi_{l=1}^{m}d_{\pi_{A}(R),A \setminus T_l} \times \Pi_{l=1}^{m}(\Pi_{a \in T_l} \IN^{-v_a}) \leq L^m\Pi_{l=1}^{m}\IN^{\sum_{a \in T_l} v_a} \times \Pi_{l=1}^{m}(\Pi_{a \in T_l} \IN^{-v_a}) = L^m $$
Thus, the $m^{th}$ moment of load is $O(L^m)$, and so the load per processor is $O(L)$ with high probability, ignoring terms not depending on $\IN$. 

\subsection{Additional Examples for the parallel algorithm}\label{sec:parallel-algo-examples}

\begin{example}
Generalizing the previous example, let the degree of each value be $O(\delta)$, where $\delta < \sqrt{N}$. Let $p$ be the required number of processors at load level $L$.
\squishlist
\item If $L < \delta$, then $p = \DBP(\mR,L) = \frac{N\delta}{L^2}$. 
\item If $\delta \leq L < \frac{N}{\delta}$, then $p = \DBP(\mR,L) = \frac{N}{L}$. 
\item If $\frac{N}{\delta} \leq L < N$, then $p = \DBP(\mR, L) = 1$.
\squishend

Now we invert the above analysis to see how changing the number of processors $p$ changes load $L$. When $p = 1$ we have $L = N\delta^{-1}$. As $p$ increases up to $N\delta^{-1}$, the load is $Np^{-1}$. So as long as $p \leq N\delta^{-1}$, we get optimal parallelism. Beyond that, as $p$ increases to $N\delta$, load decreases as $\sqrt{N\delta p^{-1}}$. Thus, beyond $N\delta^{-1}$, doubling $p$ gives us only a $\sqrt{2}$ reduction in load. Finally, when $p = N\delta$, the load becomes $O(1)$, which is the maximum parallelism level.
\end{example}

\begin{example}
In this example, we demonstrate that our parallel algorithm can even outperform existing optimal sequential algorithms : 

Consider a triangle join $\mR = \left\lbrace R_1(X,Y), R_2(Y,Z), R_3(Z,X) \right\rbrace$. Let $|R_1| = |R_2| = |R_3| = N$. Also suppose $|Z| = N$, $|X| = |Y| = \sqrt{N}$, and the degrees of all $z \in Z$ are $O(1)$ while degrees of values in $X, Y$ are $O(\sqrt{N})$. The DBP bound on the join is $O(N)$. Running worst case optimal algorithms like NPRR and LFTJ take time $N^{\frac{3}{2}}$ to process the join if the attribute order is $X, Y, Z$ or $Y, X, Z$. On the other hand, a simple sequentialized version of our parallel algorithm takes time $O(N)$. By concatenating three such joins, with a different attribute being the {\em sparse} attribute each time, we get a join for which NPRR takes time $N^{\frac{3}{2}}$ for all attribute orders, while our sequentialized parallel algorithm takes time $O(N)$.

Note that using GHD based algorithms (that have runtime $O(\IN^{fhw}+\OUT)$) does not improve the $O(N^{\frac{3}{2}})$ runtime, as all three relations must be in a single bag of the GHD.
\end{example}

\end{document}